\newcommand{\scg}[1]{\texttt{SCG}}
\newtheorem{theorem}{Theorem} 
\newtheorem{lemma}{Lemma}
\newtheorem{proposition}{Proposition}
\newtheorem{definition}{Definition}
\newtheorem{remark}{Remark}
\newtheorem{conjecture}{Conjecture}
\newtheorem*{conjecture*}{Conjecture}
\newtheoremstyle{nonindented}{1ex}{1ex}{}{}{\bfseries}{.}{.5em}{}
\newtheoremstyle{indented}{1ex}{1ex}{\itshape\addtolength{\leftskip}{0.6cm}\addtolength{\rightskip}{0.6cm}}{}{\bfseries}{.}{.5em}{}
\theoremstyle{nonindented}
\theoremstyle{indented}
\theoremstyle{plain}
\newcommand{\bv}{\mathbf}
\newcommand{\set}[1]{\left\{ #1 \right\}}
\renewcommand{\hat}{\widehat}
\renewcommand{\tilde}{\widetilde}
\renewcommand{\bar}{\overline}
\newcommand{\bvec}[1]{\boldsymbol{ #1 }}
\DeclareMathOperator{\poly}{poly}
\def\min{\qopname\relax n{min}}
\def\Pr{\qopname\relax n{\mathbf{Pr}}}
\newcommand{\RR}{\mathbb{R}}
\newcommand{\PP}{\mathbb{P}}
\def\C{\mathcal{C}}
\def\O{\mathcal{O}}
\def\eps{\epsilon}
\newcommand{\INPUT}{\item[\textbf{Input:}]}
\newcommand{\OUTPUT}{\item[\textbf{Output:}]}
\newcommand{\mini}[1]{\mbox{minimize} & {#1} &\\}
\newcommand{\maxi}[1]{\mbox{maximize} & {#1 } & \\}
\newcommand{\st}{\mbox{subject to} }
\newcommand{\con}[1]{&#1 & \\}
\newcommand{\qcon}[2]{&#1, & \mbox{for } #2.  \\}
\newenvironment{lp}{\begin{equation}  \begin{array}{lll}}{\end{array}\end{equation}}
\newenvironment{lp*}{\begin{equation*}  \begin{array}{lll}}{\end{array}\end{equation*}}
\title{Algorithmic Information Design in Multi-Player Games: \\ Possibility and Limits in Singleton Congestion} 
\author { 
Chenghan Zhou %\\ \rule{-0.2in}{0pt} {\small Department of Computer Science} 
    \\  Princeton University  \\  chenghanzh@princeton.edu
    \and Thanh H. Nguyen %\\ {\small Department of Computer Science} 
    \\ {  University of Oregon} \\ thanhhng@cs.uoregon.edu
    \and Haifeng Xu %\\ {\small Department of Computer Science} 
    \\ University of Chicago  \\ haifengxu@uchicago.edu
}
\date{}
\begin{document}
\begin{titlepage}
\clearpage\maketitle
\thispagestyle{empty}
\maketitle

\begin{abstract}
 
 Most algorithmic studies on multi-agent information design so far have focused on the restricted situation  with no inter-agent externalities; a few exceptions investigated truly strategic games such as zero-sum games and second-price auctions but  have all focused only on optimal  public  signaling.  This paper initiates the  algorithmic information design of both \emph{public} and \emph{private} signaling in a fundamental class of games with negative externalities, i.e.,   singleton congestion games, with wide application in today's digital economy, machine scheduling, routing, etc.   
 
 % Besides its fundamentalness and wide applications,  this class of games  is  ideal  for the study of information design since its Nash equilibrium (even the socially-optimal one) can be computed efficiently, which thus disentangles the equilibrium complexity from the complexity of information design. 
 
 For both public and private signaling, we show that the optimal information design can be efficiently computed when the number of resources is a constant. To our knowledge, this is the first set of    efficient \emph{exact} algorithms for information design in succinctly representable many-player games. Our results hinge on novel techniques such as developing certain ``reduced forms'' to compactly characterize equilibria in public signaling  or to represent  players' marginal beliefs in private signaling. When there are many resources, we show computational intractability results. To overcome the issue of multiple equilibria, here we introduce a new notion of  equilibrium-\emph{oblivious} hardness, which rules out any possibility of computing a good signaling scheme, irrespective of the equilibrium selection rule.  

\end{abstract}

\end{titlepage}

\section{Introduction}

 In today's digital economy, there are numerous situations where many players have to compete for limited resources. For instance, on ride-hailing platforms  such as Uber and Lyft, drivers pick an area to go and then compete with other drivers for riding requests at that area; on content platforms such as Youtube and Tiktok, content providers choose a style/theme for their contents and then compete with other providers of the same theme   for Internet traffic interested in that theme; on digital markets such as Amazon and Wayfair, retailers choose a particular product category (e.g., pet supplies or home\&kitchen, etc.) to focus on and compete with other retailers   for sale demands on that category. All these problems  share the following   similarity: (1) many players make a choice (e.g., a  ride-sharing area or a content theme) from multiple options and their payoffs has negative externalities with other players of the same choice due to competition; (2) players have high uncertainty about the payoffs of their choices since the entire system's  demand of riding requests or Internet traffic are unknown to an individual player, whereas the system usually has much fined-grained information about these uncertainties. An important operational task common in all these applications is the following: how can the system 
 %Many real-world strategic settings are rifle with uncertainty and information asymmetry --- i.e.,  some agents have privileged access to certain private information whereas others do not. This raises the fundamental   question of how an agent 
 (the \emph{sender}) strategically reveal her privileged information to influence the decisions of so many  players  (the \emph{receivers}) in order to steer their collective decisions towards a desirable social outcome?  This task, also known as   \emph{information design} or \emph{persuasion}   \cite{kamenica2011bayesian,Bergemann16Bayes,Dughmi2017,candogan2020information}, has attracted extensive recent interests. Besides the aforementioned problems, it has found application in many other domains including auctions \cite{ Emek12,li2019signal,Badanidiyuru2018targeting},  recommender systems \cite{ mansour2020bayesian,Mansour2016bayesian,yan2020warn}, robot planning   \cite{ KerenXKPG20},    traffic congestion control \cite{ Bhaskar2016,das2017reducing},  security \cite{ Xu15,Rabinovich15,Xu18},   and recently  reinforcement learning \cite{ simchowitz2021exploration}.  
 
 Similar  to  mechanism design,  information design is also an optimization question  subject to incentive constraints. Thus unsurprisingly, it has  attracted much algorithmic studies, particularly in the  challenging situation of multi-receiver persuasion \cite{Bergemann2019}. Much algorithmic investigation has been devoted to the special case with \emph{no inter-agent externalities} \cite{ babichenko2016,Dughmi2017algorithmic,Xu20,celli20,castiglioni2020public}, i.e., a receiver's utility is not affected by other receivers' actions. This restriction is certainly not ideal, but does come with a reason. Indeed,    even in such case with no externalities, the optimal information design is already notoriously intractable. Specifically, it was shown to be NP-hard to obtain any constant approximation  if the sender  sends   a public signal to all receivers, a.k.a., \emph{public} signaling \cite{ Dughmi2017algorithmic,Xu20}. This hardness holds even when each receiver only has a binary action from $\{0, 1\}$ and when the sender simply wants to maximize the number of receivers taking action $1$ \cite{ Dughmi2017algorithmic}. While  public signaling may sometimes be desirable due to concerns of unfairness and discrimination or due to communication restrictions \cite{ dughmi2014hardness,xu2020tractability},  there are also   situations  in which  the sender  may send different signals  to different receivers separably, i.e., \emph{private} signaling.  This turns out to be more tractable:  optimal private signaling admits polynomial time algorithm so long as the sender's objective function can be efficiently maximized \cite{ Dughmi2017algorithmic,celli20} but becomes NP-hard otherwise \cite{ babichenko2016}. 
 
 Given the aforementioned hardness   for the no externality situation, it is less surprising that   optimal information design has received much less attention in strategic games \emph{with externalities}. Studies of information design in  games have so far mostly focused on \emph{public} signaling in restricted classes of games such as zero-sum games studied by \citet{ dughmi2014hardness}, non-atomic congestion games with linear latency functions studied by \citet{ bhaskar2016hardness} and second-price auctions studied by \citet{ Emek12}. Unfortunately, these works all exhibit sweeping intractability results. An interesting exception is the very recent work by \citet{ griesbach2022public}, which develops polynomial-time optimal public signaling schemes for multi-commodity non-atomic congestion games in the situation with parallel links, constant number of states and affine latency functions.  On the other hand, the study of \emph{private} signaling in general games have received significantly less attention. As part of their learning algorithm,   \citet{ Mansour2016bayesian} give a linear program (LP) for computing the optimal private scheme. However, the size of their LP is \emph{exponential} in the number of receivers. Overcoming this exponential dependence on the number of agents is an intrinsic challenge in the design of optimal private signaling scheme (see more discussions below).

 This paper initiates a systematic algorithmic investigation of both public and private signaling in \emph{succinctly representable multiplayer games} and focus on a basic  class of strategic games, i.e., the  atomic   singleton congestion games  (\scg{}s).  We adopt the perspective of a social planner who looks to use information design to minimize the total social cost, a widely studied global objective in congestion games.  Congestion games succinctly capture \emph{negative} externalities among agents.  The \scg{} is an important   special case of    congestion games where each player's action is a singleton set of the resources, i.e., a single resource. While the game class of \scg{} may appear ``narrow'' at the first glance, it is a very fundamental and widely-studied class of games  --- it has been the sole subject in many previous papers, including the arguably influential work by \citet{koutsoupias1999worst} which introduced the concept of the price of anarchy as well as its notable follow-up work by \citet{czumaj2007tight}.  Therefore, we believe a thorough study of this basic class of games is an important step towards understanding the algorithmics of optimal information design in truly strategic setups. Moreover, in addition to the wide application of \scg{} mentioned at the beginning of this section, it also finds   application in other domains such as  traffic routing \cite{ griesbach2022public}, job scheduling \cite{ ieong2005fast, Gairing2006ThePOA}, firm competition   \cite{ gairing2007latency} and  communication over networks \cite{ aland2006exact,ackermann2006PureNash}. 
 
 % In \scg{}, agents' action set (i.e., a subset of resources) is allowed to be different in general; the game is called \emph{symmetric} \scg{} if all agents' action sets are the same, which without loss of generality can be the entire set of available resources. 
 
 Besides its wide applicability, there are also multiple  more basic reasons that \scg{}s are an ideal game class for the study of optimal information design for succinctly represented multiplayer games.  First, it is a fundamental class of games, which as we show already exhibits quite non-trivial computational challenges. Therefore, a thorough algorithmic understanding for this elemental  class is essential for the study of information design   in more complex setups. Second,  a celebrated work by \citet{ ieong2005fast} shows that   various types of Nash equilibria (NEs), including both the socially-optimal NE and the potential-function-minimizing NE, can be computed efficiently in \scg{}s with \emph{arbitrary} cost functions, whereas \citet{ fabrikant2004complexity} prove that computing a pure Nash equilibrium becomes PLS-complete in general congestion games.  Therefore, the restriction to \scg{}s   allows us to ``disentangle'' the complexity study of information design from the complexity of computing the equilibrium.  Third, information design in   such class of strategic games with many players gives rise to several fundamental new challenges that has  not been present in previous work and thus requires the introduction of novel concepts and techniques which we now elaborate.
 
 Finally, recent study by \citet{ nachbar2020power} shows that the welfare improvement via information design can never exceed the \emph{price of anarchy} of the underlying base games.  For example, in non-atomic routing with linear latency, the maximum possible social cost \emph{reduction} by using any signaling scheme is at most $1/3=4/3-1$ fraction of the social  optimum, since the price of anarchy of the games is at most $4/3$ \cite{ roughgarden2015intrinsic}.  Therefore, information design would be  more useful in games with large price of anarchy, which is true for the general \scg{} games we consider. The following simple example   illustrates public and private signaling in \scg{}s and  how it may significantly reduce social cost. 
 
 \subsection{An illustrative example. } Consider the example in   Figure \ref{fig:signal-example}. The \scg{} has   $N=2$ agents, $3$ resources and two   state $\theta_1,\theta_2$ of equal probability $0.5$.  The $2$ agents both start from source $s$ and each picks one edge to sink $t$. Cost function $c(0)=c(1) = 0$ whereas $c(2) = 1$; $\epsilon$ is an arbitrarily small positive number.   
 \begin{figure}[h]
 	\vspace{-4mm}
 	\begin{center}
 		\includegraphics[width=0.6\textwidth]{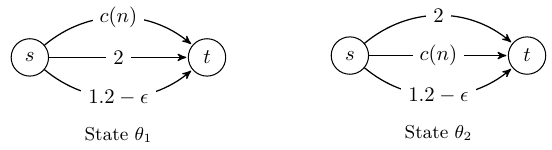}
 		\caption{An   \scg{} with $N=2$ symmetric agents and $R=3$ resources; $c(0) = c(1) =0$ and $c(2) = 1$. }
 		\label{fig:signal-example}
 	\end{center}
 \end{figure}
 
 On one hand, under the transparent policy of \emph{full  information}, both agents can distinguish the state $\theta_1, \theta_2$ and thus will both choose the edge with cost $c(n)$, leading to congestion on the $c(n)$ edge and total social cost $2\times c(2) = 2$. On the other hand, the policy of \emph{no information} will lead to the two agents take the top and middle edge respectively at equilibrium, leading to total social cost $2$ as well.   Simple exercise shows that the optimal \emph{public} signaling scheme --- i.e., when all agents receive the same information--- will mix $0.8$ fraction of $\theta_1$ and $0.2$ fraction of $\theta_2$ for one signal $\sigma_1$, and $0.2$ fraction of $\theta_1$ and $0.8$ fraction of $\theta_2$ for another signal$\sigma_2$. Conditioned on  public signal $\sigma_1$ (similar analysis applies to $\sigma_2$), at equilibrium one agent will take the top path and another take the bottom path, leading to expected total cost $1.2 - \epsilon + [0.8\times c(1) + 0.2 \times 2] = 1.6-\epsilon$.  
 
 Interestingly, a simple \emph{private} signaling scheme, by revealing  no information to one agent (may be picked randomly) but revealing full information to another, can achieve the minimum social cost $1.2 - \epsilon +0 = 1.2- \epsilon$.  This is because   the agent who receives full information will be able to precisely identify the edge with cost $c(n)$ and then take it. This leaves the bottom ($1.2- \epsilon$)-cost edge as a best  resource for the other agent with no information. Optimal private signaling reduces the equilibrium social cost under full information by $2 - 1.2 + \epsilon = 0.8 + \epsilon$, or equivalently by $\frac{0.8 + \epsilon}{1.2 -   \epsilon} \approx 67\%$ fraction of the social optimum. This ratio can be significantly increased by considering the same network above but with latency functions that lead to higher price of anarchy   (e.g., $c(n) = (n/N)^d$ with large $N$), though   the algebraic calculation there will be more involved and less intuitive. The above example also illustrates how much more powerful private signaling may have   than the   more-often-studied public signaling in strategic games  as in the previous literature.

 \subsection{Overview of results, challenges, and our techniques.} 
 
 We adopt the perspective of an informationally advantaged social planner (the sender) looking to design signaling scheme to minimize the social cost in \scg{}s. Our main results are a complete characterization about the algorithmics of the sender's optimization problem. Specifically, when there are constant number of resources, we show that both the optimal public and private signaling can be computed in polynomial time. 
 Notably,  in hindsight, the case of small number of resources does not at all imply that the information design problem may be easy.  As mentioned previously, even in the case with binary receiver actions and no agent externalities, the space is already rifle with hardness results for both public and private signaling as shown by \citet{ babichenko2016,Dughmi2017algorithmic}. We will elaborate
 next why agent externality makes optimal signaling even harder.  Indeed, recent work by \citet{ yang2019information}  studies similar resource competition problems in  singleton congestion games like us. However, their polynomial-time algorithms for public and private information  are devised under very restrictive assumptions that there are only two resources and moreover, one of the resources has to be a trivial one   with constant utility $0$.  Our results strictly generalize the algorithms by \citet{ yang2019information}.  
 %To our knowledge, this is the first time that an exactly optimal public and private signaling scheme can be efficiently computed in a succinctly represented large games with many agents (we will elaborate next why agent externality makes optimal   signaling significantly harder). 

 When there are many resources, our results are negative; we show that even in symmetric \scg{}s, (1) it is NP-hard to design a fully polynomial time approximation scheme (FPTAS) for   optimal public signaling   in a very strong sense which we detail next; (2) the separation oracle for the dual problem of the optimal private signaling scheme is NP-hard even in symmetric \scg{}s with affine cost functions.  
 
 {\bf An intriguing   conjecture.} An open problem left from our results is the complexity of the original problem of optimal private signaling. When there is only a single state of nature, this problem degenerates to computing the optimal correlated equilibrium of a singleton congestion game, which  surprisingly is still  an open question to date as well. This open question is particularly intriguing for the special case of \emph{symmetric} \scg{}s with affine cost functions,   because both the socially-optimal coarse correlated equilibrium (CCE) and the socially-optimal Nash equilibrium of an \scg{}  were shown to admit a polynomial time algorithm by \citet{Castiglioni2020Signaling} and  \citet{ieong2005fast}, respectively. However, our negative result for the separation oracle of the corresponding optimal correlated equilibrium seems to suggest the conjecture of the hardness of optimal CE. If this conjecture was true, such situation of intractable CE yet tractable NE and CCE is a very rare phenomenon --- to the best of our knowledge, there is no   class of games with such complexity property that are known so far.  %  Our result shows that the separation oracle for the dual problem of  optimal correlated equilibrium is NP-hard, which serves as  evidence for our conjecture that the optimal correlated equilibrium (CE) may also be intractable.      
 
 Next we elaborate on the key challenges in proving the above results and our techniques to address them.  The first intrinsic challenge is the issue of \emph{equilibrium selection} during the design of  the optimal public   signaling  scheme. The \scg{} is known to admit multiple  Nash equilibria. Then a key question   is, given any public  signal which induces an expected \scg{} games among agents, which NE we should posit  the agents to play. Previous studies of  public signaling all bypassed this issue by either assuming    no agent externality or considering the situation with a unique Nash equilibrium (e.g., zero-sum games \cite{ dughmi2014hardness} and non-atomic routing \cite{ Bhaskar2016}). Consequently their techniques cannot be easily adapted to many other situations in which equilibria are not unique or do not admit an obvious selection rule.   In this paper, we for the first time directly tackle  this issues  and develop   algorithmics based on the nature of the results, as follows:
 \begin{itemize}
 	\item On one hand, for \emph{positive} result of efficient algorithms, it is necessary to adopt certain equilibrium selection rule since it is challenging (if not impossible) to design an efficient algorithm that works under arbitrary equilibrium selection  (\scg{} may have exponentially many equilibria \cite{ ieong2005fast}).  Therefore,  our algorithm  follows the convention of  the information design   literature \cite{ Bergemann16Bayes,Taneva2015,yang2019information}  and   adopts the optimistic equilibrium, i.e., the socially-optimal NE. % due to two reasons: (1) this NE can always be   identified without any ambiguity; (2) we believe the sender as the social planner may have the power to intervene the NE by suggesting the NE to all agents.\footnote{The conservative choice of the socially-\emph{worst} NE is an  interesting direction for future research, though we believe the optimistic equilibrium selection appears relatively more appealing. Similar optimistic equilibrium selection has also been used in relevant recent works with two resources \citet{ Yang19}. } We remark that adopting an arbitrary NE is not ideal since the   \emph{price of anarchy}  in \scg{}s can be very large \citet{ gairing2006price}.\footnote{This is   another interesting reason for   studying information design in \scg{}s.  Nachbar and Xu \citet{ nachbar2020power} prove that the effectiveness of information design is always upper bounded by the \emph{price of anarchy} in any class of games. Therefore,  information design can only be a very effective ``knob'' to improve sender objectives in games with large PoA.}  
 	\item On the other hand, in order to prove convincing \emph{negative} result of computational hardness,   equilibrium selection issue becomes much trickier to handle. Even we proved the hardness of optimal signaling under certain equilibrium selection, it does not imply  the hardness under a slightly altered equilibrium selection rule. %In fact, it may even not be clear that whether the hardness comes from finding that specific equilibrium or from designing the information structure.  
 	To overcome this challenge,  we introduce a novel notion of \emph{equilibrium-oblivious  inapproximability}. Intuitively, we say that optimal public signaling is \emph{equilibrium-obliviously} $\alpha$-inapproximable if there is no $\alpha$-approximate algorithm \emph{regardless of} which NE one adopts in \emph{any} signaling scheme. Such an inapproximability result  completely rules out any possibility of designing a good public signaling scheme, irrespective of the equilibrium selection.  We believe this novel concept of equilibrium-oblivious intractability may be of independent interest for future works to bypass the equilibrium-selection issues in hardness proofs, and our result illustrates the possibility of achieving this goal. 
 \end{itemize}

 The second key challenge is the issue of exponential dependence of the private signaling scheme on the number of agents. Notably, this is also the central challenge  in the computation of an \emph{optimal correlated equilibrium}, which is well-known to be notoriously challenging and is NP-hard in many classes of succinct games such as general congestion games, facility location games and scheduling games \cite{ papadimitriou2008computing}.   However,   optimal private signaling is arguably even more difficult since it  contains the optimal correlated  equilibrium as a strict special case, when 
 there is only a single state of  nature. 
 Like previous works by \citet{ Mansour2016bayesian,Dughmi2017algorithmic,celli20}, we adopt the solution concept of  Bayes correlated equilibrium due to \citet{ Bergemann2016information}, which characterizes signaling as obedient action recommendation for each receiver.  Due to the exponential blowup in the total number of possible action profiles, this gives rise to a linear optimization problem with exponentially many variables. The agent externality makes it crucial  to characterize each agent's  posterior beliefs about other agents since an agent's utility here depends on other agents' beliefs as well as their   actions. Notably, this complication is absent in previous private signaling setting with no inter-agent externality \cite{ Dughmi2017algorithmic,babichenko2016,celli20}, which makes the design problem there significantly easier.  To overcome this challenge, we employ the idea of ``reduced form''  from mechanism design \cite{ cai2012algorithmic,alaei2012bayesian,border1991implementation} to characterize each agent's marginal belief about other agents' actions. We characterize the feasibility constraints of the reduced forms.   %This characterization crucially hinges    on the structure of \scg{}s, in which the cost at a resource only depends on the number, not the identity, of the agents choosing it. 
 En route,  we also develop an efficient algorithm  to sample the optimal private signaling scheme on the fly, which strictly generalizes a classic sampling technique  by  \citet{ tille1996elimination}  in the statistics literature, and may be of independent interest. To our knowledge, this is the first time that reduced form   is used for information design with many interacting receivers.  In the main body, we illustrates the similarities and differences between the reduced form for information design and that for auction design. We hope this discussion could spur more applications of reduced form to information design.

 \subsection{Additional related work.}
 % Discuss relation with other related works on information design. This part can refer to \citet{ yang2019information} and \citet{ xu2020tractability}. Try to not copy any sentences, but instead just rephrase their sentences.  
 Information design in games has attracted much recent attention in the economics literature \cite{ Alonso14,Bergemann16Bayes,Bergemann2016information,Taneva2015,mathevet2020information}. Most of these works have focused on understanding the properties of the optimal signaling scheme. Specifically, our algorithms leverages the notion of Bayes correlated equilibrium by  \citet{ Bergemann16Bayes}, which characterizes the set of all possible Bayes Nash equilibrium under private signaling.  \citet{ mathevet2020information} highlight  the challenge of information design with non-trivial equilibrium selection rules and  characterize this task as a two-level optimization problem.     
 As mentioned previously, most algorithmic studies so far have bypassed the issue of equilibrium selection by focusing on either games with unique equilibrium or setting with no agent externalities. However, equilibrium selection cannot be bypassed in \scg{}s (which is also a key reason that the price of anarchy and stability is studied extensively in congestion games \cite{ roughgarden2004bad,roughgarden2005selfish,christodoulou2005anarchy}), and thus our work directly tackle this issue   in information design. %Algorithmic information design has attracted much recent attention. We refer the reader to \citet{ dughmi2017survey} for a survey and to \citet{ candogan2020information} for an overview of applications in operations. However, there has been very limited understanding on algorithmic information design  in congestion games, except the hardness result of Bhaskar et al. \citet{ bhaskar2016hardness} for public signaling in non-atomic congestion games with linear latency functions.  Notably,  recent work by Castiglioni et al. \citet{ Castiglioni2020Signaling} studied private signaling in singleton congestion games but with significantly \emph{relaxed} player incentives, under the notion of \emph{ax-ante} private signaling, and thus is   not comparable with our work.   
 %  Bayesian Persuasion   provides a principled framework to study how the disclosure of information can change agents' beliefs and behavior. The original model proposed by Kamenica and Gentzkow~\citet{ kamenica2011bayesian} comprises a sender trying to influence a single receiver's behavior by selectively revealing his information. The sender commits to a signaling scheme before accessing  the information. %This assumption is realistic in ubiquitous settings. e.g., when the platform cares about long-term reputation, or when regulations enforce the sender to commit. 
 % Subsequent works generalize this model to one-sender-many-receivers  setting.  While there is no general efficient algorithm to compute a social welfare maximizing signaling scheme yet, existing work studies computational tractability of specific problems under the framework of Bayesian Persuasion~\citet{ Dughmi2017algorithmic,Castiglioni2020Signaling}.
 
 Congestion games are a fundamental class of succinctly represented multiplayer games and have been studied extensively. % Singleton congestion games is a  subclass of congestion games, which are characterized as a special type of potential games and thus always admit a pure strategy Nash equilibrium~\citet{ rosenthal}. 
 Much previous algorithmic research  has focused on the complexity   of equilibrium in congestion games \cite{ fabrikant2004complexity,meyers2012complexity}.  \citet{ ieong2005fast} introduced the class of singleton congestion games and designed an efficient dynamic programming algorithm to compute the socially-optimal Nash Equilibrium. Closely related to ours is the recent work  by \citet{ yang2019information}; motivated by spatial resource competition, they study public and private signaling in a special case of ours, i.e., two resources and one resource has constant utility, and developed polynomial time algorithms for both optimal public and private signaling. The present work strictly generalizes their algorithmic results. Concurrent work by \citet{griesbach2022public} developed polynomial time algorithms for optimal public signaling in non-atomic routing games with parallel links and linear latency functions. The singleton congestion game  can be viewed as atomic routing with parallel links, but we study both  public and private signaling and allow arbitrary latency functions. Recent work by  \citet{ Castiglioni2020Signaling} studied private signaling in singleton congestion games but with significantly \emph{relaxed} player incentives, under the notion of \emph{ex-ante} private signaling, and thus is   not comparable with our work.     \citet{ marchesi2019leadership} study the situation when one of the agent can commit in singleton congestion games. Congestion games have also been extensively studied in  the price of anarchy/stability literature \cite{ roughgarden2004bad,roughgarden2005selfish,christodoulou2005anarchy}.

 Finally, on the technical side, our algorithm for optimal private signaling relies on the study of the ``reduced forms''.  Recent work by   \citet{ candogan2020reduced} introduced the idea of reduced form for information design but studied a completely different setup with a single receiver and continuous action space. The reduced form there is employed to compactly ``summarize'' the mean of any posterior distribution supported on an continuum space. However, the reduced form in our setting is used to compactly describes one agent's belief about other agents' report, which is more analogous to its use in classic auction design \cite{ cai2012algorithmic,alaei2012bayesian,border1991implementation}.

 \section{Preliminaries.} 
 % \todo{State the general model here, define public signaling, private signaling and ex-ante private signaling}
 
 % \hf{Since we are not able to figure out the spatial resource competition, so let's switch the entire paper to Singleton Congestion, so no Spatial Resource Competition (\texttt{SRC}) at all hereafter. I changed the title now to reflect that we intent to provide a thogrough study about the fundamental model of singleton congestion games.  

 	% Please use "resource" instead of "location", use "congestion" instead of "award" or "payoff". Please refer to the paper "Fast and Compact: A Simple Class Of Congestion Games" for all terminologies. Also we will not consider costs, as opposed to rewards, could you please change through the entire paper about this notation? Specifically, for our NP-hard reductions, we will need to change all functions to be the \textbf{negation} of the function (add a constant if they are negative) in order to reflect that they are costs now. }

 \subsection{ Basics of singleton congestion games (\scg{}s).} \label{sec:model:atomic}
 
 A   singleton congestion game (\texttt{SCG}) consists $N$ \emph{agents} denoted by set $[N] = \{1, \cdots, N \}$ and $R$ \emph{resources} denoted by set $[R]  = \{1, \cdots, R\}$.  Each resource $r$ is associated with a \emph{non-negative}, monotone \emph{non-decreasing} congestion function $c_r: [N] \to \RR^+$.  Each agent $i$ has a set of available actions $A_i \subseteq [R]$ and is allowed to  choose a single resource  from $ A_i$ (thus the  ``singleton'' in its name). Any agent at resource $r$ suffers cost $c_r(n_r)$  where $n_r$ is the total number of agents picking resource $r$.   % Upon choosing $r \in A_i$, the agent induces a cost that depends on the total number of agents choosing $r$.
 Each agent $i \in [N]$ simultaneously chooses an action $a_i \in A_i$.  By convention, we use $\bv{a} = (a_1, \cdots, a_N)$, sometimes denoted as $(a_i, a_{-i})$ to emphasize agent $i$, to denote the profile of actions chosen by all agents. Let $A = A_1 \times \cdots \times A_N$ denote the set of all possible action profiles. Note that $A$ may have size  $\Omega(R^N)$ due to the combinatorial explosion of different agents' choices.  
 
 Let $\bv{n} = (n_1, \cdots, n_R)$ denote the profile of the numbers of agents at all resources, and is referred to as a \emph{configuration}. Any action profile $\bv{a}$ uniquely determines a configuration $\bv{n}$, consisting of the numbers of agents choosing each resource. Specifically, we denote $n_r(\bv{a}) = |\{i: a_i = r\}|$, or simply write $n_r$ or $\bv{n}$ when $\bv{a}$ is clear from the context.  A trivial constraint for any non-negative $\bv{n}$ to be a \emph{feasible configuration} is that $\sum_{r=1}^R n_r = N$. However, not all such  $\bv{n}$'s are feasible since each agent $i$'s action space $A_i$ is restricted.    We use $P(A)$ to denote the set of all feasible configurations $\bv{n} \in P(A)$ given the action space of agents. $P(A)$ is of size at most $\O(N^R)$ since there are at most $\O(N^R)$ ways to partition $N$ into $R$ non-negative integers. When $R$ is a small constant, the size of $P(A)$ may be significantly smaller than that of $A$.   
 
 Given any action profile $\bv{a}$, the cost of agent $i$ is the congestion  associated with resource $a_i$. All agents attempt to minimize their own congestion cost. 
 Thus, we write the utility of agent $i$ to be $u_i(\bv{a}) = -c_{a_i}(n_{a_i})$. An action profile $\bv{a}^*$ is called a pure Nash equilibrium if $a_i^*$ is a best response to $a_{-i}^*$ for each agent, or formally, $u_i(a_i^*, a_{-i}^*) \geq  u_i(a_i, a_{-i}^*)$ for any $i \in [N]$ and $a_i \in A_i$. One may also consider mixed strategies. However, \citet{ rosenthal} shows that a key property of congestion games is that they always admit at least one pure strategy Nash equilibrium. In particular, the action profile minimizing the  \emph{potential function} $\Phi(\bv{a}) = \sum_{r \in [R]} \sum_{j=1}^{n_{r}(\bv{a})} c_{r}(j)$ always forms a pure Nash equilibrium. Therefore, throughout the paper we should always consider pure strategy NE, which is more natural to adopt whenever it exists \cite{ rosenthal}.  
 
 The total \emph{social cost} induced by strategy profile $\bv{a}$, regardless it is a NE or not, is defined to be the sum of all agents' congestion, i.e., 
 \begin{equation}\label{eq:sw-def}
 	\texttt{SC}(\bv{a}) = \sum_{i \in [N]}  c_{a_i}(n_{a_i}(\bv{a})) = \sum_{r \in [R]}  n_{r}(\bv{a}) c_{r}(n_{r}(\bv{a}))
 \end{equation}
 
 One special class of   \scg{}s  is the \emph{symmetric} singleton congestion games, also known as  congestion game with parallel links  (see Figure \ref{fig:signal-example} for an example). That is, the   action sets $A_i$ are the same across agents. In this case, w.l.o.g., we will let $A_i = R, \forall i$ and thus $A = [R]^N$ is the set of all action profiles.  % We will see later that how this restriction may restore computational tractability for some situations when the optimal signaling is NP-hard in general. 

 \subsection{Uncertainties and information design.}
 \label{sec:model:schemes}
 This paper concerns singleton congestion games with uncertainty. Specifically, the congestion function of any resource $r$ depends on a common \emph{random} state of nature $\theta$ drawn from support $\Theta$ with prior distribution $\mu \in \Delta_{\Theta} =\{ \bv{p} \in \RR^{\Theta}_+: \sum_{\theta \in \Theta} p_{\theta} = 1 \} $.  Let $\mu_\theta$ denote the probability of state $\theta$ and  $c^{\theta}_r$ denote the congestion function of resource $r$ at state $\theta$. We adopt the perspective of an information-advantaged social planner, referred to as the \emph{principal}, who has privileged access to the \emph{realized} state $\theta$. After observing $\theta$, the principal would like to strategically reveal this information to agents   in order to influence their actions. The principal is equipped with the natural objective of minimizing the equilibrium social cost, i.e., the sum of the players' costs at equilibrium. 
 
 \subsubsection*{Public signaling Schemes, and the perspective of prior decomposition.}
 We adopt the standard   assumption of information design \cite{ bergemann2016bayes}, and assume that the principal commits to a signaling scheme $\pi$ before $\theta$ is realized and $\pi$ is publicly known to all agents. At a high level, a public signaling scheme generates another random variable, called the \emph{signal} $\sigma$, that is correlated with $\theta$. Therefore, as the signal $\sigma \in \Sigma$ is realized and sent  through a public channel, all agents will infer the same partial information about 
 $\theta$ due to its correlation with $\sigma$.  Formally,  a public signaling scheme can be described by variables $\{\pi(\sigma|\theta)\}_{\sigma \in \Sigma, \theta \in \Theta}$, where $\pi(\sigma|\theta)$ is the probability of sending signal $\sigma$ conditioned on observing the state of nature $\theta$. The probability of sending signal $\sigma$ equals $\bv{Pr}(\sigma) = \sum_{\theta} \mu_{\theta} \pi(\sigma|\theta)$. Upon receiving signal $\sigma$, all agents perform a standard Bayesian update and infer the posterior probability about the state of nature $\theta$ as follows:  $$\bv{Pr}(\theta|\sigma) = \mu_{\theta}\pi(\sigma|\theta)/\bv{Pr}(\sigma).$$ 
 Therefore, each signal $\sigma$ is mathematically equivalent to a posterior distribution  $\bv{Pr}(\cdot|\sigma) \in \Delta_{\Theta}$. Moreover,  $ \sum_{\sigma}\Pr(\sigma) \bv{Pr}(\theta|\sigma) = \mu_{\theta}$. Therefore, the signaling scheme can be viewed as a convex decomposition of the prior  $\mu$ into a distribution over posteriors $\bv{Pr}(\cdot|\sigma) \in \Delta_{\Theta}$ with convex coefficient $\bv{Pr}(\sigma)$. As established by \citet{ aumann1995repeated,blackwell1953equivalent},  any such convex decomposition can be implemented as a signaling scheme as well, establishing their equivalence.    
 
 Since all players receive the same information, they will then play a singleton congestion game based on the expected congestion functions  $c_i^{\sigma}(\bv{a})= \sum_{\theta} \bv{Pr}(\theta|\sigma) c_i^\theta(\bv{a})$. Like  standard game-theoretic analysis in this space, agents are assumed to play a pure NE. 
 
 \subsubsection*{Private signaling schemes.}
 
 Private signaling relaxes public signaling by allowing the sender to send different, and possibly correlated, signals to different players. Specifically, let $\Sigma_i$ denote the set of possible signals to player $i$ and $\Sigma = \Sigma_1 \times \Sigma_2 \times \cdots \times \Sigma_N$ denote the set of all possible signaling profiles. With slightly abuse of notation, a private signaling scheme can be similarly captured by variables $\{\pi(\sigma|\theta)\}_{\theta \in \Theta. \sigma \in \Sigma}$. When signaling profile $\sigma$ is restricted to have the same signal to all agents, this degenerates to public signaling. 
 
 Private signaling leads to a  Bayesian game where each agent holds different information about the state of nature. Specifically, given a publicly known signaling scheme $\pi$, each agent $i$ infers a posterior belief over $\theta$ and $\sigma_{-i}$ after receiving $\sigma_i$:
 $$\bv{Pr}(\theta, \sigma_{-i} | \sigma_i) = \frac{\mu_{\theta}\pi(\sigma_i, \sigma_{-i}|\theta)}{\sum_{\theta' \in \Theta, \sigma_{-i}' \in \Sigma_i} \mu(\theta') \pi(\sigma_i, \sigma_{-i}' | \theta')}, \forall \theta \in \Theta, \sigma_{-i} \in \Sigma_{-i}$$
 Any private signaling scheme induces a Bayesian game among receivers, with beliefs as derived above. The agents then play a Bayesian Nash equilibrium. In both public and private signaling,  multiple  equilibria may exist; we will discuss about which NE to choose in corresponding sections.

 \section{The blessing of small number of resources.} \label{sec:constantR}
 
 In this section, we show that when the number of resources $R$ is a constant (but the number of states $|\Theta|$ and number of agents $N$ can be large), both optimal private and public signaling can be computed in polynomial time.  Despite the restriction to a small number of resources, we believe this is quite encouraging message for information design in succinctly representable large games.  % Indeed, both results hinges on non-trivial algorithmic techniques such as using ``reduced form'' to compactly characterize a signaling scheme.   
 In fact, previous studies about multi-receiver public and private signaling are both rife with hardness results even when there are only two actions for each receiver and even in the absence of  receiver externalities  \cite{ babichenko2016,Dughmi2017algorithmic,Xu20}.  This is because even when $R$ is a constant,   there are still $R^N$ possible action profiles and the asymmetry of agent action sets makes it important to pin down which agent picks which resource.  % Our algorithm   relies crucially on the special structure of \scg{}s and new ideas of using ``reduced form'' to compactly characterize a signaling scheme.     % This allows us to compute a ``reduced'' set of signals in both public and private signaling settings. As shown in later sections, the restriction on small number of resources is necessary, since computing the optimal signaling is NP-hard when the number of resources is large.
 
 \subsection{Optimal public signaling.}
 
 As mentioned previously, one key challenge of studying public signaling scheme is the existence of  multiple Nash equilibria:  given a posterior distribution, which equilibrium should we adopt?  %Note that applying the Bayesian persuasion framework does not eliminate this issue, because the principal is restricted to further affect agents' actions after manipulating their posterior belief. 
 Following the convention of information design \cite{ Bergemann16Bayes,Taneva2015,yang2019information}, our algorithm   adopts the optimistic Nash equilibrium, i.e., the Nash equilibrium that minimize the social cost.  That is,  the principal as a social planner is assumed to have the power to influence agents' behaviors by ``recommending''  an equilibrium under any  public signal \cite{ Bergemann16Bayes}. %\footnote{We remark that whichever equilibrium selection rule one adopts, it has to be uniquely identifiable under \emph{any public signal}.}  %To satisfy this criteria,  other alternatives one could possibly choose are the one maximizing the potential function or the most pessimistic equilibrium with largest social cost. While both are less natural (and also less considered) than the optimistic equilibrium selection,  we nevertheless show in  Appendix \ref{append:constant-public-potential} that the optimal public signaling scheme under potential-maximizing equilibrium can also be computed in polynomial time, whereas leave public signaling under pessimistic equilibrium selection as an interesting open question.
 Our main result here is an efficient algorithm for optimal public signaling  when $R$ is a constant.  
 
 % We will show that the optimal public signaling scheme degenerates to recommendation of configurations under the optimistic assumption. In addition, since the potential function minimizing Nash equilibrium characterize a special property in \texttt{SCG}s, we also provide a method to compute the optimal public signaling scheme when the equilibrium is chosen as the potential function minimizer, but defer details to the appendix \ref{append:constant-public-potential}.

 % \begin{theorem}\label{thm:opt-public}
 	% When $R$ is a constant, Problem \ref{eq:dist-devi-SC} can be solved in polynomial time. 
 	% \end{theorem}
 
 \begin{theorem}
 	The social-cost-minimizing public signaling scheme for \scg{}s, under optimistic equilibrium selection,  can be computed in $\poly(N, 2^{R(R-1)})$ time. 
 \end{theorem}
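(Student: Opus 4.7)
The plan is to formulate optimal public signaling as a linear program (LP) whose size is polynomial in $N$ and $2^{R(R-1)}$, by combining two ingredients: a convex-decomposition view of signaling and a fine partition of the posterior simplex $\Delta_\Theta$ on which the optimistic-NE configuration is constant. A public scheme corresponds to a distribution over posteriors $p$ whose barycenter is the prior $\mu$, and under the optimistic equilibrium rule the sender's expected social cost is $\Ex_p[\mathrm{OptSC}(p)]$, where $\mathrm{OptSC}(p)$ denotes the social cost of the socially-optimal Nash equilibrium at belief $p$. We wish to minimize this expectation over valid decompositions.

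First, I would argue that $\mathrm{OptSC}$ is piecewise linear on $\Delta_\Theta$ with at most $|\mathcal{C}| \le 2^{R(R-1)}$ pieces. Indeed, every NE condition at a given posterior is a linear inequality of the form $\sum_\theta p_\theta[c^\theta_r(n_r) - c^\theta_{r'}(n_{r'}+1)] \le 0$, one for each ordered pair $(r,r')$ of distinct resources with $n_r \ge 1$. The key claim is that, at the optimistic NE, each resource has a unique ``effective'' load so that the relevant arrangement collapses to one hyperplane per ordered pair, yielding at most $2^{R(R-1)}$ cells. Within each cell $C$, the optimistic NE is a fixed configuration $\bv{n}(C) \in P(A)$, which can be computed by the dynamic program of Ieong et al.~\cite{ieong2005fast} in $\poly(N)$ time; I would also record the linear formula $\sum_\theta p_\theta\,\mathrm{SC}^\theta(\bv{n}(C))$ for the cell's expected social cost.

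Second, I would write the LP over variables $\{y^\theta_C\}_{\theta \in \Theta,\, C \in \mathcal{C}}$, where $y^\theta_C$ is the joint probability that the state is $\theta$ and the signal reveals cell $C$. The constraints are: (a) prior consistency $\sum_C y^\theta_C = \mu_\theta$ for each $\theta$; (b) posterior containment, i.e., the induced posterior $p^C_\theta \propto y^\theta_C$ lies in cell $C$, which translates into the $R(R-1)$ linear inequalities on $\{y^\theta_C\}_{\theta}$ defining $C$; and (c) nonnegativity. The objective is $\min \sum_{C,\theta} y^\theta_C\,\mathrm{SC}^\theta(\bv{n}(C))$. With $O(|\Theta| \cdot 2^{R(R-1)})$ variables and a comparable number of constraints, the LP is solvable within the claimed time bound, and the signaling scheme is realized by sending signal $C$ with probability $y^\theta_C/\mu_\theta$ upon observing state $\theta$ and publicly recommending $\bv{n}(C)$ as the equilibrium to be played.

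The main obstacle is establishing the partition bound $|\mathcal{C}| \le 2^{R(R-1)}$. Naively, each of the $O(N^{R-1})$ configurations contributes $O(R^2)$ distinct NE hyperplanes, so the arrangement could a priori have $\Omega(R^2 N^{R-1})$ cells. The expected reduction is that transitions of the optimistic NE as $p$ varies are controlled by only $R(R-1)$ ``marginal'' comparisons---one per ordered pair of resources---thanks to the greedy/local-search structure of Ieong et al.'s dynamic program and the monotonicity of the congestion functions. Formalizing this structural claim, and handling tie-breaking in the optimistic selection on cell boundaries (so that the LP optimum is indeed attained by a scheme whose induced posteriors lie in the interiors of the cells), will be the principal technical burden of the proof.
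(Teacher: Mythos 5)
There is a genuine gap, and it sits exactly where you flagged it: the claim that the optimistic-NE map is piecewise constant on an arrangement of only $R(R-1)$ hyperplanes. The best-response inequalities $\sum_\theta p_\theta\bigl[c^\theta_r(n_r)-c^\theta_{r'}(n_{r'}+1)\bigr]\le 0$ depend on the configuration $\bv{n}$ through the arguments $n_r$ and $n_{r'}+1$, so each of the $O(N^R)$ configurations contributes its \emph{own} hyperplane for the pair $(r,r')$; there is no single ``marginal'' comparison per ordered pair, and no structural property of Ieong et al.'s dynamic program is known (or used in the paper) that collapses them. The paper does not attempt this collapse: its ``signature'' is the pair $(\bv{n},\Lambda)$ with $\Lambda\in\{\le,>\}^{R(R-1)}$, giving $O(N^R\,2^{R(R-1)})$ regions, which is still $\poly(N,2^{R(R-1)})$ for constant $R$. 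Equally important, the paper never needs the optimistic NE to be \emph{constant} on a region. It only needs (i) every posterior, paired with its optimistic NE, to be feasible for at least one region's LP (so the minimum over all LPs is at most OPT), and (ii) every region's LP value to be at least OPT because the region certifies that \emph{some} action profile is an NE there, whose cost upper-bounds the optimistic cost. Your formulation instead hard-codes ``the optimistic NE on cell $C$ is $\bv{n}(C)$,'' which is both the unproven constancy claim and unnecessary for correctness; without the sandwich argument your LP objective could simply be the wrong function on parts of a cell.

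Two smaller points. First, your primal ``one signal per cell'' LP over $y^\theta_C$ is a legitimate alternative to the paper's dual reduction to the weight-adjusted posterior problem (within a region the objective is linear and the region is convex, so merging signals in the same region is without loss), so that part of the route is fine once the regions are defined correctly. Second, with asymmetric action sets $A_i$ a configuration $\bv{n}$ and sign pattern $\Lambda$ do not automatically admit an equilibrium action profile realizing them; the paper decides this with a one-to-many bipartite matching / max-flow feasibility check, a step your proposal would also need (running the dynamic program at a single representative posterior does not certify anything about the rest of the cell).
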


 \begin{proof} \,  The proof is divided into three major steps. 
 	
 	\subsubsection*{ Step 1: reducing optimal signaling to the best posterior problem.} 
 	
 	The first step transforms the optimal public signaling problem into its dual problem, which turns out to be easier to work with. We begin with a few convenient notations. For any congestion function $\C = \{ c_r \}_{r \in [R]} $, we use $NE^*(\C)$ to denote the social-cost-minimizing Nash equilibrium under congestion function $\C$ and $SC^*(\C)$ is the corresponding social cost of $NE^*(\C)$. 
 	% A public signaling scheme is said to be \emph{optimistically} optimal if it minimizes the sender's expected cost assuming the agents play the \emph{social-cost-minimizing} Nash equilibrium under any public signal.   
 	For any posterior   $\bv{p} \in \Delta_{\Theta} $, let $\C(\bv{p}) = \{ \sum_{\theta} p_{\theta} c_r^{\theta}(i) \}_{r \in [R]}$ denoted the expected congestion functions. As discuss in the preliminary section,  a public signaling scheme  decomposes  the prior distribution $\mu$ into a distribution over posterior distributions. Therefore, the optimal public signaling problem in our setting can be formulated as the following   linear program with infinitely many non-negative variables $\{x_{\bv{p}}>0 \}_{\bv{p} \in \Delta_{\Theta}}$: 
 	\begin{equation}\label{lp:public-primal}
 		\min [x_{\bv{p}} \times SC^*(\C(\bv{p})) ], \quad \text{s.t.} \quad   \sum_{\bv{p}} x_{\bv{p}} p_{\theta} = \mu_{\theta}, \, \,  \forall \theta \quad \text{ and } \quad \sum_{\bv{p}} x_{\bv{p}}   = 1  
 	\end{equation}
 	where $SC^*(\C(\bv{p}))$ is the social cost under any public signal with posterior $\bv{p} \in \Delta_{\Theta} $.   Through a duality argument, previous work by  \citet{ bhaskar2016hardness,mixture_selection} shows that solving LP \eqref{lp:public-primal} reduces to the following optimization problem for any weight parameter $\bv{w} \in \RR^{|\Theta|}$: 
 	\begin{equation}\label{eq:dist-devi-SC}
 		\min_{\bv{p} \in \Delta_{\Theta}} \bigg[    SC^*(\C(\bv{p})) + \bv{w} \cdot \bv{p}  \bigg] 
 	\end{equation} 
 	Note that Optimization Problem (OP) \eqref{eq:dist-devi-SC} tries to find the posterior $\bv{p} $  to minimize the social cost $ SC^*(\C(\bv{p}))$, with a slightly adjustment in the objective by a linear function of $\bv{p} $ with coefficient $\bv{w}$. Therefore, we conveniently refer it as the \emph{best weight-adjusted posterior} problem.   
 	
 	Problem \eqref{eq:dist-devi-SC} only provides a different (dual) perspective for   optimal public signaling, but certainly does not magically solve the problem. The key challenge for solving OP \eqref{eq:dist-devi-SC}  is  that $ SC^*(\C(\bv{p}))$ as the social cost at the optimistic equilibrium is not a convex function of $\bv{p} $. We provide an example to show this in Appendix \ref{append:social-cost-non-convex-example}. In fact, even computing the optimistic equilibrium  for any given posterior $\bv{p} $ is already quite non-trivial (see, e.g., \cite{ ieong2005fast} for a complex dynamic programming approach with $O(N^6R^5)$ running time), let alone optimizing the social cost at equilibrium by picking the best posterior $\bv{p} $. 
 	
 	\subsubsection*{ Step 2: a simple $O(R^N)$ time algorithm for solving  Problem \eqref{eq:dist-devi-SC}.} 
 	The main challenge of the proof is to solve the non-convex OP \eqref{eq:dist-devi-SC}. In this step, we describe a simple approach to solving OP \eqref{eq:dist-devi-SC}, which takes $O(R^N)$ time. We will then show how to accelerate the algorithm later. The key idea is to dividing  OP \eqref{eq:dist-devi-SC} into $O(R^N)$ many linear programs. The key observation here is that, though  OP \eqref{eq:dist-devi-SC} is non-convex, if we  constrain it to only the posterior distribution $\bv{p}$ that induce some action profile $\bv{a}$ as a pure NE, then we will obtain a linear program. Specifically, for any fixed action profile $\bv{a}$, the following linear program computes the best weight-adjust posterior, among all posteriors under which $\bv{a}$ is a Nash equilibrium.
 	\begin{lp}\label{lp:public-naive-lp}
 		\mini{ \sum_{r \in R} \sum_{\theta \in \Theta} p_{\theta} c_r^{\theta} (n_r) + \sum_{\theta} w_{\theta} p_{\theta} }
 		\st 
 		\qcon{\sum_{\theta} p_{\theta} c_{a_i}^{\theta} (n_{a_i}) \leq \sum_{\theta} p_{\theta} c_{r'}^{\theta} (n_{r'}+1)}{i \in [N], r' \in A_i}
 		\con{\sum_{\theta} p_{\theta} =1} 
 		\qcon{p_{\theta} \geq 0}{\theta \in \Theta}
 	\end{lp}
 	where configuration $\bv{n} = \bv{n}(\bv{a})$ is fixed due to the fixed action profile $\bv{a}$.  The first constraint guarantees that any agent $i$'s action $a_i$ is indeed a best response. This constraint also highlights the difficulty in handling asymmetric agent action sets $A_i$, leading to different incentive constraints for different agents. 
 	
 	Let $\bv{p}^* $  be the optimal solution to  LP \eqref{lp:public-naive-lp}.  Notably, LP \eqref{lp:public-naive-lp} only guarantees that the given $\bv{a}$ is a pure Nash equilibrium, but not necessarily the $NE^*(\C(\bv{p}^*))$, i.e., the optimistic social-cost-minimizing NE under posterior distribution $\bv{p}^*$. Nevertheless, the following lemma shows that we can obtain an optimal solution to OP \eqref{eq:dist-devi-SC}  by solving  LP \eqref{lp:public-naive-lp} separably for each action profile $\bv{a}$.  
 	
 	\begin{lemma}
 		An optimal solution to  OP \eqref{eq:dist-devi-SC}  can be obtained by solving  LP \eqref{lp:public-naive-lp} separably for each action profile $\bv{a} \in A$ and then picking the LP with the minimum objective. 
 	\end{lemma}
 	
 	\begin{proof} \, 
 		Let $LP(\bv{a})$ denote objective of LP \eqref{lp:public-naive-lp} w.r.t. a given $\bv{a}$. Let $\bv{a}^* = \min_{\bv{a} \in A} LP(\bv{a})$  be action profile with minimum objective among all these LPs (set any infeasible LP's objective  to be $\infty$), and $\bv{p}^*$ is the optimal solution to $LP(\bv{a}^*)$.   We argue that  $\bv{p}^*$ is an optimal solution to Problem \eqref{eq:dist-devi-SC} and moreover, $\bv{a}^*$ is the corresponding social-cost-minimizing Nash equilibrium under $\bv{p}^*$.
 		
 		The argument follows two observations.  Let $OPT$ denote the optimal objective of Problem \eqref{eq:dist-devi-SC}.    First, the social cost of equilibrium  $\bv{a}^*$  under posterior $\bv{p}^*$ is at most $OPT$. Specifically, let $\tilde{\bv{p}}$ be the optimal solution to Problem \eqref{eq:dist-devi-SC} and $\tilde{\bv{a}}$ be the optimistic NE under $\tilde{\bv{p}}$. By definition, $OPT$ is the social cost of equilibrium $\tilde{\bv{a}}$ under $\tilde{\bv{p}}$, plus $\bv{w} \cdot \tilde{\bv{p}}$. Instantiating LP  \eqref{lp:public-naive-lp} for $\bv{a} = \tilde{\bv{a}}$, we know $\tilde{\bv{p}}$ must be a feasible solution to LP \eqref{lp:public-naive-lp} since under $\tilde{\bv{p}}$, $\tilde{\bv{a}}$ is indeed a NE by definition and thus $\tilde{\bv{p}}$ satisfies the first constraint. Therefore, the objective of  equilibrium $\bv{a}^*$ under posterior $\bv{p}^*$ must be at most $OPT$.
 		
 		Second, the social cost of equilibrium $\bv{a}^*$  under posterior $\bv{p}^*$ is at least $OPT$. This is because $\bv{p}^*$ is a feasible solution to Problem \eqref{eq:dist-devi-SC} under which $ SC^*(\C(\bv{p})) + \bv{w} \cdot \bv{p}$ is at most the social cost of $\bv{a}^*$ plus $\bv{w} \cdot \bv{p}^*$. To conclude, they must be equal and thus imply  the lemma claims. 
 	\end{proof}

 	\subsubsection*{Step 3: accelerated $\poly(N, 2^{R(R-1)})$-time algorithm via equilibrium categorizations.}  
 	The crux of the proof is the third step which accelerates the simple algorithm in Step 2. The key idea underlying the algorithm in Step 2 is to ``divide'' the feasible region of OP \eqref{eq:dist-devi-SC},  i.e., $\Delta_\Theta$,   into a collection of many smaller regions; each region corresponds to an action profile $\bv{a}$, which is guaranteed to be an equilibrium for all $\bv{p}$ within its region. The optimization problem restricted to that region is precisely the LP \eqref{lp:public-naive-lp}. Unfortunately, there are too many action  profiles, which lead to too many LPs to solve and thus the inefficiency of the algorithm in Step 2.

 	Our key idea to overcome the above inefficiency is to divide the feasible region of OP \eqref{eq:dist-devi-SC} according to some different criteria, which: (1) can still lead to a tractable optimization program for each region (hopefully, an LP as well); (2) has much less number of regions and thus less optimization programs to solve.  % classify the action profiles into different \emph{categories} so that we will only need one linear program for \emph{each category}, as opposed to previous straightforward approach of using one linear program for each action profile $\bv{a}$. 
 	%How to properly categorize the action profiles is certainly the key challenge here and requires us to identify some carefully defined \emph{characteristic properties} of the equilibrium, which can be used for  categorizing  action profiles (i.e.,  all possible pure NEs).  
 	How to come up with the proper characteristics to  divide the feasible region of OP \eqref{eq:dist-devi-SC} is   the major challenge here.

 	The first thought one might have is to divide the feasible region  $\Delta_\Theta$ of posteriors based on the configuration $\bv{n}$ which summarizes the number of agents at each resource. Due to asymmetric agent action sets, it turns out that $\bv{n}$ does not contain sufficient information to describe the incentive constraints at equilibrium, like the first constraint of LP \eqref{lp:public-naive-lp}.  Our key idea is to divide the set of all posteriors into around $O(N^R 2^{R(R-1)})$ regions; each region  is uniquely determined by a configuration   $\bv{n}$ and, additionally,  $R(R-1)$ sign labels from $\{ \leq, > \} $ for any ordered pair of resource $(r,r')$. More concretely, any equilibrium leads to a configuration $\bv{n}$, which is a partition of the number $N$ into $R$ non-negative integers. Moreover,  another useful characteristics of any equilibrium is the ``deviation tendency'' from any resource $r$ to $r'$ --- i.e., whether an agent at resource $r$  has   incentives to deviate to any other $r'\not = r$ (regardless whether $r'$ is a feasible action or not). This can be checked by examining whether or not  \begin{equation}\label{eq:label-sign}
 		\sum_{\theta} p_{\theta} c_r^{\theta} (n_r) \leq \sum_{\theta} p_{\theta} c_{r'}^{\theta} (n_{r'}+1).
 	\end{equation}  
 	Therefore, we can associate each ordered pair $(r,r')$ with either a label ``$\leq$'' or ``$>$'' depends on the above inequality holds or not.   The characteristic properties we use to classify action profiles is precisely $(\bv{n}, \Lambda)$, in which $\Lambda \in \{ \geq , < \}^{R \times (R-1)}$ contains   the sign labels for all ordered pairs. We also call   $(\bv{n}, \Lambda)$ a \emph{signature} of  any equilibrium. % An equilibrium may   contain other information as well, e.g., which agent picks which resource. It just   turns out that this   choice of $(\bv{n}, \Lambda)$ suffices for our algorithm and has a relatively controllable set with 
 	Note that there are at most  $O(N^R 2^{R(R-1)})$ possible signature values.

 	% significantly reduce the total number of action profiles that one has to examine to be polynomial in $N$ by carefully classifying the $O(R^N)$ action profiles into a much smaller number of categories, specifically, $O(N^R 2^{R(R-1)})$ many categories. We will thus only need to solve  $O(N^R 2^{R(R-1)})$  many LPs, which can be down in polynomial time when $R$ is a constant.    
 	
 	There are several reasons that the signature $(\bv{n}, \Lambda)$ turns out to be a proper characteristics for categorizing the action profiles.  First, given any $\bv{n}$ and posterior $\bv{p}$, in must induce some characteristics $(\bv{n}, \Lambda)$ since the label of any $(r,r')$ resource pair can be directly checked by Equation \eqref{eq:label-sign}.   Therefore, $(\bv{n}, \Lambda)$ can be used as categorizing all posterior $\bv{p} \in \Delta_{\Theta}$ into different categories, without missing any of them. For convenience, we shall say any  $\bv{p}$ is categorized into some $(\bv{n}, \Lambda)$.

 	Second, for any signature $(\bv{n}, \Lambda)$,  we  can directly determine whether there exists a pure NE $\bv{a}$  that is ``consistent'' with $(\bv{n}, \Lambda)$. Moreover, this $\bv{a}$ will be a pure NE for \emph{any} posterior distribution categorized into $(\bv{n}, \Lambda)$ (since   $\Lambda$ has already contained all the incentive restrictions for agents). Formally, we  introduce  a useful notion of whether an equilibrium action profile $\bv{a}$ \emph{obeys} any given $(\bv{n}, \Lambda)$ or not. Suppose  $\bv{a}$   assigns agent $i$ to resource $r$, then by the definition of equilibrium, we know that agent $i$ does not have any incentive to deviate to any $r' \in A_i$. This means the $\Lambda$ induced by $\bv{a}$ must satisfy that $(r,r')$ has label $\leq$ for all $r' \in A_i$. If the $\Lambda$ satisfies the above requirements for all the assignment in the equilibrium profile $\bv{a}$ and moreover $\bv{n}(\bv{a}) = \bv{n}$,  we say equilibrium profile $\bv{a}$  obeys equilibrium signature $(\bv{n}, \Lambda)$. Note that the above notion only applies to equilibrium action profile $\bv{a}$ and has no meaning for non-equilibrium profile where deviation incentives are not present. % Second,   each equilibrium $\bv{a}$ will obey at least one, but maybe multiple,  $(\bv{n}, \Lambda)$. The second point is crucial since it implies if we categorize equilibrium based on $(\bv{n}, \Lambda)$, we will not miss any equilibrium action profile. 

 	While any equilibrium $\bv{a}$ obeys at least one signature $(\bv{n}, \Lambda)$, the reverse is \emph{not} true. That is, there exists  $(\bv{n}, \Lambda) \in P(A) \times \{ \geq , < \}^{R \times (R-1)}$ that does not correspond to the signature of any equilibrium $\bv{a}$. For instance, in a game with $R =2$ resources, if  both resource pair $(1,2)$ and $(2,1)$  has label ``$>$'', then this cannot be the signature of any equilibrium as agents at resource $1$ and $2$ both want to deviate.     
 	
 	\begin{figure}{}
 		\begin{center}
 			\begin{subfigure}[t]{0.45\textwidth}
 				\centering
 				\includegraphics[width=\linewidth]{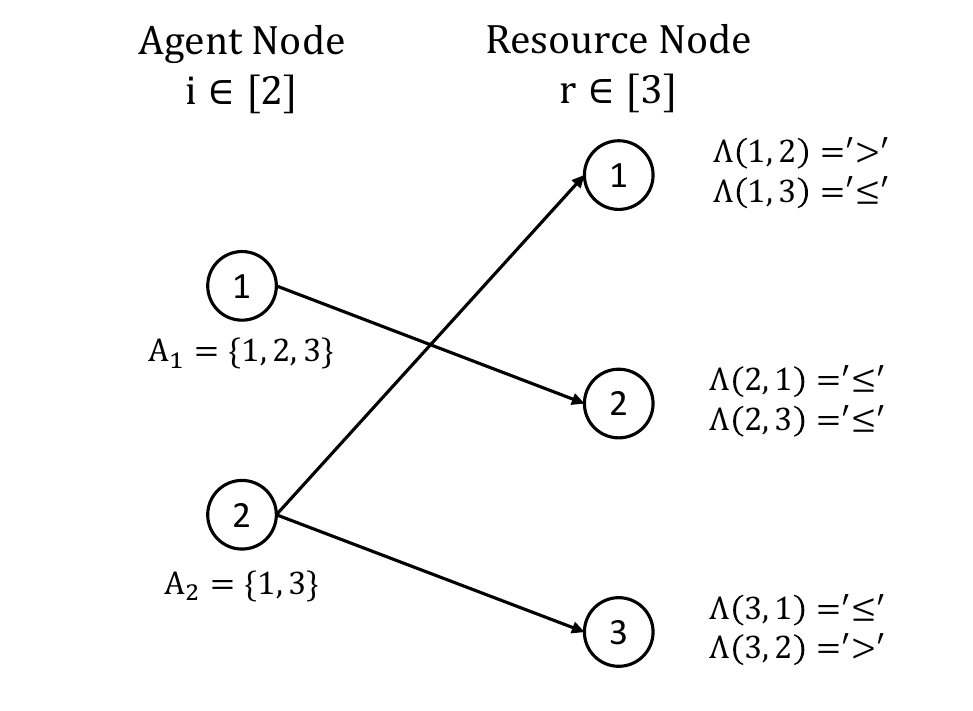}
 				\caption{An example with sign label $\Lambda$ depicted for each resource node,  action set for each agent, and corresponding equilibrium-allowable edges. For instance,  $1 \rightarrow 1$ is \emph{not} equilibrium-allowable because $\Lambda(1, 2)='>'$; $1 \rightarrow 2$ is equilibrium allowable because $\Lambda(2, 1)='\leq'$ and $\Lambda(2, 3)='\leq'$. %; $2 \rightarrow 2$ is not equilibrium-allowable since $2 \notin A_2$.
 				}
 				\label{fig:equilibrium-allowable-edge}
 			\end{subfigure}%
 			~~~~~~
 			\begin{subfigure}[t]{.45\textwidth}
 				\centering
 				\includegraphics[width=\linewidth]{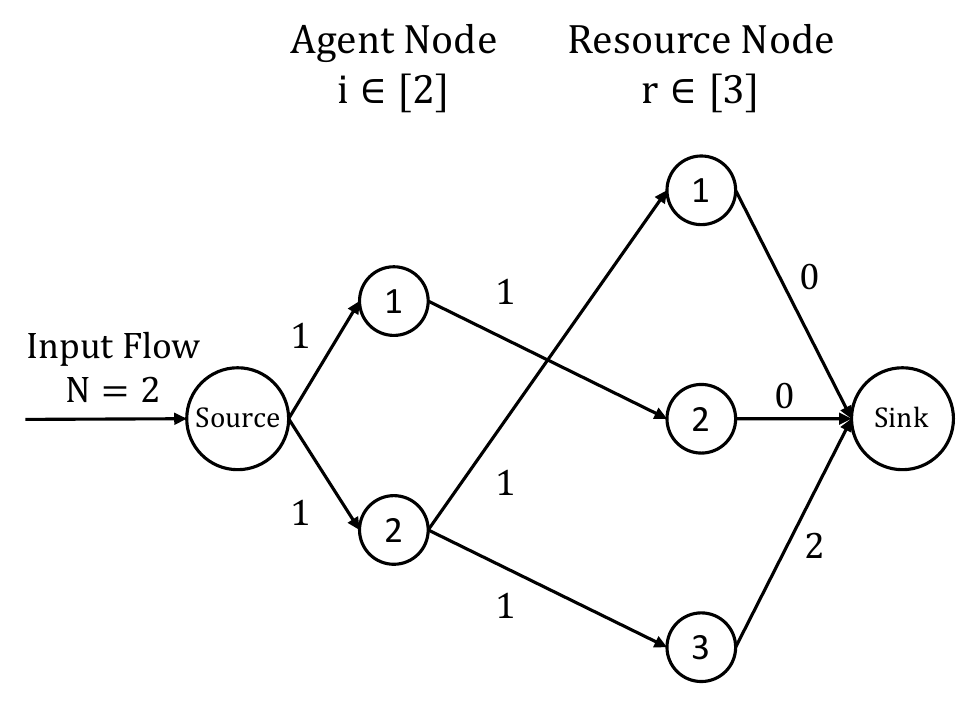}
 				\caption{The conversion of the equilibrium-allowable grapn in Figure \ref{fig:equilibrium-allowable-edge} into a flow problem with capacity depicted around each edge ($\bv{n}$ here is chosen as $(0,0,2)$). %graph construction on bipartite graph $G$. Input flow equals $N$; $capacity = 1$ for all edges from source node to agent nodes and all equilibrium allowable edges; $capacity = n_r$ for all edges from resource node $r$ to sink node.
 				}
 				\label{fig:equilibrium-allowable-max-flow}
 			\end{subfigure}
 			\caption{An example graph with  only  equilibrium-allowable edges (Left) and the conversion of the graph into a flow problem for identifying an equilibrium (Right). }
 			\label{fig:equilibrium-allowable-illust}
 		\end{center}
 	\end{figure}

 	We now provide an  efficient algorithm to determine, for any given signature $(\bv{n}, \Lambda)$,  whether it is possible to have an equilibrium action profile $\bv{a}$ that \emph{obeys} the $(\bv{n}, \Lambda)$. Given any $(\bv{n}, \Lambda)$, an equilibrium action profile $\bv{a}$ can possibly assign agent $i$ to resource $r$ only if $(r,r')$ has label $\leq$ in $\Lambda$ for all $r' \in A_i$, which means $i$ has no incentive to deviate to any other resource $r'$ in its feasible action set $A_i$.  In this situation, we say assignment $i \to r$ is \emph{equilibrium-allowable} in signature $(\bv{n}, \Lambda)$. For any $(\bv{n}, \Lambda)$, we can identify all the equilibrium-allowable assignments as a bipartite graph $G = ([N],[R], E)$ where $e = (i \to r)$ is an edge if and only if assignment $i \to r$ is equilibrium-allowable. Figure \ref{fig:equilibrium-allowable-edge} illustrates this construction with a concrete example.  It is easy to verify that in  Figure \ref{fig:equilibrium-allowable-edge}      resource $1$ is equilibrium allowable for agent $1$ while   resource $1$ and $3$ are equilibrium allowable for agent $2$.   To find an equilibrium that obeys the given $(\bv{n}, \Lambda)$, we only need to match all agents in $[N]$ to resource set $[R]$ using only equilibrium-allowable nodes, with an additional constraint that resource $r$ is mapped to exactly $n_r$ agents. This is a   one-to-many bipartite matching problem with fixed demand on the right hand side. Whether such a matching exists or not can be solved by a standard max-flow formulation (for example, there is no matching for $(\bv{n}, \Lambda)$ in Figure \ref{fig:equilibrium-allowable-illust} since the max flow in Figure \ref{fig:equilibrium-allowable-max-flow} equals 1, which is less than $N=2$). Notably, any feasible matching will be a pure NE that obeys the given $(\bv{n}, \Lambda)$ since the equilibrium constraint is directly imposed by  $\Lambda$.  
 	
 	Finally, we observe that the social cost of any NE that obeys any given $(\bv{n}, \Lambda)$ is $\sum_{r \in R} \sum_{\theta \in \Theta} p_{\theta} c_r^{\theta} (n_r) $. Now consider any $(\bv{n}, \Lambda)$ such that there exists a pure NE obeying it (efficiently decidable via the aforementioned matching algorithm). We claim that the following linear program computes the best weight-adjusted posterior, among \emph{all posteriors that induce} $(\bv{n}, \Lambda)$.  
 	
 	\begin{lp}\label{lp:public-efficient-lp}
 		\mini{ \sum_{r \in R} \sum_{\theta \in \Theta} p_{\theta} c_r^{\theta} (n_r) + \sum_{\theta} w_{\theta} p_{\theta} }
 		\st 
 		\qcon{\sum_{\theta} p_{\theta} c_{r}^{\theta} (n_{r}) \leq \sum_{\theta} p_{\theta} c_{r'}^{\theta} (n_{r'}+1)}{  (r,r') \text{ with } \Lambda(r,r') = ``\leq'' }
 		\qcon{\sum_{\theta} p_{\theta} c_{r}^{\theta} (n_{r}) \geq \sum_{\theta} p_{\theta} c_{r'}^{\theta} (n_{r'}+1)}{  (r,r') \text{ with } \Lambda(r,r') = ``>'' }
 		\con{\sum_{\theta} p_{\theta} =1} 
 		\qcon{p_{\theta} \geq 0}{\theta \in \Theta}
 	\end{lp}
 	Note that the first two constraints guarantees that $\Lambda$ is satisfied for any feasible $\bv{p}$. The only caveat here is that when $\Lambda(r,r') = ``>''$, we used the ``$\geq$'' nevertheless. This is fine since even if the ``='' holds, an agent at $r$ still will not deviate to $r'$ (just as if ``$>$'' holds) since it is a tie.   
 	
 	Consequently, OP \eqref{eq:dist-devi-SC} can be solved by solving LP \eqref{lp:public-efficient-lp} for every $(\bv{n}, \Lambda)$ such that there exists a pure NE obeying it, and then picks the one with the smallest objective value. One small caveat here is that for any optimal solution $\bv{p}^*$ to LP \eqref{lp:public-efficient-lp}, the equilibrium with $\bv{n}$ as configuration may not be the social-cost minimizing equilibrium. Similar to our argument at the end of Step 2, this will not be an issue for the special LP with the minimum objective. This concludes our proof of the theorem.

 \end{proof}

 	\subsection{Optimal private signaling.}
 	
 	We now consider   optimal private signaling. Our starting point is a celebrated characterization by Bergemann and Morris \citet{ Bergemann16Bayes} that all the Bayes Nash equilibria that can possibly arise at any private signaling scheme forms the set of \emph{Bayes correlated equilibrium} (BCEs). Based on the characterization, we first formulate a  linear program that computes the social-cost-minimizing BCEs. Unfortunately, this linear program has $\Omega(R^N)$ many variables since it  has to  enumerate all possible action profiles $\bv{a}$.
 	% , a  straightforward way to compute a social-cost-minimizing private signaling scheme is to formulate the problem as a linear program with variables $\{\pi(\mathbf{a}|\theta)\}$. However, this linear program cannot be solved efficiently due to an exponential number of possible action profiles $\mathbf{a}$ (aka., signals).  
 	In order to design an efficient algorithm, our idea is to identify compact yet still sufficiently expressive   marginal probabilities   to capture the necessary information needed for each agent's inference under a signaling scheme. This ``interim'' description of the signaling scheme is   often called the ``reduced form'' in auction design \cite{ cai2012algorithmic,alaei2012bayesian,border1991implementation} and recently in information design \cite{ candogan2020reduced}. Specifically,   we  introduce the marginal variables $x_{\theta \bv{n} i r}$ to denote the probability that  agent $i$ is recommended to resource $r$ and the resulting configuration is $\textbf{n}$, conditioned on the state of nature $\theta$. We show that  $\{ x_{\theta \bv{n} i r} \}$ suffices to characterize agent $i$'s inferences on the uncertainty in the game, including the state of nature, other agents' behavior and her own utility. In addition, we develop a novel technique to sample a private signaling scheme from our constructed ``interim'' marginal probabilities. This sampling technique strictly generalizes a classic result in statistics by \citet{ tille1996elimination} and may be of independent interest.   
 	%,  must capture each agent's expectation of their own action, as well as the outcome of other agents' choices in a compact manner. 
 	Our main theorem is stated as follows.
 	
 	\begin{theorem}\label{thm:opt-private}
 		The social-cost-minimizing private signaling scheme can be computed in $\poly(N^R)$ time. 
 	\end{theorem}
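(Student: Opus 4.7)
The plan is to follow the Bayes correlated equilibrium (BCE) route suggested in the overview, but to replace the exponential-sized BCE linear program by one defined on a compact ``reduced form'' whose variables are indexed by $(\text{state}, \text{configuration}, \text{agent}, \text{resource})$. The main work is to show that the projection to this reduced form loses nothing, both in terms of expressiveness and feasibility characterization.

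First, I would write down the natural BCE linear program with joint variables $z_\theta(\bv{a})$, the probability of state $\theta$ together with recommended action profile $\bv{a}$. The constraints are: $\sum_{\bv{a}} z_\theta(\bv{a}) = \mu_\theta$; for every agent $i$, recommendation $r$ and deviation $r' \in A_i$, the obedience inequality $\sum_{\theta,\bv{a}:\,a_i = r} z_\theta(\bv{a})\bigl[c_r^\theta(n_r(\bv{a})) - c_{r'}^\theta(n_{r'}(\bv{a})+1)\bigr] \le 0$; and $z_\theta(\bv{a})\ge 0$. The objective is to minimize the expected social cost $\sum_{\theta,\bv{a}} z_\theta(\bv{a})\,\texttt{SC}^\theta(\bv{a})$. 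This LP has $|\Theta|\cdot \Omega(R^N)$ variables, hence is intractable to write down explicitly.

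Second, I would introduce the reduced form. The key observation is that in an \scg{}, both the social cost and any agent's expected cost depend on $\bv{a}$ only through the triple $(i, a_i, \bv{n}(\bv{a}))$. I therefore define
\[
x_{\theta,\bv{n},i,r} \;=\; \sum_{\bv{a}:\, a_i = r,\, \bv{n}(\bv{a})=\bv{n}} z_\theta(\bv{a}),
\]
giving at most $|\Theta|\cdot O(N^R)\cdot N \cdot R$ variables, polynomial when $R$ is constant. The social cost and every obedience constraint rewrite cleanly in these marginals; for example, agent $i$'s obedience becomes $\sum_{\theta,\bv{n}:\, n_r\ge 1} x_{\theta,\bv{n},i,r}\bigl[c_r^\theta(n_r) - c_{r'}^\theta(n_{r'}+1)\bigr] \le 0$. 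Similarly the prior-matching constraint becomes $\sum_{\bv{n},r} x_{\theta,\bv{n},i,r} = \mu_\theta$ for any fixed $i$. The resulting LP has polynomially many variables and constraints.

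Third, and this is the crux, I must characterize when a collection $\{x_{\theta,\bv{n},i,r}\}$ actually arises from some joint $\{z_\theta(\bv{a})\}$, using only polynomially many constraints---a Border-theorem-style statement for this setting. The natural necessary conditions are: $x_{\theta,\bv{n},i,r}=0$ whenever $r\notin A_i$ or $n_r=0$; for every $(\theta,\bv{n})$ there is a single value $q_{\theta,\bv{n}}$ such that $\sum_r x_{\theta,\bv{n},i,r} = q_{\theta,\bv{n}}$ for every agent $i$; $\sum_i x_{\theta,\bv{n},i,r} = n_r\cdot q_{\theta,\bv{n}}$ for every $(\theta,\bv{n},r)$; $\sum_{\bv{n}} q_{\theta,\bv{n}} = \mu_\theta$; and non-negativity. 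I expect the main obstacle of the proof to be showing that these are also sufficient. Conditional on $(\theta,\bv{n})$, the remaining task reduces to a one-to-many bipartite assignment: realize given marginals $x_{\theta,\bv{n},i,r}/q_{\theta,\bv{n}}$ as a distribution over perfect matchings from agents to resource ``slots'' in which resource $r$ carries exactly $n_r$ slots, while respecting the action-set constraints $A_i$. I would prove existence of such a distribution, and simultaneously obtain an efficient sampling procedure, by a generalization of Tille's (1996) elimination scheme: process agents one at a time, sample the resource assigned to the current agent from its conditional marginal, and rescale the remaining entries so that the reduced marginals for the residual $(N-1)$-agent problem still satisfy all the feasibility conditions. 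Correctness of the update step is the delicate combinatorial computation; once it is verified, the procedure runs in polynomial time per sample.

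Finally, combining the three steps yields the theorem: the reduced-form LP has polynomial size and equal optimum to the exponential BCE LP (by step two for the direction $\ge$ and by the sufficiency proof of step three for the direction $\le$), so it is solvable in $\poly(N^R)$ time; its optimal solution $\{x^\ast\}$ can then be implemented as an actual private signaling scheme on the fly via the sampling procedure, producing an obedient recommendation profile whose distribution is exactly a socially-cost-minimizing BCE.
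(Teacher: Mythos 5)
Your overall route is the same as the paper's: the exponential BCE linear program, the reduced form $x_{\theta,\bv{n},i,r}$, the rewriting of the objective and obedience constraints in these marginals, and the feasibility system (zero outside $A_i$, a common per-$(\theta,\bv{n})$ mass across agents, the $\sum_i x_{\theta,\bv{n},i,r} = n_r\, q_{\theta,\bv{n}}$ counting identity, and prior consistency) all match the paper's Lemma on marginal LPs and its feasibility characterization, up to your harmless choice of joint rather than conditional normalization.

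The one place where your argument is not closed is exactly the step you flag as ``the delicate combinatorial computation'': proving that, conditional on $(\theta,\bv{n})$, the marginals $x_{\theta,\bv{n},i,r}/q_{\theta,\bv{n}}$ are realizable as a distribution over assignments with configuration $\bv{n}$. Your proposed Tille-style elimination (sample agent $1$'s resource, then ``rescale the remaining entries'') requires exhibiting a valid conditional for the residual $(N-1)$-agent problem, and a naive proportional rescaling does not in general preserve non-negativity together with the row-sum-$1$ and column-sum-$n_r$ constraints; establishing that \emph{some} valid update exists is essentially equivalent to the realizability claim itself, so as written the argument is circular at its crux. The paper closes this differently and more cleanly: it interprets the conditional marginals as a fractional flow on the agent--resource bipartite graph with integer supplies ($1$ per agent) and integer demands ($n_r$ per resource), and decomposes that fractional max-flow into a convex combination of integer max-flows by repeatedly extracting an integer flow via Ford--Fulkerson and peeling off the largest feasible multiple; each integer flow is an action profile with configuration $\bv{n}$, which simultaneously proves sufficiency of the constraints and gives a polynomial-time sampler (at most $NR+N+R$ peeling rounds). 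If you want to keep the sequential-elimination flavor, you would still need an argument of this integral-polytope type (or an explicit flow/feasibility computation at each elimination step) to justify the rescaling; otherwise the sufficiency direction, and hence the claim that the reduced-form LP's optimum is attainable by an actual signaling scheme, remains unproven.
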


 	% We therefore introduce the marginal variables $x_{\theta \bv{n} i r}$ to denote the probability that conditioned on the state of nature $\theta$, agent $i$ is recommended to resource $r$ and the resulting configuration is $\textbf{n}$. We'll show that  $x_{\theta \bv{n} i r}$ is necessary and sufficient to characterize agent $i$'s belief about every aspect of uncertainty in the game, including the state of nature, her own action, and other agents' behavior. In addition, we develop a sampling technique to obtain a private signaling scheme from our constructed ``interim'' belief, which strictly generalizes a classic result in statistics \citet{ tille1996elimination} and may be of independent interest.

 	\begin{proof}  [Proof Sketch] \,  
 		We provide a proof sketch here and defer the detailed proof to Appendix \ref{append:const-private-proof}.  \citet{ bergemann2016bayes} show that we only need to optimize the social cost over the set of \emph{Bayes Correlated Equilibrium} (BCEs).  Like the standard correlated equilibria, the signals of a private signaling scheme in a BCE can be interpreted as \textit{obedient action recommendations}. Thus, the signal set $\Sigma_i$ for agent $i$ can W.L.O.G. be $A_i$, and $\Sigma = A$. An action recommendation $a_i$ to player $i$ is \textit{obedient} if following this recommended action is indeed a best response for $i$. Formally, for any $a_i$, we have: 
 		\begin{equation} \label{model:eq:obediant_constraint}
 			\sum_{\theta \in \Theta} \mu_{\theta} \sum_{\bv{a}_{-i}} \pi(a_i, \bv{a}_{-i}|\theta) c^{\theta}_{a_i}(n_{a_i}) \leq \sum_{\theta \in \Theta} \mu_{\theta} \sum_{\bv{a}_{-i}} \pi(a_i, \bv{a}_{-i}|\theta) c^{\theta}_{a'_i}(n_{a'_i}+1), \quad \forall a'_i \in A_i
 		\end{equation}
 		Consequently, the optimal correlated equilibrium can be computed by the exponentially large linear program:  
 		\begin{lp}\label{lp:private-naive}
 			\maxi{ \sum_{\theta \in \Theta} \sum_{\bv{a} \in A} \sum_{r \in R} \pi(\bv{a}|\theta) \mu(\theta) n_r c^\theta_r(n_r) }
 			\st 
 			\qcon{   \sum_{\theta \in \Theta} \sum_{\bv{a}_{-i}} \pi(a_i, \bv{a}_{-i}|\theta) \mu_{\theta}  \bigg[ c^{\theta}_{a_i}(n_{a_i}) -  c^{\theta}_{a'_i}(n_{a'_i}+1) \bigg] \leq 0}{ i \in [N],   a'_i \in  A_i}
 			\qcon{   \sum_{\bv{a} \in A}  \pi(  \bv{a} |\theta) =1}{   \theta \in \Theta}
 			\qcon{ \pi(  \bv{a} |\theta) \geq 0 }{\theta \in \Theta, \bv{a} \in A}
 		\end{lp}

 		To efficiently solve LP \eqref{lp:private-naive},  we define the variable of marginal probability   $x_{\theta \bv{n} i r} $  to denote the probability that conditioned on the state of nature is $\theta$, agent $i$ is assigned to resource $r$ and the configuration of all resources is $\bv{n}$. That is,
 		\begin{equation*}
 			x_{\theta \textbf{n} i r} = 
 			\sum_{\textbf{a} \in A} \pi(\textbf{a}|\theta) \cdot \mathbb{I}(a_i = r, \bv{n}(\bv{a}) = \bv{n} )\end{equation*} 
 		We conveniently refer to $\{ x_{\theta \bv{n} i r} \}$ as the \emph{reduced form} of the private signaling scheme $\{ \pi(\bv{a}|\theta) \}$.  Utilizing the above definition, the key of our proof is to argue that the linear program in Figure \ref{fig:lp-const-private} exactly computes the optimal private signaling scheme. 
 		\begin{figure}[ht]
 			\centering
 			\scalebox{1}{
 				\begin{tikzpicture}[
 					farnodes/.style={
 						node distance=1cm
 					},
 					model1/.style = {
 						rectangle, 
 						text width=14.5cm, 
 						minimum height=1cm,
 					}
 					]
 					\draw[fill=black, opacity=0.1] (16, -7.5) rectangle (0,0);
 					
 					\node[farnodes] (DLE) [model1] at (7.5, -0.3){\textbf{Variables:}};
 					
 					\node[farnodes](va) [model1] at (8.5, -1.2){
 						$x_{\theta \bv{n} i r}, \forall \theta \in \Theta, \bv{n} \in P(A), i \in [N], r \in [R]$ \\
 						/\  \texttt{marginal probability that configuration is $\textbf{n}$ and $i$ receives recommendation $r$, conditioned on state $\theta$.}
 					};
 					
 					\node[farnodes] (DLE) [model1] at (7.5, -2.3){\textbf{Minimizing:}};
 					
 					\node[farnodes](va) [model1] at (8.5, -3){
 						$\sum_{\theta} \mu_{\theta} \sum_{\bv{n} \in P(A)} \sum_{i \in [N]} \sum_{r \in [R]} x_{\theta \bv{n} i r} \cdot c^{\theta}_{r}(n_{r}) $ \\
 						%  \texttt{/\  Expected social cost when agents follows recommendation.}
 					};
 					
 					\node[farnodes] (DLE) [model1] at (7.5, -3.8){\textbf{Constraints:}};
 					
 					\node[farnodes](va) [model1] at (8.0, -5.5){
 						\begin{lp*} \vspace{1mm}
 							\qcon{\sum_{\theta} \mu_{\theta} \sum_{\bv{n} \in P(A)} x_{\theta \bv{n} i r} \left[ c^{\theta}_{r}(n_r) - c^{\theta}_{r'}(n_{r'} + 1)  \right] \leq 0}{ \forall i \in [N], r, r' \in A_i }\vspace{1mm}
 							\qcon{ x_{\theta \bv{n} i r} = 0 }{ \forall i \in [N], r \notin A_i, \bv{n} \in P(A), \theta \in \Theta } \vspace{1mm} 
 							\qcon{ \sum_{\textbf{n} \in P(A)} \sum_{r \in [R]} x_{\theta \textbf{n} i r} = 1  }{  \forall i \in [N], \theta \in \Theta } \vspace{1mm}
 							\qcon{ \sum_{j \in [N]}  x_{\theta \textbf{n} j r} = n_r\sum_{r' \in [R]} x_{\theta \textbf{n} i r'}  }{\forall i \in [N],  r \in [R], \theta \in \Theta, \textbf{n} \in P(A)}  
 							\qcon{ x_{\theta \bv{n} i r} \geq 0 }{ \forall i \in [N], r \in [R], \textbf{n} \in P(A), \theta \in \Theta  }
 						\end{lp*}
 						%     $\sum_{\theta} \mu_{\theta} \sum_{\bv{n} \in P(A)} x_{\theta \bv{n} i r} \left[ c^{\theta}_{r}(n_r) - c^{\theta}_{r'}(n_{r'} + 1)  \right] \leq 0, \, \qquad  \forall i \in [N], r, r' \in A_i$, \vspace{3mm}\\
 						%   % /\ \texttt{Obedience constraint} \\ \vspace{1mm} 
 						
 						%     $x_{\theta \bv{n} i r} = 0, \qquad \qquad \qquad  \forall i \in [N], r \notin A_i, \bv{n} \in P(A), \theta \in \Theta$ \\ \vspace{1mm} 
 						
 						%     $\sum_{\bv{n} \in P(A)} \frac{1}{ n_r }  \sum_{j \in [N]} x_{\theta \bv{n} j r} = 1, \forall r \in [R], \theta \in \Theta$ \\ \vspace{1mm} 
 						
 						%     $\frac{1}{ n_r }  \sum_{j \in [N]}  x_{\theta \textbf{n} j r} =\sum_{r' \in [R]}x_{\theta \textbf{n} i r'} , \forall  i \in [N],  r \in [R], \bv{n} \in P(A), \theta \in \Theta$ \\ \vspace{1mm} 
 						
 						%     \item $x_{\theta \bv{n} i r} \geq 0, \forall i \in [N], r \in [R], \textbf{n} \in P(A), \theta \in \Theta$
 					};
 				\end{tikzpicture}
 			}
 			\caption{Linear Programming Formulation for Social Cost Minimizing Private Signaling Scheme}
 			\label{fig:lp-const-private}
 		\end{figure}
 		
 		% \begin{align*}
 			%     \sum_{\textbf{n} \in P(A)} \sum_{r \in [R]} x_{\theta \textbf{n} i r} = 1, \quad  \forall i \in [N], \theta \in \Theta  \label{const:lp-feasibility-2} \\
 			%             \sum_{j \in [N]}  x_{\theta \textbf{n} j r} = n_r\sum_{r' \in [R]} x_{\theta \textbf{n} i r'} , \quad & \forall i \in [N],  r \in [R], \theta \in \Theta, \textbf{n} \in P(A)  \label{const:lp-feasibility-3}
 			% \end{align*}

 		It can be verified that the objective and the first constraint of the LP in Figure \ref{fig:lp-const-private}  are equivalent to that in LP \eqref{lp:private-naive}. Key to our proof is to argue that the remaining constraints exactly characterize the set of all  reduced forms that can be induced by some private signaling scheme. 
 		
 		We start by illustrating why these constraints are necessary. First,  for any $i$,  we have  $x_{\theta \bv{n} i r} = 0$ for any  $  r \notin A_i, \bv{n} \in P(A), \theta \in \Theta$  since agent $i$ cannot be allocated to any $r\notin A_i$. Second, $\sum_{\textbf{n} \in P(A)} \sum_{r \in [R]} x_{\theta \textbf{n} i r} = 1, \forall i \in [N], \theta \in \Theta$. This is because, $\sum_{r \in [R]} x_{\theta \textbf{n} i r}$ is essentially the probability that the configuration is $\mathbf{n}$ and agent $i$ is sent to a resource given $\theta$. This probability is thus equal to the probability that the configuration is $\mathbf{n}$ given $\theta$. This is because for any configuration, every agent $i$ is always recommended to one of the resources (i.e., $\sum_n n_r = N$). Summing over this probability over all $\bv{n} \in P(A)$ should equal to $1$.
 		
 		Finally, we have constraint $\sum_{j \in [N]}  x_{\theta \textbf{n} j r} = n_r\sum_{r' \in [R]} x_{\theta \textbf{n} i r'} ,  \forall i \in [N],  r \in [R], \theta \in \Theta, \textbf{n} \in P(A)$. In this equation, the LHS, $\sum_{j \in [N]}  x_{\theta \textbf{n} j r}$, is essentially $n_r$ times the probability that the configuration is $\mathbf{n}$ and an agent is assigned to resource $r$ given $\theta$. Intuitively, this is because there are $n_r$ agents assigned to resource $r$ in the configuration $\mathbf{n}$ and therefore, the sum of probabilities over agents will be $n_r$ times the probability an agent is assigned to resource $r$.

 		The crux of our proof is to show that the aforementioned sets of constraints on $\{ x_{\theta \bv{n} i r} \}$   exactly suffice to characterize  all feasible reduced forms. This is argued through a constructive proof. That is, given any $\{ x_{\theta \bv{n} i r} \}$ satisfying the constraints in LP of Figure \ref{fig:lp-const-private}, we design an efficient algorithm that samples a  private signaling scheme inducing  $\{ x_{\theta \bv{n} i r} \}$. Our flow-decomposition-based sampling technique also strictly generalizes a classic sampling procedure by \citet{ tille1996elimination}, which corresponds to the special case with $R = 2$.   
 		
 		Our sampling process has two steps: (1) sample a configuration $\bv{n}$ with probability $\sum_{r \in [R]} x_{\theta \textbf{n} i r}$; (2) sample $\bv{a} \in \{ \bv{a}: \bv{n}(\bv{a}) = \bv{n} \}$.  Step (2) is more involved since we have to efficiently sample from an action profile space with size exponential in $N$.  We highlight the key ideas next.  After a configuration $\bv{n}$ is sampled, we can compute $\PP(i \to r| \bv{n}, \theta) = x_{\theta \bv{n} i r} / \sum_{r' \in [R]} x_{\theta \bv{n} i r'}$ as the marginal probability that agent $i$ is assigned to resource $r$ conditioned on  state $\theta$ and configuration $\bv{n}$. The LP constraints imply  $\sum_{i \in [N]} \PP(i \to r| \bv{n}, \theta) = n_r$. Therefore, we can interpret $ \{  \PP(i \to r| \bv{n}, \theta) \} $ as a \emph{fractional flow} on a bipartite graph with left-side nodes as agents $[N]$ and right-side nodes as resources $[R]$.  The flow amount from agent $i$ to resource $r \in A_i$ is $ f(i \to r) = \PP(i \to r| \bv{n}, \theta)$  (the flow from $i$ to $r \not \in A_i$ is $0$); the total supply of flow going outside from agent $i$ is $1$ and the total demand of flow entering resource $r$ is $n_r$. Notably, since both the supply and demand are integers,   any feasible integer flow corresponds precisely to a deterministic action profile $\bv{a}$ with $\bv{n}(\bv{a}) = \bv{n}$. Thus, by decomposing the fractional flow $ \{  \PP(i \to r| \bv{n}, \theta) \} $ into a distribution over feasible integer flow (e.g., using Ford-Fulkerson), we efficiently generate a private signaling scheme $\pi(\bv{a}|\theta)$ that induces any feasible $\{ x_{\theta \bv{n} i r} \}$. 
 		
 		% This decomposition procedure can be done by running Ford-Fulkerson on a fractional flow graph. In each iteration, we find an integer max flow $f_1$ and re-set the flows of all the edges $e$ in $f_1$ to be $f(e) \leftarrow (f(e) - p_1 f_1(e))$, where $p_1 = \min_{e} \frac{f(e)}{f_1(e)}$ is the fraction of minimum flow in $f_1$. Note that the new flow still satisfies flow conservation constraints. In particular, any edge $(s,i)$ from source $s$ to agent node $i$ must have flow amount $(1 - p_1)$ since the max integer flow $f_1$ must flow $1$ unit flow through all these edges. Similarly, any edge $(r,t)$ from resource node $r$ to sink $t$ must have flow amount $(1- p_1) n_r$ since the max integer flow $f_1$ must flow $n_r$ unit flow through all these edges. We claim that the max-flow in the subgraphs with the new flow amount is still $N$ since dividing the current flow by $(1-p_1)$ gives rise to another fractional $N$-unit flow. We thus can continue this decomposition. This procedure will end in at most $N \times R + N + R$ rounds since in each iteration, at least one edge $e^* = \arg \min_{e} \frac{f(e)}{f_1(e)}$ is updated with flow $0$ and will be deleted from the graph. When it terminates, we obtain a distribution over integer flows, where each integer flow corresponds to precisely an action profile determined by the selected edges $(i, r)$. This gives rise to a feasible $\pi$ that induces $X$.
 		
 		Therefore,  the optimal private signaling scheme can be computed efficiently by solving the LP of Figure \ref{fig:lp-const-private} and then sample the optimal private signaling scheme. The total running time is polynomial in $N^R$, i.e., upper bound number of configurations.   
 		
 	\end{proof} 
 	
 	\begin{remark}
 		Familiar audience may notice that the above flow technique bear some conceptual similarity to the flow characterization of reduced form for auction design \cite{che2013generalized}. This is indeed true because both approaches try to capture the relation between marginal probabilities and the underlying full action or type profiles. However, this conceptual connection does not easily imply that the reduced form characterization for auction design can be directly applied to signaling. This is because the reduced form for signaling differs from the reduced form for auction design due to the different sets of marginal probabilities that each problem has to keep track of. Specifically, signaling schemes need to keep track of not only the marginal probabilities of each receiver's belief about each state type but also her belief about other receivers' actions since a receiver's incentives of action deviation are affected by both (this is why the reduced form  $x_{\theta \bv{n} i r}$ in the  proof of Theorem 2  includes receiver $i$'s belief about the profile of all other receivers' actions, summarized into  $\bv{n}$). However, in auction design, the auctioneer only needs to keep track of each bidder's belief about her own types since a bidder's misreport incentives are only affected by her own types' marginal allocation probabilities. This is a fundamental difference between the two problems, arising from their different game structures. Notably, this difference is significant. For example, an earlier work by \citet{Dughmi2016} shows that an efficient characterization of reduced form for polynomially many independent yet non-identical receivers is unlikely to exist unless the polynomial hierarchy collapses, whereas in the analogous auction setting with independent yet non-identical bidders, the celebrated work of \citet{border1991implementation} leads to an  efficient  characterization of the reduced form for auction design \cite{cai2012algorithmic,alaei2012bayesian}.  
 	\end{remark}
 	
 	\section{Hardness of symmetric \texttt{SCG}s with many resources   }
 	% In this section, we move to the setting with many locations and show that the problem become intractable generally.  

 	% The algorithm applies the ellipsoid algorithm to Problem (\ref{cce:dual-LP}). At each iteration, we require that the vector of dual variables $\alpha$ given to the separation oracle is player-symmetric, which can be obtained by applying the symmetrization technique introduced in the proof of Lemma \ref{cce:symmetrization}. The separation oracle needs to compute $\texttt{sep}(L', N')$ for all $L' \in [L]$  and $N' \in [N]$ through a dynamic programming approach. By comparing the optimal $\mathcal{X}(\theta)$ for every $\theta \in \Theta$, we can derive optimal dual in polynomial time. The cutting planes generated during the Ellipsoid method can be used to compute the optimal ex-ante signaling scheme with polynomial-sized support and the maximum social welfare under an incentive compatible ex-ante signaling scheme equals to the optimal dual value by LP duality.

 	In this section, we show that the restriction to a small number of resources in the previous section is   necessary for efficient algorithms.  Indeed, both public and private signaling exhibits intractability once we move to the general setup with many resources, even when agents have symmetric action spaces. % Note that,  symmetry indeed helps to simplify the action space. Under symmetry, any configuration $\bv{n}(\bv{a})$ contains (essentially) all information that $\bv{a}$ contains since it does not matter which agent picks a resource $r$ so long as there are $n_r$ agents. However, without symmetry, configuration $\bv{n}(\bv{a})$  contain significantly less information than the action profile $\bv{a}$ since  which agent picks which resource matters. Unfortunately, it turns out that the above simplification still does not suffice for us to obtain an efficient algorithm when there are many resources. 
 	
 	%tractability results from previous sections are essentially the best one can hope for. 
 	
 	\subsection{Equilibrium-oblivious intractability of public signaling.}\label{sec:public-many}
 	One challenge of proving hardness for optimal public signaling is  the possible existence of multiple equilibria. Therefore, the hardness under one equilibrium selection rule may not imply any clue about the hardness of another equilibrium choice. To address this issue, we introduce a stronger notion of hardness which captures intractability \emph{regardless of what equilibrium one chooses under any public signal}. 
 	
 	\begin{definition}[Equilibrium-Oblivious  Inapproximability]\label{def:oblivious-hard}
 		We say  it is   \emph{equilibrium-obliviously}  NP-hard to obtain an  $\alpha$-approximation for optimal public signaling   if it is NP-hard to compute a public signaling scheme $\pi$ such that its equilibrium social cost is at most $\alpha$ times of the equilibrium social cost of the optimal public signaling $\pi^*$ even when: 
 		\begin{itemize}
 			\item    the social cost of $\pi$ is   evaluated at the socially-best (i.e., cost-minimizing) Nash equilibrium;  whereas 
 			\item  the social cost of $\pi^*$ is evaluated at the socially-worst (i.e., cost-maximizing) Nash equilibrium.  
 		\end{itemize}
 		% In this case, we say optimal public signaling is equilibrium-obliviously  $\alpha$-inapproximable, or obliviously.
 		% $\alpha$-inapproximable for short.   
 	\end{definition}
 	When it is clear from the context, we simply say oblivious inapproximability or obliviously NP-hard. 
 	Oblivious  $\alpha$-inapproximability means it is intractable to obtain an $\alpha$-approximation  even when we favor the algorithm with the best equilibrium choice but defy the benchmark with the worst equilibrium choice.   This fully rules out any possibility of leveraging equilibrium selection to get a good approximation and thus is a firm hardness evidence irrespective of equilibrium selection.  Note that the  optimal public signaling here is the one that minimizes its social cost w.r.t. its  socially-worst  equilibrium choice.      
 	
 	% \thanh{We lack a transition to this section. Can we say anything about the NP-hardness of multiplicative constant approximate algorithm for the welfare maximizing optimal (ex-ante) private signaling?}
 	
 	Under Definition \ref{def:oblivious-hard}, the approximation ratio $\alpha$ can be smaller than $1$ due to the different equilibrium selection for the algorithm and the benchmark. Nevertheless, our following result shows that it is obliviously NP-hard to obtain a $(1+\frac{1}{5N})$-approximation algorithm and thus rules out  FPTAS for the social-cost-minimizing optimal public signaling in \scg{}s, irrespective of equilibrium selection rules.    
 	
 	\begin{theorem}\label{thm:hardness-public}
 		It is equilibrium-obliviously NP-hard to obtain a $(1+\frac{1}{5N})$-approximation algorithm for the social-cost-minimizing optimal public signaling in \scg{}s, even when agents have symmetric action sets.   
 	\end{theorem}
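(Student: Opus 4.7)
The plan is to prove Theorem \ref{thm:hardness-public} by a polynomial-time reduction from a known NP-hard problem (natural candidates are 3-SAT, or a strongly NP-hard combinatorial problem such as 3-Partition or Max-3-SAT) producing a family of symmetric \scg{} instances on which any $(1+1/(5N))$-approximation for optimal public signaling would decide the source problem. To make the hardness genuinely \emph{equilibrium-oblivious} in the sense of Definition \ref{def:oblivious-hard}, the reduction must be calibrated so that in a YES instance there is a scheme whose worst-case induced NE is cheap, while in a NO instance every scheme's best-case induced NE is strictly more expensive by a multiplicative factor of at least $1+1/(5N)$.

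Concretely, I plan to build a symmetric \scg{} with $N=\poly(n)$ agents, $R=\poly(n)$ resources, and $|\Theta|=\poly(n)$ states under a uniform prior, where each state encodes one atomic object of the source instance (e.g., a clause or a ``block''). The cost functions $c^{\theta}_r(\cdot)$ will be two-level step functions --- a low level for small congestion, jumping to a strictly higher level once a threshold is crossed --- so that (i) the induced game at any single state has a trivially describable socially-optimal NE; (ii) a public signal $\sigma$ corresponds, via the usual prior decomposition from the preliminaries, to selecting a subset of states with weights $\Pr(\theta\mid\sigma)$; and (iii) the expected costs $\sum_\theta \Pr(\theta\mid\sigma)\,c^\theta_r(\cdot)$ encode whether that subset admits a ``jointly consistent'' action profile in the combinatorial sense of the source problem. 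The step design keeps the social cost at a baseline $L=\Theta(N)$ whenever no resource is overloaded and forces an additive penalty of at least $1$ at any configuration that exceeds a threshold.

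Completeness is the easier step: a YES-certificate translates into a public scheme $\pi^\ast$ whose posteriors concentrate on bundles of states marked as jointly consistent, and one verifies that the induced Bayesian game has every NE --- hence the socially-worst NE --- at the baseline cost $L$. Soundness requires showing that in a NO instance, for every scheme and every NE it induces (including the socially-best one), at least one posterior in the decomposition of $\mu$ must be inconsistent, which by design forces a threshold violation at some resource and an extra additive cost of $\Omega(1)$. Normalizing by $L=\Theta(N)$ produces the multiplicative gap $1+\Omega(1/N)$, which can be tuned through constants in the cost functions to be exactly $1+1/(5N)$.

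The main obstacle will be the soundness step in its \emph{equilibrium-oblivious} form, since a single scheme can induce many NEs and the algorithm is credited with the socially-best. The key technical device I would use is to engineer the step-function jumps to be so large (relative to any benefit an agent could gain by unilateral deviation) that every NE --- not just the socially-worst --- of any scheme whose posteriors are inconsistent must sit at an overloaded configuration. Since \scg{}s are potential games, I can analyze NEs through the Rosenthal potential $\Phi(\bv{a})=\sum_{r\in[R]}\sum_{j=1}^{n_r(\bv{a})}c_r(j)$ and verify that no best-response path starting from a non-overloaded configuration can be consistent with the incentive constraints under an inconsistent posterior; this ``equilibrium pinning'' propagates the combinatorial obstruction uniformly to all NEs. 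Once the uniform lower bound is in hand, converting combinatorial certificates into public schemes, bookkeeping the normalization, and verifying the reduction runs in polynomial time are all routine.
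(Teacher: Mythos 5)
Your high-level architecture --- a gap reduction in which a YES instance admits a scheme whose \emph{every} NE is cheap while a NO instance forces \emph{every} NE of \emph{every} scheme to be expensive, with an additive $\Omega(1)$ gap over a $\Theta(N)$ baseline yielding the $1+\Theta(1/N)$ ratio --- is exactly the shape of the paper's argument. But the plan has a quantitative flaw at its core, in the soundness step. You argue that in a NO instance ``at least one posterior in the decomposition of $\mu$ must be inconsistent,'' incurring an extra additive cost of $\Omega(1)$. The expected social cost of a scheme is the convex combination $\sum_\sigma \Pr(\sigma)\, SC(\sigma)$, so a single bad posterior of weight $\Pr(\sigma)=o(1)$ contributes only $o(1)$ to the expectation and does not produce the needed gap. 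The paper avoids this by proving (Lemma~\ref{lemma.2}, via Lemma~\ref{lem:SignalIndep}) that in the NO case \emph{every} posterior $\bv{p}\in\Delta_\Theta$ yields social cost at least $N-\frac{3}{4}-\frac{1}{(1-\epsilon)q^k}$ at \emph{every} NE --- a uniform pointwise lower bound over the whole simplex, not a statement about some posterior in the decomposition. Getting such a uniform bound is the entire difficulty, and your proposal does not contain a mechanism for it.

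The remaining ingredients are also not the right ones. Reducing from exact 3-SAT or 3-Partition gives no robustness; the paper reduces from the Khot et al.\ promise problem (balanced $q$-colorable induced subgraph vs.\ no large independent set) precisely because the equilibrium structure of the constructed \scg{} mirrors it: under any posterior, the set $S$ of occupied non-backup resources at any NE is forced to be an independent set of the source graph, so the NO-case promise caps $|S|$ and packs all $N$ agents onto few resources. Your two-level step costs would not support the resulting counting argument either: the paper's soundness hinges on the good-resource cost $1-1/n^2$, whose quadratic decay feeds a Cauchy--Schwarz bound $N<\sqrt{N(1-\epsilon)q^k}\sqrt{|S|}$ (the authors explicitly note that affine or $a-b/n$ forms fail). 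Finally, the ``equilibrium pinning'' via the Rosenthal potential is stated as an intention rather than an argument; the paper instead analyzes the incentive (deviation) constraints at an arbitrary NE directly, which is what delivers the equilibrium-oblivious bound. As written, the proposal is a template whose critical steps --- the uniform per-posterior lower bound, the choice of source problem, and the cost-function design --- are missing or would fail.
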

 	
 	% \subsubsection*{Proof of Theorem \ref{thm:hardness-public}. }
 	\begin{proof}   [Proof Sketch] \, 
 		One natural idea for proving equilibrium-oblivious hardness would be to construct games that always admit a unique Nash equilibrium, and thus we do not need to worry about the ``obliviousness'' part. Unfortunately, it turns out that for \scg{}s, it is extremely challenging (if not impossible) to construct games with a unique equilibrium under an arbitrary public signal. Our proof thus takes a different route --- we construct a class of games and then derive the upper or lower bounds for the social cost  of \emph{arbitrary} equilibrium by analyzing only the incentives at equilibrium.

 		Specifically, our reduction is from the following NP-hard problem.  \citet{ Khot2012} prove that for any positive integer $k$, any integer $q$ such that $q \geq 2^k+ 1$, and an arbitrarily small constant $\eps > 0$,  given an undirected graph $G$, it is NP-hard to distinguish between the following two cases:
 		\begin{itemize}
 			\item {\bf Case 1}: There is a $q$-colorable induced subgraph of $G$ containing a  $(1-\eps)$ fraction of all vertices, where each color class contains a $ \frac{1-\eps}{q}$ fraction of all vertices.
 			\item{\bf Case 2}: Every independent set in $G$ contains less than a $\frac{1}{q^{k+1}}$ fraction of all vertices.
 		\end{itemize} 
 		Without loss of generality, we assume that  no nodes in $G$ are adjacent to all other nodes, since the maximum independent set should never contain any such node. Otherwise, this is the only node that the independent set can contain. 
 		
 		Given a graph $G$ with vertices $V = \set{1,\ldots,R}$ and edges $E$, we will construct a public persuasion instance so that any desired algorithm for approximating the optimal sender utility can be used to distinguish these two cases.  There are $N =  \frac{1-\epsilon}{q}R$ agents and $R + 1$ resources which correspond to the $R$ nodes of the graph $G$, plus a ``backup'' resource $0$, i.e., $[R] = \{ 0 \} \cup V = \{ 0, 1, 2, \cdots, R \}$. The game is symmetric so all agents have the same action set $[R]$. The set of the states of nature $\Theta = V$ corresponds to vertices of the graph as well. The prior distribution is  uniform over states of nature --- i.e.,   $\theta \in V$ is realized with probability $1/R$. %We construct a ``backup'' resource $0$ with cost function $c_0^{\theta} (n) = 2N$ for any $\theta, n$ and choosing resource $0$. Any agent can choose resource $0$ and guarantee a cost of   $2N$. 
 		The congestion function of resource $r \in V$ at   state of nature $\theta$ is defined  as follows: 
 		\begin{itemize}
 			\item If $r = \theta$, let $c_{r}^{\theta}(n) = 1 - \frac{1}{n^2}, \forall n > 0$ and $c_{r}^{\theta}(0) = 0$. We call $r$ the \emph{good} resource. 
 			\item If $(r,\theta) \in E$ is an edge in $G$, let $c_{r}^{\theta}(n) = 3, \forall n \geq 0$. We call such an $r$ \emph{bad} resource. 
 			\item If $r( \not = \theta)$ is not adjacent to $\theta$ in graph $G$, let $c_{r}^{\theta}(n) = 1, \forall n \geq 0$. We call such an $r$ \emph{normal} resource. 
 			\item $c^{\theta}_0(n) = 1, \forall n \geq 0, \theta \in V$. Resource $0$ ensures that each agent  suffers cost at most $1$ and is   referred to as the \emph{backup} resource.
 		\end{itemize}
 		
 		The principal would like to minimize the total cost.  We show that a desired approximation to cost of   optimal public signaling scheme can help us to distinguish the two cases above. %, which thus implies the NP-hardness of approximating the social-cost-minimizing public signaling scheme.  
 		Specifically, the following Lemma~\ref{lemma.1} shows that in \textbf{Case 1}, optimal public signaling can obtain   social cost at most $N  - (1-\epsilon)$. On the other hand,   Lemma~\ref{lemma.2} shows that in \textbf{Case 2}, the expected social cost from any public signal is at least $ N  - \frac{3}{4} - \frac{1}{(1-\epsilon)q^k}    $. Since the hardness of the independent set instance holds for any parameter $\epsilon, q, k$, we can simply choose them to satisfy $\frac{1}{(1-\epsilon)q^k} \leq \frac{1}{40}$ and $\epsilon \leq \frac{1}{40}$, which will lead to $ N  - \frac{31}{40} $ as the lower bound of the social cost of the optimal public signaling scheme for \textbf{Case 2}.  This implies that any $(1+\frac{1}{5N})$-approximate algorithm to our constructed instance will be able to distinguish {\bf Case 1} and {\bf Case 2} --- specifically, it will output a solution with cost at most $\big[ N - (1-\epsilon)  \big] \times (1+\frac{1}{5N}) < N - \frac{4}{5} + \epsilon \leq  N - \frac{31}{40}$ for {\bf Case 1}  but output a solution with cost at least $N - \frac{31}{40} $ for {\bf Case 2}.   We thus conclude that it is NP-hard to compute a $(1+\frac{1}{5N})$-approximate   optimal  public signaling scheme.

 		\begin{lemma}\label{lemma.1}
 			If $G$ is from {\bf Case 1}, the optimal public signaling scheme achieves expected social cost at most $N - (1-\epsilon) $ at any Nash equilibrium. 
 		\end{lemma}

 		\begin{lemma}\label{lemma.2}
 			If $G$ is from {\bf Case 2}, the optimal public signaling scheme achieves expected social cost at least $ N  - \frac{3}{4} - \frac{1}{(1-\epsilon)q^k}   $ at any Nash equilibrium. 
 		\end{lemma} 
 		
 		The proof of the above two   lemmas are technical and deferred to Appendix \ref{appendix:2lemmas}. The proof of Lemma \ref{lemma.1} constructs   a good public signaling scheme based on the coloring of a graph from {\bf Case 1}, essentially by revealing the color of the realized state (a node). We can show that   any equilibrium under this public signaling scheme has expected social cost at most $N  - (1-\epsilon) $.

 		Much more involved is the proof of Lemma \ref{lemma.2}. We start by exhibiting multiple important properties about   \emph{any}  Nash equilibrium of the constructed game under \emph{any} posterior distribution $ p \in \Delta_{\Theta}$. Let $S$ denote the set of resources in $V$ with at least one agents at equilibrium.  First, we show that $S$ must be an independent set. Second, we show that: (1) either no agents will go to the backup resource $0$; (2) or $p_{\theta}$  cannot be too large for any $\theta \in S$ in the sense that it will be properly upper bounded by its neighbors' total posterior probabilities   $ \sum_{r \in adj(\theta)} p_r$.  We then argue that in both cases, the social costs cannot be too small.  
 		Intuitively, the former case (1)  will have high  social cost  because the set $S$ has a small size $ N \cdot \frac{1}{(1-\epsilon)q^k}$ in \textbf{Case 2} and all the $N$ agents competing among these few many resources in $S$ will lead to large congestion cost. The argument in this part crucially relies on our carefully chosen quadratic congestion function $1-\frac{1}{n^2}$ which quickly increases   as $n$ becomes large. 
 		%\textcolor{red}{(Specifically, other congestion function formats like $a+bn$ or $a-b/n$ do not seem to work for our reduction.)} 
 		The later case (2) will also have high social cost because the probability  that resource $\theta$ is a good resource, i.e.,  $p_{\theta}$,  is not large. The technical arguments to concretize these intuition turns out to be   involved and are deferred to the appendix. % \thanh{I think more details for Lemma 3 need to be added here.}

 		% Next, we will show that if $G$ is from {\bf Case 2}, the optimal public signaling scheme will induce expected social cost at least $2N^2 - 
 		% \frac{3}{4}N$ with properly chosen parameters $\epsilon, q, k$ (in particular, $ \frac{4}{(1-\epsilon)q^k} \leq 3/4$). 
 		
 		% Armed with Lemma \ref{lem:SignalIndep}, we are now ready to prove the cost lower bound for {\bf Case 2}. The following lemma shows that . Since the hardness of the independent set instance holds for any parameter $\epsilon, q, k$, we can simply choose them to satisfy $\frac{4}{(1-\epsilon)q^k } \leq \frac{3}{4}$, which will lead to $ 2N^2 - \frac{3}{4}N$ as the lower bound of the social cost of the optimal public signaling scheme, and thus conclude our proof.  

 	\end{proof}

 	\subsection{Evidence of intractability for optimal private signaling.}
 	% \todo{Move private signaling formulation here and show that its dual problem has an NP-hard separation oracle, suppose we are allowed to show negative parameters}
 	
 	% \todo{Formulate the optimal correlated equilibrium problem here (note, not the private signaling even, just optimal correlated equilibrium problem. )}
 	
 	% \hf{Okay, so computing optimal CE in singleton congestion game is an open problem, as mentioned in Albert's paper titled \emph{A General Framework for Computing Optimal Correlated Equilibria in
 			% Compact Games} "We have shown that for singleton congestion games, the optimal social welfare problem and the optimal CCE problem are tractable while the complexity of the optimal CE problem is unknown. An open problem is to prove a separation of the complexities of these problems for singleton congestion games or for another class."}

 	% (1) Pei-Ping Shen et al. Global optimization for sum of linear ratios problem with coefficients
 	
 	% (2) Takahito Kuno. A branch-and-bound algorithm for maximizing the sum of several linear ratios
 	
 	% (3) 
 	% John Gunnar Carlsson et al. A linear relaxation algorithm for solving the sum-of-linear-ratios problem with lower dimension
 	
 	% (4) Yong-Hong Zhang et al. A New Branch and Reduce Approach for Solving Generalized Linear Fractional Programming
 	
 	% (5) 
 	% Hong-Wei Jiao et al. A practicable branch and bound algorithm for sum of linear ratios problem
 	
 	Lastly, we move to optimal private signaling. It turns out that understanding the complexity of this problem is challenging even when the problem instance degenerates to a single state of nature, in which case the optimal private signaling degenerates to computing the \emph{optimal correlated equilibrium}. We provide a strong evidence of hardness for this problem by proving that obtaining an efficient separation oracle for its \emph{dual} linear program is NP-hard (Conjecture~\ref{con.1}) even for symmetric \scg{}s with linear latency functions. As mentioned previously, this open question is interesting since both the socially-optimal coarse correlated equilibrium (CCE) and the socially optimal Nash equilibrium admit polynomial times, as shown by   \citet{ Castiglioni20} and  \citet{ ieong2005fast} respectively. Therefore,  the hardness of optimal correlated equilibrium will be surprising and intriguing phenomenon.         
 	
 	\begin{conjecture}\label{con.1}
 		Computing the social-cost-minimizing correlated equilibrium in a symmetric \texttt{SCG} is NP-hard. 
 	\end{conjecture}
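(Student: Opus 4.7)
The plan is to prove Conjecture~\ref{con.1} by reducing from a known NP-hard combinatorial problem, constructing a symmetric \scg{} whose social-cost-minimizing CE encodes a solution to the combinatorial instance. My starting point would be to examine whether the NP-hardness reduction of Castiglioni et al.\ for CCE in \scg{}s can be lifted: since every CE is a CCE, the minimum CE cost is at least the minimum CCE cost, so any lower bound on CCE cost in the NO case also applies to CE. The only remaining question is whether the CE obedience constraints \emph{exclude} the good CCE used on the upper-bound side of the YES case. If they do not, the CCE reduction transfers verbatim. If they do, one needs a new upper-bound scheme that produces a genuine CE with low cost in the YES case. My first concrete step is therefore to revisit that reduction and check each interim obedience inequality against its ex-ante counterpart.

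If the CCE reduction does not lift, the backup plan is a direct reduction from Independent Set, in the spirit of Section~\ref{sec:public-many}. Given a graph $G=(V,E)$ on $R$ vertices, I would build a symmetric \scg{} with $R+1$ resources (one per vertex plus a backup) and $N=\Theta(R)$ agents, using sharply nonlinear congestion functions on the vertex-resources (the $1-1/n^2$ gadget from Section~\ref{sec:public-many} is a natural first try) so that spreading agents across non-adjacent vertices is cheap, while any distribution that places joint mass on adjacent vertices triggers a profitable deviation. The design goal is to calibrate the congestion functions so that, when an agent is recommended a vertex $r$, the conditional expectation $\sum_{\bv{a}_{-i}}\pi(\bv{a}_{-i}\mid a_i=r)\,c_{r'}(n_{r'}+1)$ is unattractive for every non-adjacent $r'$ but tempting for every adjacent $r'$. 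The backup resource plays the same role as in Section~\ref{sec:public-many}, giving each agent an opt-out that bounds equilibrium cost from above and forces the CE to spread mass.

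The step I expect to be the main obstacle is the NO-case lower bound: one must argue that \emph{every} distribution satisfying the per-recommendation obedience inequalities has social cost strictly above the YES-case target. Unlike CCE, where one can reason about marginal occupancies, CE forces us to track the conditional distribution $\pi(\bv{a}_{-i}\mid a_i=r)$ for every agent $i$ and recommendation $r$, and to show that, when $G$ has no large independent set, either some such conditional admits a profitable deviation (making $\pi$ infeasible) or $\pi$ must place nontrivial mass on high-congestion configurations. A complementary route I would pursue in parallel is to promote the paper's NP-hardness of the dual separation oracle into primal NP-hardness by a gap-preserving embedding: by enlarging the primal with a few auxiliary resources or dummy agents that realize the hard dual multipliers as explicit primal coefficients, a primal solver could be made to recover the separating witness. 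This route avoids designing fresh CE gadgets but inherits the delicate interim-incentive bookkeeping, so I expect it to share the same technical bottleneck.
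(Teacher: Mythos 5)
This statement is a conjecture that the paper itself does \emph{not} prove; Proposition~\ref{thm:private-hardness} establishes only NP-hardness of a separation oracle for the \emph{dual} LP, and the paper explicitly presents this as evidence rather than a resolution. Your submission is therefore an attempt at a genuinely open problem, and as written it does not close it: each of the three routes you outline contains a step you flag but do not actually supply, so what you have is a research plan rather than a proof.

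To name the gaps concretely. Route (1): the observation that $\mathrm{CE}\subseteq\mathrm{CCE}$ transfers the NO-case lower bound from the CCE reduction is correct, but the entire difficulty is in the YES case, where the low-cost CCE witness in the Castiglioni et al.\ construction need not be interim-obedient; you correctly pose this question but do not answer it, and you would need to produce a fresh low-cost CE. Route (2): you identify the NO-case lower bound as the bottleneck and leave it open. There is also a structural obstacle the paper highlights that your gadget sketch does not engage with: every \emph{pure} CE of an \scg{} is a pure Nash equilibrium and hence poly-time computable via Ieong et al., so any hard instance must force the optimal CE to be genuinely mixed. Earlier CE-hardness reductions (for general congestion games, facility location, etc.) work precisely by making the optimal \emph{pure} NE hard to find and aligned with the optimal CE, which cannot happen here; nothing in your $1-1/n^2$ construction visibly enforces mixedness of the optimum. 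Route (3): hardness of the dual separation oracle does not by itself imply hardness of the primal. By the equivalence of separation and optimization, one would need to rule out that the primal optimum always lives in a region where the relevant dual constraints are tractable (e.g.\ a small-support optimal CE); the ``gap-preserving embedding'' you mention is exactly the missing construction, and without it this route inherits rather than closes the gap between Proposition~\ref{thm:private-hardness} and Conjecture~\ref{con.1}.
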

 	We remark that for an instance of \scg{} to be hard, it must have truly mixed optimal correlated equilibrium, i.e., randomizing over multiple action profiles. This is because any pure-strategy correlated equilibrium is also a pure-strategy Nash equilibrium, which can be computed in polynomial time in \scg{}s \cite{ ieong2005fast}. We found that this is a key challenge in constructing hard instances. Previous  proof techniques for the hardness of   optimal correlated equilibrium in, e.g., general congestion games, facility location games, network design games etc. \cite{ papadimitriou2008computing}, cannot be easily adapted to \scg{}s since they are all   based on constructing instances in which the optimal pure Nash coincides with the optimal correlated equilibrium and is hard to compute.  
 	
 	Next, we present our evidence for Conjecture \ref{con.1}. We start with the   formulation of LP \eqref{lp:private-naive} for  the optimal private signaling problem and restrict to its degenerated case with a single state. LP \eqref{lp:private-naive} with $|\Theta| = 1$ degenerates to an LP for computing the optimal correlated equilibrium. Algebraic calculation shows that solving this degenerated LP reduces to obtaining a separation oracle for its dual program, which turns out to be the following problem: 
 	% 
 	% \begin{equation}\label{lp:ce}
 		% 	\begin{aligned}
 			% 		\min \quad & \sum_{\bv{a} \in A} \pi(\bv{a}) \sum_{r \in [R]} n_r c_r(n_r) \\
 			% 		\text{s.t.} \quad &\sum_{\bv{a}: a_i=r} \pi(\bv{a}) \left( c_{r}(n_{r}) -  c_{r'}(n_{r'}+1) \right) \leq 0, \forall r, r' \in [R], i \in [N] \\
 			% 		& \sum_{\bv{a} \in A} \pi(\bv{a}) = 1 \\
 			% 		& \pi(\bv{a}) \geq 0, \quad \forall \textbf{a} \in A \\
 			% 	\end{aligned}
 		% \end{equation}
 	% Solving \eqref{lp:ce} boils down to obtaining a separation oracle for its dual program, which is formulated as follows:
 	% 
 	\begin{equation} \label{incentive-deviate-SC}
 		\min_{\bv{n} \in P(A)} \quad \sum_{r \in [R]} n_r c_r(n_r) - \sum_{r \in [R]} \sum_{i: a_i=r} \sum_{r' \neq r}  \bigg[ c_{r'}(n_{r'} + 1) - c_r(n_r) \bigg] \cdot z^i_{r, r'} 
 	\end{equation}
 	
 	The following proposition shows that the optimization problem \eqref{incentive-deviate-SC} is NP-hard in general. Our proof proceeds by first ``smooth'' the program to  a continuous optimization problem and then prove its hardness by reducing from the maximum independent set for 3-regular graphs (which is APX-hard). We then convert the hardness of the smoothed continuous problem to the hardness of Problem  \eqref{incentive-deviate-SC}. A formal proof is  designate  to Appendix \ref{proof:private-hardness}. 
 	
 	\begin{proposition} \label{thm:private-hardness}
 		It is NP-hard to solve Optimization Problem \eqref{incentive-deviate-SC} for general $\{z^i_{r,r'} \}_{i, r \not = r'}$, even when all resources have the same linearly increasing congestion functions $c_r(n) = \frac{n}{N}$ and the \scg{} is symmetric. 
 	\end{proposition}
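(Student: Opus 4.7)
The plan is to exploit the linear congestion $c_r(n) = n/N$ to rewrite the objective in Problem~\eqref{incentive-deviate-SC} as a low-degree polynomial in the configuration $\bv{n}$ whose coefficients we can shape through our free choice of the constants $z^i_{r,r'}$, then pass to a continuous relaxation on the simplex in Motzkin--Straus form, and finally reduce from the (APX-hard, hence NP-hard) maximum independent set problem on $3$-regular graphs.

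First, substituting $c_r(n) = n/N$ gives the objective (up to a harmless $1/N$ factor)
\begin{equation*}
    \sum_r n_r^2 \;-\; \sum_r \sum_{i:\, a_i = r} \sum_{r' \neq r} (n_{r'} - n_r + 1)\, z^i_{r,r'}.
\end{equation*}
Given a $3$-regular graph $G = (V, E)$ as the input to maximum independent set, I would take $R = |V|$ resources (one per vertex), choose $N$ polynomially large in $R$, and use agent-independent weights $z^i_{r,r'} = \lambda \cdot \mathbb{I}[(r,r') \in E]$ for a carefully tuned constant $\lambda > 0$. The inner sum over agents at $r$ collapses to the scalar $n_r$, and the $3$-regularity of $G$ lets the ``self-interaction'' sums $\sum_{r'\neq r} z_{r,r'} = 3\lambda$ cancel cleanly, reducing the overall objective to
\begin{equation*}
    (1 + 3\lambda) \sum_r n_r^2 \;-\; 2\lambda \sum_{(r,r') \in E} n_r n_{r'} \;-\; 3\lambda N.
\end{equation*}

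Second, I would smooth the problem by setting $x_r = n_r/N$ with $\bv{x} \in \Delta_R$. After dividing by $N^2$ and dropping constants, the objective becomes the quadratic
\begin{equation*}
    F(\bv{x}) \;=\; (1 + 3\lambda) \sum_r x_r^2 \;-\; 2\lambda \sum_{(r,r') \in E} x_r x_{r'}
\end{equation*}
on the simplex. A Motzkin--Straus-style analysis shows that for $\lambda$ chosen large enough, the minimizer of $F$ over $\Delta_R$ is the uniform distribution on a maximum independent set of $G$, with optimal value $(1+3\lambda)/\alpha(G)$: the first term alone would spread mass evenly, while a sufficiently large $\lambda$ forces the support to avoid edges, after which the convex $\sum x_r^2$ term is minimized by taking the independent set as large as possible. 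Consequently, exactly solving the continuous minimization determines $\alpha(G)$ and is therefore NP-hard.

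Finally, I would transfer the continuous hardness back to the original integer program. Taking $N$ to be a sufficiently large polynomial in $R$ makes the lattice $\{\bv{n}/N : \bv{n} \in P(A)\}$ dense enough in $\Delta_R$ that rounding a continuous optimizer to an integer configuration perturbs $F$ by $O(R/N)$, whereas the gap in $F$ between independence numbers $s$ and $s-1$ is $\Omega(1/R^2)$; choosing $N = \Theta(R^4)$ comfortably separates consecutive values of $\alpha(G)$. The main obstacle, where most of the technical work lies, is verifying the Motzkin--Straus-style claim above under the specific form $(1+3\lambda)\sum x_r^2 - 2\lambda \sum_{(r,r') \in E} x_r x_{r'}$: one must show that no ``mixed'' distribution supported on a non-independent set can beat the uniform distribution on a maximum independent set, which requires a careful KKT analysis on the simplex together with the constraint that $\lambda$ be simultaneously (i) small enough that $1 + 3\lambda > 0$ keeps the $\sum x_r^2$ term convex and (ii) large enough that the edge penalty dominates the self-smoothing. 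Once these two regimes are reconciled, the rounding step reduces to a routine perturbation argument and the reduction is complete.
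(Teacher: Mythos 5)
Your overall strategy coincides with the paper's: instantiate $c_r(n)=n/N$, pick edge-indicator weights $z^i_{r,r'}$ from a $3$-regular graph $G$, collapse the objective of \eqref{incentive-deviate-SC} to a quadratic form in the configuration, pass to the simplex relaxation, invoke a Motzkin--Straus identity to tie the optimum to $\alpha(G)$, and finally argue that for $N=\poly(R)$ the discrete and continuous optima are within $o(1/R^2)$ of each other so that exact discrete solvability would contradict the APX-hardness of maximum independent set on $3$-regular graphs. However, there is a genuine gap: the sign of your weights is wrong, and it breaks the reduction rather than merely complicating it. The objective of \eqref{incentive-deviate-SC} carries a minus sign in front of the $z$-term, so choosing $z^i_{r,r'}=+\lambda$ on edges produces, as your own expansion shows, the form
\begin{equation*}
F(\bv{x}) \;=\; (1+3\lambda)\sum_r x_r^2 \;-\; 2\lambda\sum_{\{r,r'\}\in E} x_r x_{r'} \;=\; \bv{x}^\top\bigl[(1+3\lambda)I-\lambda A_G\bigr]\bv{x},
\end{equation*}
whose cross term \emph{rewards} placing mass on adjacent vertices. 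For a $3$-regular graph the eigenvalues of $A_G$ lie in $[-3,3]$, so every eigenvalue of $(1+3\lambda)I-\lambda A_G$ is at least $1$: the matrix is positive definite, $F$ is strictly convex, and minimizing it over $\Delta_{[R]}$ is a polynomial-time convex QP whose optimum does not encode $\alpha(G)$. Your Motzkin--Straus claim ("a sufficiently large $\lambda$ forces the support to avoid edges") is exactly backwards --- the edge term is a bonus, not a penalty --- and the tension you flag between regimes (i) and (ii) cannot be reconciled because regime (ii) does not exist.

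The fix is to take $z^i_{r,r'}$ \emph{negative} on edges; the double negative then yields a positive cross term. The paper sets $z^i_{r,r'}=-\tfrac14$, which after the $3$-regularity cancellation gives $\tfrac14\sum_r\lambda_r^2+\tfrac14\sum_{(r,r')\in E}\lambda_r\lambda_{r'}$ (equal diagonal and off-diagonal coefficients, i.e., exactly $\tfrac14\,\bv{x}^\top(I+A_G)\bv{x}$), so the standard identity $\min_{\bv{x}\in\Delta_{[R]}}\bv{x}^\top(I+A_G)\bv{x}=1/\alpha(G)$ applies verbatim with no KKT analysis needed; note that $-\tfrac14$ is essentially forced, since a generic negative weight $-\lambda$ gives diagonal coefficient $1-3\lambda$ and off-diagonal coefficient $\lambda$, and these match only at $\lambda=\tfrac14$. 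Your rounding step (lattice spacing $O(R/N)$ versus an $\Omega(1/R^2)$ gap between consecutive values of $1/\alpha$, with $N=\Theta(R^4)$) is sound in spirit and matches the paper's $N=\Omega(R^3)$ perturbation argument, but it cannot rescue the construction until the sign is corrected.
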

 	
	% Acknowledgments here
\section*{Acknowledgments.}
The work has been supported by NSF grant CCF-2132506.
 
\bibliographystyle{plainnat}  
\bibliography{refer}

\appendix 
 		\section{Proof of Theorem~\ref{thm:opt-private}.} \label{append:const-private-proof}
 		
 		% \begin{theorem}
 			%     The social-cost-minimizing private signaling scheme can be computed in polynomial time when $R$ is a constant.
 			% \end{theorem}
 		We start by re-stating the natural   exponential-size linear program (LP \eqref{lp:private-naive} in the main context) for computing the optimal private signaling, with variable $\pi(\textbf{a}|\theta)$ as the  probability of recommending resource $a_i$ to agent $i$ conditioned on state of nature $\theta$. 
 		\begin{subequations}
 			\label{const-lp}
 			\begin{align}
 				\min \quad & \sum_{\theta\in \Theta, \textbf{a} \in A} \mu(\theta) \pi(\textbf{a}|\theta) \sum_{r \in [R]}n_r c^{\theta}_{r}(n_r) \label{const-lp-obj} \\
 				\textup{s.t.} \quad & \sum_{\theta \in \Theta} \sum_{\textbf{a} \in A: a_i=r} \mu(\theta)  \pi(\textbf{a}|\theta) \left[ c^{\theta}_{r}(n_{r}) - c^{\theta}_{r'}(n_{r'}+1) \right] \leq 0, \, \forall i \in [N], r, r' \in A_i \label{const-lp-ic}\\
 				& \sum_{\textbf{a} \in A} \pi(\textbf{a}|\theta)=1, \quad \forall \theta \in \Theta \\
 				& \pi(\textbf{a}|\theta)\geq 0, \quad \forall \theta \in \Theta; \textbf{a} \in A
 			\end{align}
 		\end{subequations}
 		
 		Constraint \eqref{const-lp-ic} means that any agent $i$ recommended to resource $r = a_i$ will \emph{not} prefer choosing any other resource $r'$ in her available set $A_i$. The remaining constraints enforce a feasible private signaling scheme whereas the LP objective is to minimize the expected social cost. Note that, in the above formulation, the configuration $n_r(\bv{a})$ for each resource is determined by the corresponding action profile $\bv{a}$, but we omitted $\bv{a}$ for notation convenience.
 		
 		Unfortunately, \eqref{const-lp} has $\Omega(R^N)$ size and thus cannot be solved efficiently. To design an efficient algorithm, the key idea is identify compact yet still expressive marginal probabilities to capture the signaling scheme. Similar idea has been widely employed in auction design where interim allocation rules are used \cite{ border1991implementation,alaei2012bayesian,cai2012algorithmic}. The main challenge, however, is to characterize the feasible region of the ``interim''  signaling rule of any private scheme. 
 		
 		We observe that though LP \eqref{const-lp} has exponentially many variables, the utility of agent $i$ only depends on three factors: the state of nature $\theta$, the resource $a_i$ she is recommended for, and the number of agents $n_r$ choosing resource $r$. Thus, let $\textbf{n} = (n_1,   ..., n_R) \in P(A)$ denote any feasible configuration. We can rewrite LP \eqref{const-lp} with variables $x_{\theta \textbf{n} i r}$ which denote the probability that conditioned on the state of nature $\theta$, agent $i$ is recommended to resource $r$ and the resulting configuration is $\textbf{n}$. We will thus re-write LP \eqref{const-lp} using this new set of variables, as stated in the following lemma. 
 		
 		\begin{lemma} \label{lem:marginalLP}
 			Given any private signaling scheme $\pi$, let variable 
 			\begin{equation} \label{eq:interim-def}
 				x_{\theta \textbf{n} i r} = 
 				\sum_{\textbf{a} \in A} \pi(\textbf{a}|\theta) \cdot \mathbb{I}(a_i = r, \bv{n}(\bv{a}) = \bv{n} )
 			\end{equation} 
 			denote the marginal probability that conditioned on the state of nature $\theta$, agent $i$ is recommended to choose resource $r$ and the resulting configuration is $\textbf{n}$. Then the objective equation \eqref{const-lp-obj} of LP \eqref{const-lp} is equivalent to 
 			\begin{equation} \label{const-lp-p-obj}
 				\begin{aligned}
 					\min \quad & \sum_{\theta \in \Theta} \mu(\theta) \sum_{\textbf{n} \in P(A)} \sum_{i \in [N]} \sum_{r \in [R]: n_r > 0} x_{\theta \textbf{n} i r} \cdot c^{\theta}_{r}(n_{r}) \\
 				\end{aligned}
 			\end{equation}
 			and the obedience constraint \eqref{const-lp-ic} is equivalent to 
 			\begin{equation} \label{const-lp-p-ic}
 				\sum_{\theta \in \Theta} \mu(\theta) \sum_{\textbf{n} \in P(A)} x_{\theta \textbf{n} i r} \left[ c^{\theta}_{r}(n_r) - c^{\theta}_{r'}(n_{r'} + 1) \right] \leq 0, \quad \forall i \in [N], r, r' \in A_i
 			\end{equation}
 		\end{lemma}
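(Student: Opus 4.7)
The plan is to verify each of the two equivalences by a direct re-indexing of sums, using only the definition of $x_{\theta \bv{n} i r}$ in \eqref{eq:interim-def} and the observation that both the objective and each obedience constraint depend on an action profile $\bv{a}$ only through the value of $(a_i, \bv{n}(\bv{a}))$ seen from a single agent's perspective. No feasibility properties of $\{x_{\theta \bv{n} i r}\}$ are needed yet; those will be handled separately when characterizing realizable reduced forms.

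For the objective \eqref{const-lp-obj}, I would first rewrite the per-profile social cost as a sum over agents: since $n_r = |\{i : a_i = r\}|$, we have $\sum_{r \in [R]} n_r c^\theta_r(n_r) = \sum_{i \in [N]} c^\theta_{a_i}(n_{a_i}(\bv{a}))$. Substituting into \eqref{const-lp-obj}, interchanging the $\sum_{\bv{a}}$ and $\sum_i$, and then, for each fixed $i$, partitioning the sum $\sum_{\bv{a}} \pi(\bv{a}|\theta) c^\theta_{a_i}(n_{a_i})$ according to the value of the pair $(a_i, \bv{n}(\bv{a})) = (r, \bv{n})$, the inner piece becomes $\sum_{\bv{n}, r} c^\theta_r(n_r) \cdot \sum_{\bv{a} : a_i = r,\, \bv{n}(\bv{a}) = \bv{n}} \pi(\bv{a}|\theta)$, which by \eqref{eq:interim-def} equals $\sum_{\bv{n}, r} x_{\theta \bv{n} i r} \cdot c^\theta_r(n_r)$. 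This is precisely \eqref{const-lp-p-obj}. The restriction to $n_r > 0$ is cosmetic: if $n_r = 0$ there is no profile $\bv{a}$ with $a_i = r$ and $\bv{n}(\bv{a}) = \bv{n}$, so $x_{\theta \bv{n} i r} = 0$ in that case.

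For the obedience constraint \eqref{const-lp-ic}, the pattern is even simpler. Fix $i, r, r', \theta$, and partition the sum $\sum_{\bv{a} : a_i = r}$ by the induced configuration $\bv{n}(\bv{a}) = \bv{n}$. The bracket $c^\theta_r(n_r) - c^\theta_{r'}(n_{r'}+1)$ depends on $\bv{a}$ only through $\bv{n}$, so it factors outside the innermost sum; the remaining $\sum_{\bv{a} : a_i = r,\, \bv{n}(\bv{a}) = \bv{n}} \pi(\bv{a}|\theta)$ is exactly $x_{\theta \bv{n} i r}$. This converts \eqref{const-lp-ic} into \eqref{const-lp-p-ic} term by term.

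There is no genuine obstacle here; the lemma is essentially a bookkeeping identity. The only points to be careful about are (i) ensuring that the rewrite $\sum_r n_r c^\theta_r(n_r) = \sum_i c^\theta_{a_i}(n_{a_i})$ is applied before the switch of variables, so that the outer sum is naturally indexed by $i$ and matches the per-agent structure of $x_{\theta \bv{n} i r}$, and (ii) handling the $n_r = 0$ entries, which carry zero mass by construction and therefore do not introduce any spurious $c^\theta_r(0)$ term. With these in place, both \eqref{const-lp-p-obj} and \eqref{const-lp-p-ic} follow directly, completing the proof of the lemma.
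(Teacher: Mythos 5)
Your proof is correct and follows essentially the same direct re-indexing argument as the paper's; the only (cosmetic) difference is that for the objective you first pass to the per-agent form $\sum_{i} c^{\theta}_{a_i}(n_{a_i})$ and partition agent-by-agent, whereas the paper keeps the per-resource form $\sum_{r} n_r c^{\theta}_{r}(n_r)$, partitions the sum over $\bv{a}$ by configuration, and invokes the aggregate identity $\sum_{i} x_{\theta \bv{n} i r} = n_r\,\mathbb{P}(\bv{n}\mid\theta)$. Both routes are valid bookkeeping, and your treatment of the obedience constraints and of the vacuous $n_r=0$ terms matches the paper's.
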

 		
 		\begin{proof}
 			This proof follows from standard probability manipulations. We first derive a useful relation. For any state of nature $\theta$, with slight abuse of notation let  $\mathbb{P}(\bv{n}|\theta) = \mathbb{P}(\bv{a}: \bv{n}(\bv{a}) = \bv{n}|\theta)$ denote the probability that the configuration is $\textbf{n}$ under the signaling scheme $\pi$. We have for any fixed $\theta \in \Theta, \bv{n} \in P(A)$, and $r \in [R]$ such that,
 			\begin{equation}\label{eq:feasibility-proof-step1}
 				\begin{aligned}
 					\sum_{i \in [N]} x_{\theta \bv{n} i r} &= \sum_{i \in [N]} \sum_{\bv{a} \in A} \pi(\bv{a}|\theta) \cdot \mathbb{I}(a_i = r, \bv{n}(\bv{a}) = \bv{n} ) \\
 					&= \sum_{\bv{a} \in A} \pi(\bv{a}|\theta) \cdot\bigg[ \sum_{i \in [N]} \mathbb{I}(a_i = r, \bv{n}(\bv{a}) = \bv{n}) \bigg] \\ 
 					&= \sum_{\textbf{a} \in A} \pi(\bv{a}|\theta) \cdot \bigg[ n_r \cdot \mathbb{I}(\bv{n}(\bv{a}) = \bv{n} ) \bigg]   \\ 
 					&= n_r\sum_{\textbf{a} \in A} \pi(\textbf{a}|\theta) \cdot  \mathbb{I}(\bv{n}(\bv{a}) = \bv{n} )    \\
 					&= n_r\mathbb{P}(\textbf{n}|\theta) \forall r \in [R]
 				\end{aligned}
 			\end{equation}
 			% \thanh{Equation (15) is only true with $n_r > 0$; or you can rewrite $\sum_{i \in [N]} x_{\theta \bv{n} i r} = n_r \mathbb{P}(\textbf{n}|\theta)$ to avoid the zero-dividing}
 			Then, the objective of linear program, \eqref{const-lp-obj}, can be written as
 			% \begin{equation*}
 				% \begin{aligned}
 					%     & \sum_{\theta \in \Theta} \mu(\theta) \sum_{\bv{a} \in A} \pi(\bv{a}|\theta) \sum_{r \in [R]}  n_{r} c^{\theta}_{r}(n_{r}) \\
 					%     =& \sum_{\theta \in \Theta} \mu(\theta) \sum_{\bv{n} \in P(A)} \sum_{\bv{a}: \bv{n}(\bv{a}) = \bv{n}} \pi(\bv{a}|\theta) \sum_{r \in [R]}  n_{r} c^{\theta}_{r}(n_{r}) \\
 					%     =& \sum_{\theta \in \Theta} \mu(\theta) \sum_{\bv{n} \in P(A)} \mathbb{P}(\bv{n}(\bv{a})=\bv{n}|\theta) \sum_{r \in [R]} n_r c^{\theta}_{r}(n_{r}) \\
 					%     =& \sum_{\theta \in \Theta} \mu(\theta) \sum_{\bv{n} \in P(A)} \frac{1}{n_r} \sum_{i \in [N]} x_{\theta \bv{n} i r} \sum_{r \in [R]} n_{r} c^{\theta}_{r}(n_{r}) \\
 					%     =& \sum_{\theta \in \Theta} \mu(\theta) \sum_{\bv{n} \in P(A)} \sum_{i \in [N]} \sum_{r \in [R]} x_{\theta \bv{n} i r}  c^{\theta}_{r}(n_{r})
 					% \end{aligned}
 				% \end{equation*}
 			
 			\begin{equation*}
 				\begin{aligned}
 					& \sum_{\theta \in \Theta} \mu(\theta) \sum_{\bv{a} \in A} \pi(\bv{a}|\theta) \sum_{r \in [R]}  n_{r} c^{\theta}_{r}(n_{r}) \\
 					% & \sum_{\theta \in \Theta} \mu(\theta) \sum_{\bv{a} \in A} \pi(\bv{a}|\theta) \sum_{r \in [R]: n_r > 0}  n_{r} c^{\theta}_{r}(n_{r}) \\
 					=& \sum_{\theta \in \Theta} \mu(\theta) \sum_{\bv{n} \in P(A)} \sum_{\bv{a}: \bv{n}(\bv{a}) = \bv{n}} \pi(\bv{a}|\theta) \sum_{r \in [R]}  n_{r} c^{\theta}_{r}(n_{r}) \\
 					=& \sum_{\theta \in \Theta} \mu(\theta) \sum_{\bv{n} \in P(A)} \mathbb{P}(\bv{n}|\theta) \sum_{r \in [R]} n_r c^{\theta}_{r}(n_{r}) \\
 					=& \sum_{\theta \in \Theta} \mu(\theta) \sum_{\bv{n} \in P(A)}  \sum_{r \in [R]} \left[n_r\mathbb{P}(\bv{n}|\theta)\right] c^{\theta}_{r}(n_{r}) \\
 					=& \sum_{\theta \in \Theta} \mu(\theta) \sum_{\bv{n} \in P(A)}  \sum_{r \in [R]}\sum_{i \in [N]} x_{\theta \bv{n} i r}  c^{\theta}_{r}(n_{r})
 				\end{aligned}
 			\end{equation*}
 			
 			Next, we write the persuasive constraint for each agent $i \in [N]$, each recommended resource $r \in A_i$ and each alternative resource $r' \in A_i$.
 			\begin{equation*}
 				\begin{aligned}
 					& \sum_{\theta \in \Theta} \mu(\theta)\sum_{\bv{a} \in A: a_i=r} \pi(\bv{a}|\theta) \left[ c^{\theta}_{r}(n_{r}) - c^{\theta}_{r'}(n_{r'}+1) \right] \\
 					=& \sum_{\theta \in \Theta} \mu(\theta) \sum_{\bv{n} \in P(A)} \sum_{\bv{n}(\bv{a})=\bv{n}; a_i = r} \pi(\bv{a}|\theta) \left[ c^{\theta}_{r}(n_{r}) - c^{\theta}_{r'}(n_{r'} + 1) \right] \\
 					=& \sum_{\theta \in \Theta} \mu(\theta) \sum_{\textbf{n} \in P(A)} x_{\theta \textbf{n} i r} \left[ c^{\theta}_{r}(n_{r}) - c^{\theta}_{r'}(n_{r'}+1) \right]
 				\end{aligned}
 			\end{equation*}
 			
 			Thus, the original inequality is the same as 
 			$$\sum_{\theta \in \Theta} \mu(\theta) \sum_{\textbf{n} \in P(A)} x_{\theta \textbf{n} i r} \left[ c^{\theta}_{r}(n_{r}) - c^{\theta}_{r'}(n_{r'}+1) \right] \leq 0$$
 			which concludes our proof.
 		\end{proof}
 		
 		Next, we will show a key Lemma to the proof of Theorem \ref{thm:opt-private} that compactly characterizes the feasible marginal probabilities introduced in Equation \eqref{eq:interim-def}. 
 		
 		% \thanh{I rewrote the statement of Lemma 5 a bit to address the issue when $n_r = 0$}
 		\begin{lemma} \label{lem:lp-feasibility}
 			The set of variables $\{x_{\theta \textbf{n} i r} \geq 0: \theta \in \Theta, i \in [N], \textbf{n} \in P(A), r \in [R]\}$ satisfies the following constraints
 			\begin{subequations} \label{const:lp-feasibility}
 				\begin{align} 
 					x_{\theta \bv{n} i r} = 0, \quad & \forall i \in [N], r \notin A_i, \bv{n} \in P(A), \theta \in \Theta \label{const:lp-feasibility-1} \\
 					% \sum_{k=1}^{N} \sum_{\textbf{n} \in P(A): n_r = k} \frac{1}{k} \sum_{i \in [N]} x_{\theta \textbf{n} i r} \leq 1, \quad & \forall r \in [R], \theta \in \Theta  \label{const:lp-feasibility-2} \\
 					% n_r\sum_{r' \in [R]} x_{\theta \textbf{n} i r'} = \sum_{j \in [N]}  x_{\theta \textbf{n} j r}, \quad & \forall i \in [N],  r \in [R], \theta \in \Theta, \textbf{n} \in P(A) \label{const:lp-feasibility-3}
 					\sum_{\textbf{n} \in P(A)} \sum_{r \in [R]} x_{\theta \textbf{n} i r} = 1, \quad & \forall i \in [N], \theta \in \Theta  \label{const:lp-feasibility-2} \\
 					\sum_{j \in [N]}  x_{\theta \textbf{n} j r} = n_r\sum_{r' \in [R]} x_{\theta \textbf{n} i r'} , \quad & \forall i \in [N],  r \in [R], \theta \in \Theta, \textbf{n} \in P(A)  \label{const:lp-feasibility-3}
 				\end{align}
 			\end{subequations}
 			\emph{if and only if} there exists a feasible signaling scheme $\pi$ that induces $X$ as in Equation \eqref{eq:interim-def}. Moreover, given any $X$ satisfies Linear System   \eqref{const:lp-feasibility}, such a $\pi$ that induces $X$ can be found in polynomial time. 
 		\end{lemma}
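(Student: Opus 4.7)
The plan is to prove the two directions of the equivalence separately, leveraging the probabilistic identities already established in Lemma \ref{lem:marginalLP} for necessity and a transportation/flow-decomposition argument for sufficiency (which also yields the promised polynomial-time sampler).

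For the \emph{only-if} direction, fix any signaling scheme $\pi$ and let $X$ be defined via Equation \eqref{eq:interim-def}. Constraint \eqref{const:lp-feasibility-1} is immediate, since $\pi(\bv{a}|\theta)$ is supported on $\bv{a} \in A = A_1 \times \cdots \times A_N$ and hence cannot recommend any $r \notin A_i$ to agent $i$. Constraint \eqref{const:lp-feasibility-2} follows by interchanging sums: $\sum_{\bv{n},r} x_{\theta \bv{n} i r} = \sum_{\bv{a}} \pi(\bv{a}|\theta)\sum_{\bv{n},r} \mathbb{I}(a_i=r, \bv{n}(\bv{a})=\bv{n}) = \sum_{\bv{a}} \pi(\bv{a}|\theta) = 1$. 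Constraint \eqref{const:lp-feasibility-3} is essentially the identity derived in Equation \eqref{eq:feasibility-proof-step1}: both sides evaluate to $n_r \cdot \PP(\bv{n}|\theta)$, the left-hand side by the derivation $\sum_j x_{\theta \bv{n} j r} = n_r \PP(\bv{n}|\theta)$, and the right-hand side because $\sum_{r'} x_{\theta \bv{n} i r'}$ is simply the probability that the configuration equals $\bv{n}$ and agent $i$ is assigned to some resource, which equals $\PP(\bv{n}|\theta)$ independently of $i$.

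For the \emph{if} direction, I would construct $\pi$ from any $X$ satisfying \eqref{const:lp-feasibility} in two stages. First, for each $\theta$, define a distribution over configurations by $q_\theta(\bv{n}) := \sum_{r} x_{\theta \bv{n} i r}$; by \eqref{const:lp-feasibility-3} this is independent of the choice of $i$, and by \eqref{const:lp-feasibility-2} it sums to $1$ over $\bv{n}$. Second, for each $(\theta,\bv{n})$ with $q_\theta(\bv{n})>0$, I need to split mass $q_\theta(\bv{n})$ across deterministic action profiles $\bv{a}$ with $\bv{n}(\bv{a}) = \bv{n}$ so that the marginal probability of recommending $r$ to agent $i$ equals $p^{\theta,\bv{n}}_{i,r} := x_{\theta \bv{n} i r}/q_\theta(\bv{n})$. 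I would interpret $\{p^{\theta,\bv{n}}_{i,r}\}$ as a fractional bipartite flow on the graph with left nodes $[N]$ and right nodes $[R]$, with the only allowable edges being $(i,r)$ with $r \in A_i$ (justified by \eqref{const:lp-feasibility-1}), a unit supply out of each agent, and a demand of $n_r$ into each resource. Constraints \eqref{const:lp-feasibility-2} and \eqref{const:lp-feasibility-3} translate exactly into the flow-conservation identities $\sum_r p^{\theta,\bv{n}}_{i,r} = 1$ and $\sum_i p^{\theta,\bv{n}}_{i,r} = n_r$. The constraint matrix of a bipartite transportation problem is totally unimodular, so with integer supplies and demands the fractional flow lies in a polytope whose vertices are integer flows; hence it decomposes into a convex combination of integer flows. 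Each integer flow is precisely a deterministic action profile $\bv{a}$ with $\bv{n}(\bv{a})=\bv{n}$. Setting $\pi(\bv{a}|\theta)$ to be $q_\theta(\bv{n}(\bv{a}))$ times the coefficient of $\bv{a}$ in the decomposition defines a valid signaling scheme, and a direct computation confirms that the marginals it induces via Equation \eqref{eq:interim-def} recover $X$.

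For the efficiency claim, note that $|P(A)| = O(N^R)$, so only polynomially many subproblems $(\theta,\bv{n})$ need to be processed; for each, an iterative residual-flow rounding (generalizing Tillé's two-resource procedure, as mentioned in the proof sketch) produces an explicit convex combination of at most $O(NR)$ integer flows in polynomial time, and the signaling scheme can be sampled on the fly by first drawing $\theta$, then $\bv{n}$ with probability $q_\theta(\bv{n})$, then an integer flow from the decomposition. The main obstacle is the sufficiency direction, specifically the flow-decomposition step: I must argue that any fractional bipartite flow with integer supplies and demands can be written as a convex combination of integer flows, and show this decomposition can be produced in polynomial time. Both are handled by exploiting total unimodularity together with a constructive iterative-rounding algorithm that at each step identifies an integral feasible assignment inside the residual fractional flow and subtracts it with the appropriate weight.
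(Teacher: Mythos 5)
Your proposal is correct and follows essentially the same route as the paper: the same direct verification of the three constraints for necessity, and for sufficiency the same two-stage sampler (sample a configuration, then decompose the per-configuration marginals as a fractional bipartite flow into a convex combination of integer flows). The only cosmetic difference is that you additionally invoke total unimodularity of the transportation polytope to justify the existence of the integer-flow decomposition, whereas the paper goes straight to the Ford--Fulkerson-based iterative rounding that you also mention; both are fine and yield the same polynomial-time construction.
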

 		% \chenghan{
 			%     \begin{subequations} 
 				%     \begin{align} 
 					%         \sum_{\textbf{n} \in P(A)} \sum_{r \in [R]} x_{\theta \textbf{n} i r} = 1, \quad & \forall i \in [N], \theta \in \Theta \\
 					%         n_r\sum_{r' \in [R]} x_{\theta \textbf{n} i r'} = \sum_{j \in [N]}  x_{\theta \textbf{n} j r}, \quad & \forall i \in [N],  r \in [R], \theta \in \Theta, \textbf{n} \in P(A) 
 					%     \end{align}
 				%     \end{subequations}
 			% }
 		
 		\begin{proof}
 			We start by proving the ``only if'' direction. That is, suppose $\pi$ is a feasible private signaling scheme and $x_{\theta \textbf{n} i r}$ is defined in Equation \eqref{eq:interim-def}, then all equalities in \eqref{const:lp-feasibility} are satisfied. Obviously, equation \eqref{const:lp-feasibility-1} is satisfied since a feasible signaling scheme $\pi$ will not recommend resources $r \notin A_i$ to agent $i$. 
 			
 			In addition, regarding constraint~(\ref{const:lp-feasibility-2}), $\sum_{r \in [R]} x_{\theta \textbf{n} i r}$ is essentially the probability that the configuration is $\mathbf{n}$ and agent $i$ is sent to a resource given $\theta$. This probability is thus equal to the probability that the configuration is $\mathbf{n}$ given $\theta$. This is because for any configuration, every agent $i$ is always recommended to one of the resources (i.e., $\sum_n n_r = N$). Mathematically, we have:
 			\begin{align*}
 				\sum_{r' \in [R]} x_{\theta \bv{n} i r'} &=\sum_{r' \in [R]}  \sum_{\bv{a} \in A} \pi(\bv{a}|\theta) \cdot  \mathbb{I}(a_i = r', \bv{n}(\bv{a}) = \bv{n} )  \\
 				&=  \sum_{\bv{a} \in A} \pi(\bv{a}|\theta) \cdot \bigg[  \sum_{r' \in [R]}  \mathbb{I}(a_i = r',  \bv{n}(\bv{a}) = \bv{n} ) \bigg] \\
 				&= \sum_{\bv{a} \in A} \pi(\bv{a}|\theta) \cdot \mathbb{I}( \bv{n}(\bv{a}) = \bv{n} ) \\
 				&=\mathbb{P}(\bv{n}|\theta)
 			\end{align*}
 			
 			Therefore, we obtain the following equality (i.e., constraint~(\ref{const:lp-feasibility-2})):
 			\begin{align*}
 				& \sum_{\mathbf{n}\in P(A)} \sum_{r \in [R]} x_{\theta \textbf{n} i r} = \sum_{\mathbf{n}\in P(A)} \mathbb{P}(\mathbf{n}\mid \theta) = 1
 			\end{align*}
 			
 			Finally, by leveraging the proof of Lemma~\ref{lem:marginalLP}, we obtain constraint~(\ref{const:lp-feasibility-3}): 
 			\begin{align*}
 				& \sum_{j \in [N]}  x_{\theta \textbf{n} j r} = n_r \mathbb{P}(\mathbf{n}\mid\theta) = n_r\sum_{r' \in [R]} x_{\theta \textbf{n} i r'}
 			\end{align*}
 			
 			% As we have shown in Lemma \ref{lem:marginalLP}, $\frac{1}{n_r} \sum_{i \in [N]} x_{\theta \bv{n} i r} = \mathbb{P}(\bv{n} | \theta)$. This equation can help us to verify the validity of equalities in \eqref{const:lp-feasibility}. For Equation \eqref{const:lp-feasibility-2}, we have
 			% % \begin{equation*}
 				% %     \sum_{\bv{n} \in P(A)} \frac{1}{n_r} \sum_{i \in [N]} x_{\theta \bv{n} i r}  = \sum_{\bv{n} \in P(A)} \mathbb{P}(\bv{n}|\theta) = 1.
 				% % \end{equation*}
 			
 			% \begin{equation*}
 				%     \sum_{k=1}^{N} \sum_{\textbf{n} \in P(A): n_r = k} \frac{1}{k} \sum_{i \in [N]} x_{\theta \textbf{n} i r} = \sum_{\bv{n} \in P(A): n_r > 0} \mathbb{P}(\bv{n}|\theta) \leq 1
 				% \end{equation*}
 			
 			% Equation \eqref{const:lp-feasibility-3} holds due to the following derivations:
 			% \begin{eqnarray*}
 				%     \sum_{r' \in [R]} x_{\theta \bv{n} i r'} &=& \sum_{r' \in [R]}  \sum_{\bv{a} \in A} \pi(\bv{a}|\theta) \cdot  \mathbb{I}(a_i = r', \bv{n}(\bv{a}) = \bv{n} )  \\
 				%     &=&   \sum_{\bv{a} \in A} \pi(\bv{a}|\theta) \cdot \bigg[  \sum_{r' \in [R]}  \mathbb{I}(a_i = r',  \bv{n}(\bv{a}) = \bv{n} ) \bigg] \\
 				%     &=& \sum_{\bv{a} \in A} \pi(\bv{a}|\theta) \cdot \mathbb{I}( \bv{n}(\bv{a}) = \bv{n} ) \\
 				%     &=& \mathbb{P}(\bv{n}|\theta) \\
 				%     \implies n_r \sum_{r' \in [R]} x_{\theta \bv{n} i r'}&=&n_r \mathbb{P}(\bv{n}|\theta)=  \sum_{i \in [N]} x_{\theta \bv{n} i r}
 				% \end{eqnarray*}

 			Next, we prove the ``if'' direction. That is, given any $X = \{ x_{\theta \textbf{n} i r}: \theta \in \Theta, \bv{n} \in P(A), i \in [N], r \in [R] \}$ that is feasible to Equation \eqref{const:lp-feasibility}, we will find in polynomial time a signaling scheme $\pi$ that induces the given $X$. Essentially, we implement a sampling procedure, which is divided into two steps: (1) \textbf{Step 1} --- sample the configuration $\bv{n}$ for any $ \bv{n}$ conditioned on state $\theta$; (2) \textbf{Step 2} --- sample $\bv{a} \in \{\bv{a}: \bv{n}(\bv{a}) = \bv{n} \}$, i.e., the set of all possible action profiles with the same configuration $\bv{n}$. \textbf{Step 1} is easy. We simply sample $\bv{n}$ with probability $\mathbb{P}(\bv{n}|\theta)=\sum_{r' \in [R]} x_{\theta \bv{n} i r'}$. This step is feasible given constraint~(\ref{const:lp-feasibility-2}). 
 			
 			More involved is  \textbf{Step 2}, which we now describe in detail. \emph{Conditioned} on given state $\theta$ and that $\bv{n}$ is sampled, the marginal probability that agent $i$ is recommended with resource $r$ is required to be:
 			\begin{equation}\label{eq:feasibility-proof-step2-1}
 				\PP(i \to r| \bv{n}, \theta) =  \frac{x_{\theta \bv{n} i r}}{ \sum_{r' \in [R]} x_{\theta \bv{n} i r'}},
 			\end{equation}
 			which satisfies $\sum\limits_{r \in [R]} \PP(i \to r| \bv{n}, \theta) = 1$ for any agent $i \in [N]$. Moreover, for any fixed resource $r$, we have: 
 			\begin{equation} \label{eq:feasibility-proof-step2-2}
 				\sum_{i\in [N]} \PP(i \to r| \bv{n}, \theta) = \frac{ \sum_{i \in [N]}  x_{\theta \bv{n} i r}}{ \sum_{r' \in [R]} x_{\theta \bv{n} i r'}}  =  \frac{ n_r \sum_{r' \in [R]} x_{\theta \bv{n} i r'} }{ \sum_{r' \in [R]} x_{\theta \bv{n} i r'}} = n_r,\\ 
 			\end{equation}
 			where the middle equation follows from Equation \eqref{const:lp-feasibility-3}.  Therefore, given $X$, we can view the probabilities $ \PP(i \to r| \bv{n}, \theta)$ defined in Equation \eqref{eq:feasibility-proof-step2-1} as a \emph{fractional flow} in a bipartite graph with left-side nodes as agents and right-side nodes as resources. The flow amount from agent $i$ to resource $r$ is $ f(i \to r) = \PP(i \to r| \bv{n}, \theta)$; the total amount of flow going outside from agent $i$ is $1$ and the total amount of flow entering resource $r$ is $n_r$ as required by Equation \eqref{eq:feasibility-proof-step2-2}.
 			Note that any integer flow satisfying these supply-demand constraints (i.e., $1$ unit of flow is supplied from any agent node $i$ and $n_r$ units of flow are demanded at resource $r$ node) corresponds precisely to a deterministic action profile $\bv{a}$ satisfying precisely $\bv{n}(\bv{a}) = \bv{n}$. If we can decompose the fractional flow into a distribution over such integer flows, which are action profiles in $\{\bv{a}: \bv{n}(\bv{a}) = \bv{n}\}$, this decomposition naturally corresponds to (part of) a private signaling scheme $\pi(\bv{a}; \theta)$ for those $\bv{a} \in \{\bv{a}: \bv{n}(\bv{a}) = \bv{n}\}$ that satisfies Equation \eqref{const:lp-feasibility-3}. Together with \textbf{Step 1} of sampling the configuration $\mathbf{n}$, we obtain private signaling scheme with size $O(N^R N R)$ that matches any given $X$ to system \eqref{const:lp-feasibility}. The full algorithm combining both steps is described in Algorithm \ref{alg:const-sample-algo}.

 			% \begin{alg}
 				\begin{algorithm}\begin{onehalfspace}\begin{algorithmic}[1]
 							\INPUT Any $X = \{ x_{\theta \textbf{n} i r}: i \in [N], r \in [R], \bv{n} \in P(A), \theta \in \Theta \}$ feasible to Equation \eqref{const:lp-feasibility}
 							\OUTPUT Private signaling scheme $\pi(\bv{a}|\theta)$ that satisfies Equation \eqref{eq:interim-def} with the input $X$ 
 							\STATE Initialize $\pi(\bv{a}|\theta) = 0, \forall \bv{a}\in A, \theta \in \Theta$ 
 							\FOR{ $ \theta \in \Theta, \bv{n} \in P(A) $ }
 							\STATE Set $\PP(\bv{n}| \theta) = \sum_{r' \in [R]} x_{\theta \bv{n} i r'}$
 							\FOR{$ i \in [N], r \in [R] $ } 
 							\STATE Compute  $\PP(i \to r| \bv{n}, \theta) =  \frac{x_{\theta \bv{n} i r}}{ \sum_{r' \in [R]} x_{\theta \bv{n} i r'}}$
 							\ENDFOR 
 							\STATE Construct a fully connected bipartite graph $G = ([N] \cup [R], E)$
 							\STATE Construct flow $f$ on $G$ with flow amount $f(i \to r) = \PP(i \to r| \bv{n}, \theta)$ on edge $(i, r) \in E$
 							\STATE \label{alg:step-decompose}Decompose fractional flow $f$ on $G$ into a distribution over integer flows $(f_1, p_1),\cdots (f_K, p_K)$ where $f_k$ is an integer flow and $p_k$ is its probability
 							\STATE For $k=1, \cdots, K$, let $\bv{a}$ be the action profile corresponding to $f_k$ and set $\pi(\bv{a}|\theta) = p_k \PP(\bv{n}| \theta)$
 							\ENDFOR 
 							\caption{\small Sampling a private signaling scheme $\pi(\bv{a}|\theta)$ from any  $X$ feasible to System \eqref{const:lp-feasibility} }
 							\label{alg:const-sample-algo}
 				\end{algorithmic}\end{onehalfspace}\end{algorithm}
 				% \end{alg}
 			
 			The only missing part now is an algorithm that decomposes the fractional flow into a distribution over integer flow satisfying the same demand and supply at all nodes (i.e., Step \ref{alg:step-decompose} in Algorithm \ref{alg:const-sample-algo}). This can be done by converting the flow into a fractional max-flow in a modified graph $G'$ with: (1) a super source $s$ connected to all agent nodes $i$ with capacity $1$ and also flow amount $1$; (2) all resource nodes $r$ connecting to a super sink node $t$ with capacity $n_r$ and flow amount $n_r$ as well. A visual description can be found in Figure \ref{fig:flow-interpret}. It is easy to see that $f$ constructed in Step \ref{alg:step-decompose}  is a max-flow in this graph. This max fractional flow can be decomposed as a distribution over integer max flows as follows. 
 			
 			\begin{figure}[t!]
 				\begin{center}
 					\scalebox{.9}{\begin{tikzpicture}[
    mycircle/.style={
        circle,
        draw=black,
        fill=white,
        fill opacity = 0.3,
        text opacity=1,
        inner sep=0pt,
        minimum size=15pt,
        font=\large},
    cdotscircle/.style={
        fill=white,
        text opacity=1,
        inner sep=0pt,
        minimum size=15pt,
        font=\large},
      smalltextcircle/.style={
        fill=white,
        text opacity=1,
        inner sep=0pt,
        minimum size=1pt,
        font=\small},
      myarrow/.style={-Stealth},
      node distance=0.6cm and 1.2cm
      ]
    
    \node[smalltextcircle] (agent) at (3,3.5) {Agent node $i$};
    \node[smalltextcircle] (agent) at (7,3.9) {Resource};
    \node[smalltextcircle] (agent) at (7,3.5) {node $r$};
    
    \node[smalltextcircle] (agent) at (0.8,-0.15) {$(s, i)$:};
    \node[smalltextcircle] (agent) at (2.1,0) {capacity: $1$};
    \node[smalltextcircle] (agent) at (1.8,-0.3) {flow: $1$};

    \node[smalltextcircle] (agent) at (3.9,-3.4) {$(i, r)$:};
    \node[smalltextcircle] (agent) at (5.2,-3.3) {capacity: $1$};
    \node[smalltextcircle] (agent) at (5.45,-3.65) {flow: $f(i \rightarrow r)$};
    
    \node[smalltextcircle] (agent) at (9,-1.6) {$(r, t)$:};
    \node[smalltextcircle] (agent) at (10.4,-1.5) {capacity: $n_r$};
    \node[smalltextcircle] (agent) at (10.1,-1.8) {flow: $n_r$};
    
    \node[mycircle] (s) at (0,0) {$s$};
    
    \node[mycircle] (i1) at (3,3) {$1$};
    \node[mycircle] (i2) at (3,1) {$2$};
    \node[cdotscircle] (icdots) at (3,-1) {$\cdots$};
    \node[mycircle] (iN) at (3,-3) {$N$};

    \draw[myarrow] (s) -- (i1) node[pos=.5,above] {};
    \draw[myarrow] (s) -- (i2) node[pos=.5,above] {};
    \draw[myarrow] (s) -- (iN) node[pos=.5,above] {};
    
    \node[mycircle] (l0) at (7,3) {$1$};
    \node[mycircle] (l1) at (7,1) {$2$};
    \node[cdotscircle] (lcdots) at (7,-1) {$\cdots$};
    \node[mycircle] (lL) at (7,-3) {$R$};
    
    \draw[myarrow] (i1) -- (l0) node[pos=.5,above] {};
    \draw[myarrow] (i1) -- (l1) node[pos=.5,above] {};
    \draw[myarrow] (i1) -- (lL) node[pos=.5,above] {};
    
    \draw[myarrow] (i2) -- (l0) node[pos=.5,above] {};
    \draw[myarrow] (i2) -- (l1) node[pos=.5,above] {};
    \draw[myarrow] (i2) -- (lL) node[pos=.5,above] {};
    
    \draw[myarrow] (iN) -- (l0) node[pos=.5,above] {};
    \draw[myarrow] (iN) -- (l1) node[pos=.5,above] {};
    \draw[myarrow] (iN) -- (lL) node[pos=.5,above] {};

    \node[mycircle] (t) at (10,0) {$t$};
    \draw[myarrow] (l0) -- (t) node[pos=.5,above] {};
    \draw[myarrow] (l1) -- (t) node[pos=.5,above] {};
    \draw[myarrow] (lL) -- (t) node[pos=.5,above] {};
\end{tikzpicture}}
 					\caption{Interpreting marginal probabilities as flow. }
 					\label{fig:flow-interpret}
 				\end{center}
 			\end{figure}
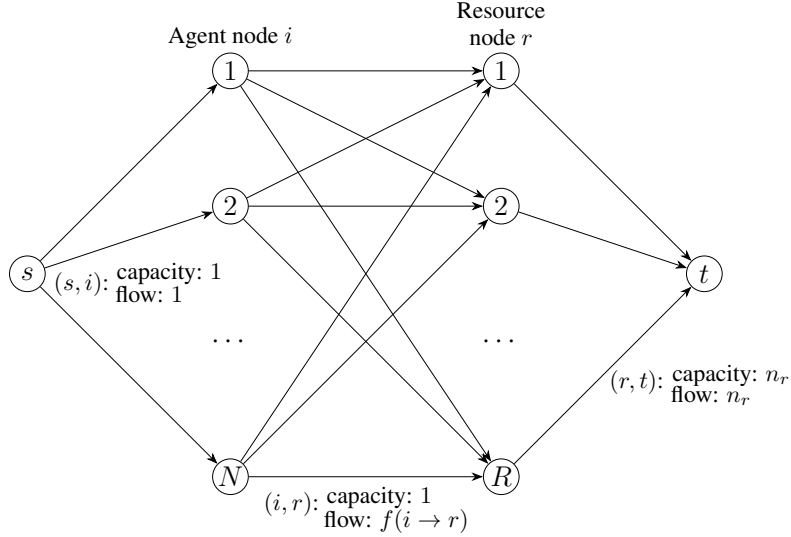
 			
 			We run Ford-Fulkerson algorithm on the edges with strictly positive flow. Since the max flow of this subgraph is $N$, Ford-Fulkerson will be able to find an integer max flow $f_1$. Let $E(f_1)$ denote all the edges that $f_1$ uses and $p_1 = \min_{e} \frac{f(e)}{f_1(e)}$.  For any edge  $e \in E(f_1)$, re-setting their flows to be $f(e) \leftarrow (f(e) - p_1 f_1(e))$, which is non-negative by our choice of $p_1$. Moreover, the edge $e^* \in \arg \min_{e} \frac{f(e)}{f_1(e)}$ will have flow $0$ now. Note that the new flow still satisfies flow conservation constraints. In particular, any edge $(s,i)$ from source $s$ to agent node $i$ must have flow amount $(1- p_1)$ since the max integer flow must flow $1$ unit flow through all these edges.  Similarly, any edge $(r,t)$ from resource node $r$ to sink $t$ must have flow amount $(1- p_1) n_r$ since the max integer flow must flow $n_r$ unit flow through all these edges. We claim that the max-flow in the subgraphs with the new flow amount is still $N$ since dividing the current flow by $(1-p_1)$ gives rise to another fractional $N$-unit flow. We thus can continue this decomposition. This procedure will end in at most $N \times R + N + R$ rounds since each iteration deletes at least one edge. When it terminates, we obtain a distribution over integer flows and each integer flow corresponds to precisely an action profile determined by the selected edges $(i, r)$. 
 		\end{proof}
 		
 		The optimal private signaling scheme can then by computed by solving LP in Figure~\ref{fig:lp-const-private} with variables $x_{\theta \textbf{n} i r}$ and $O(|\Theta| N^{R} R)$ size, and then sample a private signaling scheme  according to Lemma \ref{lem:lp-feasibility}. Therefore, the  social-cost-minimizing private signaling scheme can be computed in polynomial time when $R$ is any constant.

 		% \subsection*{Corollary Following from Theorem \ref{thm:opt-private}}
 		
 		% In addition, the following corollary about polynomial-time solvability of optimal ex-ante private signaling scheme follows easily from Theorem \ref{thm:opt-private}. 
 		
 		% \begin{corollary}\label{cor:ex-ante-const}
 			% The following variant of LP computes the social-cost-minimizing ex-ante private signaling scheme in polynomial time when $R$ is a constant. 
 			% \begin{align}
 				% \label{const-ex-ante-lp}
 				%     \min \quad & \sum_{\theta \in \Theta} \mu(\theta) \sum_{\bv{n} \in P(A)} \sum_{i \in [N]} \sum_{r \in [R]} x_{\theta \bv{n} i r} c^{\theta}_{r}(n_{r}) \\
 				%     \textup{s.t.} \quad & \sum_{\theta \in \Theta, r \in [R]} \sum_{\bv{n} \in P(A)} \mu(\theta)  x_{\theta \bv{n} i r} c^{\theta}_{r}(n_r) \leq \\ 
 				%     & \qquad \sum_{\theta \in \Theta, r \in [R]} \sum_{\bv{n} \in P(A)} \mu(\theta) x_{\theta \bv{n} i r} c^{\theta}_{r'}(n_{r'} + \mathbb{I}(r \not = r')), \, \forall i \in [N], r' \in A_i\\
 				% & \text{Constraints } (\ref{const:lp-feasibility-1}-\ref{const:lp-feasibility-3}) \text{ are satisfied} 
 				% \end{align}
 			% \end{corollary}

 		% \section{Omitted Proofs in Section \ref{sec:public-many}}\label{append:public-many}
 		% \subsection{Proof of Lemma~\ref{lemma.1}} \label{appendix:proof-public-hardness-case1}

 		% \subsection{Proof of Lemma~\ref{lemma.2}} \label{appendix:proof-public-hardness-case2}

 		\section{Proof of technical Lemma \ref{lemma.1} and \ref{lemma.2} }\label{appendix:2lemmas}
 		\subsection{Proof of   Lemma   \ref{lemma.1} }
 		
 		We construct such a public signaling scheme based on the coloring of a graph from {\bf Case 1}. Intuitively, we will try to pool the good resource with bad resources so make it as tractable as any normal resource.  Specifically, since $G$ is from {\bf Case 1}, let us fix a coloring of $(1-\epsilon)R$ vertices/resources with colors $k=1, \ldots, q$, and we use this coloring to construct a public scheme achieving expected social cost at most $N  - (1-\epsilon) $. The scheme uses $q+1$ public signals, and is defined as follows: if $\theta$ has color $k$ then deterministically send the signal $k$, and if $\theta$ is uncolored then deterministically send the signal $0$. Given any public signal $k > 0$, the posterior distribution on states of nature is the uniform distribution over the resources with color $k$, which forms an independent set $S_k$ of size $\frac{1-\epsilon}{q} R$ and equals precisely the number of agents $N$ by our construction. It is easy to verify that any resource $r \in S_k$ is a good resource with probability $1/N$ and is a normal resource with probability $(N-1)/N$. Notably it is never a bad resource because $r$ is not adjacent to other possibly good resources in $S_k$ since $S_k$ is an independent set. On the other hand, any resource $r \not \in S_k$ has no chance to be a good resource, and can be either a normal or a bad resource. Consequently, the unique equilibrium under this public signaling scheme is that each agent picks one resource from $S_k$, which leads to expected social cost $N \times [(1  - \frac{1}{N} ) \times 1 + \frac{1}{N} \times 0 ] = N -1 $ (the `$0$'' is the social cost of any good resource with $n=1$ agent on it). 
 		
 		Finally, the public signal $k = 0$ will be sent with probability $\epsilon$. The cost for this public signal is at most $1 $ since the agents can at least choose resource $0$. Overall, this shows that there exists a public signal that achieves social cost at most $N -  1+ \epsilon $.
 		
 		\subsection{Proof of   Lemma   \ref{lemma.2}. }
 		To prove the social cost lower bound in {\bf Case 2}, we start with a lemma  which characterizes an important property of any public signaling scheme under any Nash equilibrium. 
 		\begin{lemma} \label{lem:SignalIndep}
 			Let   $ p \in \Delta_{\Theta}$ be the posterior distribution of any public signal. Consider \emph{any}  Nash equilibrium under public signal $x$, characterized by a configuration $\bvec{n}$. Let $S = \{ r \subseteq V: n_{r}>0 \} $ denote the set of resources in $V$ with at least one agent under the NE. Then we have 
 			\begin{enumerate}
 				\item Set $S$ is an independent set of $G$.
 				\item Either \quad $\sum_{r \in S} n_{r} = N$; 
 				\\ 
 				Or \qquad \,  for any $r \in S$, $p_{r} \cdot   \frac{1}{(n_{r}+1)^2} - 2 \Big[ \sum_{r' \in adj(r)} p_{r'} \Big] \leq 0$.
 			\end{enumerate}
 		\end{lemma}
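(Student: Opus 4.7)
The essential first step is to compute, for any posterior $p \in \Delta_\Theta$, the expected cost of each resource in closed form. Writing $A_r = \sum_{r' \in adj(r)} p_{r'}$ for $r \in V$, a direct case analysis over the three cost types (good, bad, normal) together with $\sum_\theta p_\theta = 1$ yields $\bar c_r(n) = 1 + 2A_r - p_r/n^2$ for every $r \in V$ and $n \geq 1$, while the backup resource always satisfies $\bar c_0(n) = 1$. All subsequent best-response comparisons reduce to manipulating these two formulas, so I would establish them as a preliminary computation and then specialize to NE conditions.

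For Part 1, I would argue by contradiction: suppose $r_1, r_2 \in S$ are adjacent in $G$ and consider first the generic case $p_{r_1}, p_{r_2} > 0$. An agent at $r_1$ must not strictly prefer deviating to the backup resource $0$, so $\bar c_{r_1}(n_{r_1}) \leq 1$, which rearranges to $p_{r_1}/n_{r_1}^2 \geq 2 A_{r_1} \geq 2 p_{r_2}$, where the last inequality uses $r_2 \in adj(r_1)$. Symmetrically, $p_{r_2}/n_{r_2}^2 \geq 2 p_{r_1}$. Multiplying the two inequalities and cancelling the positive quantity $p_{r_1} p_{r_2}$ gives $1 \geq 4 n_{r_1}^2 n_{r_2}^2 \geq 4$, a contradiction since $n_{r_1}, n_{r_2} \geq 1$. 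In the boundary case $p_{r_1}=0$ while $p_{r_2}>0$, the same deviation to resource $0$ immediately gives $1 + 2 A_{r_1} \leq 1$, contradicting $A_{r_1} \geq p_{r_2} > 0$. The fully degenerate sub-case $p_{r_1} = p_{r_2} = 0$ forces $A_{r_1} = A_{r_2} = 0$ so every relevant expected cost equals $1$; under the standard tie-breaking convention (indifferent agents do not occupy a strictly dominated choice over the backup) such $r_1, r_2$ do not populate $S$ and the independent-set conclusion is preserved.

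For Part 2, I would split on whether any agent occupies resource $0$. If not, then $\sum_{r \in S} n_r = N$ by definition and the first alternative holds trivially. Otherwise, fix any agent at resource $0$ whose realized cost is $1$; by the NE condition she has no incentive to deviate to any $r \in S$, where her cost would become $\bar c_r(n_r + 1) = 1 + 2 A_r - p_r/(n_r+1)^2$. The required inequality $\bar c_r(n_r+1) \geq 1$ is exactly the stated bound $p_r/(n_r+1)^2 - 2 A_r \leq 0$ for every $r \in S$.

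The only genuinely delicate step is the degenerate-posterior corner of Part 1, where weakly indifferent agents could in principle sit on adjacent vertices under an unconventional tie-breaking. My plan is to dispose of this either via the implicit tie-breaking convention above, or by absorbing these degenerate NEs directly into the social-cost bookkeeping of Lemma~\ref{lemma.2}: in such cases every relevant resource has expected cost exactly $1$ and therefore contributes the same additive amount as the backup, which is already accommodated by the target bound $N - \frac{3}{4} - \frac{1}{(1-\epsilon)q^k}$. All other steps amount to combining one best-response inequality with the explicit formula for $\bar c_r$, so the proof is short once the preliminary computation is in place.
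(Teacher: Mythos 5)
Your proposal is correct and follows essentially the same route as the paper: both derive the closed-form expected cost $1+2\sum_{r'\in adj(r)}p_{r'}-p_r/n^2$, obtain Part 1 from the condition that an agent on an occupied resource does not prefer the backup resource (the paper phrases the contradiction as $p_r>p_{\hat r}$ and $p_{\hat r}>p_r$ rather than your multiplication of the two inequalities, and handles your degenerate corner via the same tie-breaking convention, stated in a footnote), and obtain Part 2 from the condition that an agent on the backup resource does not prefer to join any $r\in S$.
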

 		
 		\begin{proof}
 			% \thanh{Even though we have resources and states correspond to the same vertex set, the notations for resources and for states should be consistent. For example, $r$ for resource and $\theta$ for state}
 			To prove the first claim, we show that for any $r \in S$ we must have $p_{r} >  \sum_{r' \in adj(r)} p_{r'} $ where $adj(r) \subset V$ is the set of all adjacent nodes of $r$ in $G$. This implies that the set $S$ must form an independent set since any two adjacent resources $r, \hat{r} \in S$ cannot simultaneously satisfy $p_{r} >  \sum_{r' \in adj(r)} p_{r'}  \geq p_{\hat{r}}$ and $p_{\hat{r}} >  \sum_{r' \in adj(\hat{r})} p_{r'}  \geq p_{r}$. 
 			
 			To show $p_{r} >  \sum_{r' \in adj(r)} p_{r'} $, note that any state $r' \in adj(r)$ will make $r$ a bad resource with probability $p_{r'}$ whereas $r$ only has probability $p_{r}$ to be a good resource. We derive the expected congestion at resource $r$, denoted as $C^{p}_{r}(n)$, as follows
 			
 			\begin{eqnarray} \nonumber 
 				C^{p}_{r}(n) &=& p_{r} c_{r}^{r}(n) +  \sum_{r' \in adj(r)} p_{r'} c_{r}^{r'}(n) + \left( 1 - p_{r} - \sum_{r' \in adj(r)} p_{r'} \right)\times 1   \\ \nonumber 
 				&=&  p_{r} \bigg[ 1 - \frac{1}{n^2} \bigg] + \sum_{r' \in adj(r)} p_{r'} \times 3 + \left( 1 - p_{r} - \sum_{r' \in adj(r)} p_{r'} \right)   \\  \label{eq:demand-cost}
 				&=& \Big[ 1 + 2 \sum_{r' \in adj(r)} p_{r'} \Big]   - p_{r} \cdot \frac{1}{n^2} \\ \nonumber  
 			\end{eqnarray}
 			
 			For $r$ to have at least one agent, its expected congestion cost at $n=1$, i.e., $C^{p}_{r}(1) = \Big[ 1 + 2 \sum_{r' \in adj(r)} p_{r'} \Big]   - p_{r}  $ must be strictly less $1$.  Without loss of generality, suppose agents always prefer the safe resource $0$ if there is a tie with resource $0$, i.e., when there is another resource which has expected cost $1$.  since choosing the backup resource $0$ will provide a  safe  cost $1$ regardless of the state. Thus, one necessary condition for the above inequality to hold is $p_{r} >  2 \sum_{r' \in adj(r)} p_{r'} \geq \sum_{r' \in adj(r)} p_{r'}   $. 
 			
 			To prove the second claim, suppose $\sum_{r \in S} n_{r} = N$ does not hold. In other words, we have $\sum_{r \in S} n_{r} < N$. This  implies that there are $N - \sum_{r \in S} n_{r} >0$ agents choosing the backup resource $0$, with congestion cost $1$. The fact that these agents do not choose any other resource $r \in S$ must because that will lead to a higher congestion cost after increasing the number of agents $n_{r}$ by $1$. In other words, this implies the desired inequality:
 			$$ 1 \leq C_{r}^p (n_{r} + 1) = \Big[ 1 + 2 \sum_{r' \in adj(r)} p_{r'} \Big]   - p_{r} \cdot \frac{1}{(n_{r} + 1)^2},$$
 			which  concludes our proof for Lemma~\ref{lem:SignalIndep}.
 		\end{proof}
 		
 		Given any public signal, let distribution $p \in \Delta_{\Theta}$ be its posterior distribution. Due to symmetry, any Nash equilibrium (NE) is completely specified by a configuration $\bvec{n}$. Let set $S = \{r \in V: n_{r} > 0 \}$ denote the set of resources in $V$ with at least one agent. By Lemma \ref{lem:SignalIndep}, we know that $S$ forms an independent set of $G$. Note that $|S| \leq \frac{R}{q^{k+1}} = N\cdot \frac{1}{(1-\epsilon)q^k}$ in {\bf Case 2}.    We now distinguish the two situations characterized by the bullet point 2 of Lemma \ref{lem:SignalIndep}. In the first situation, we show the social cost is lower bounded by $N - \frac{1}{(1-\epsilon)q^k}  - \frac{3}{4}  $ and in the second situation, the social cost is lower bounded by $N  - \frac{3}{4} $. These together yields that the social cost at any equilibrium is at least $N - \frac{1}{(1-\epsilon)q^k}  - \frac{3}{4}  $ as desired. 
 		
 		\subsubsection*{Situation 1: $\sum_{r \in S} n_{r} = N$.}
 		
 		Intuitively, the reason that this case has high social cost is because the set $S$ has a small size $ N \cdot \frac{1}{(1-\epsilon)q^k}$ and all the $N$ agents competing among these few many resources in $S$ will lead to large congestion cost. % Specifically, we will show that the expected congestion cost $C^p_r(n_r)$ for any agent choosing resource $r$ satisfies $C^p_r(n_r) \geq 1 - \frac{6}{(1-\epsilon)Nq^k}$ for any resource $r$. This implies that the agents' total cost, i.e., the social cost, is at least  $ N -  \frac{6}{(1-\epsilon)q^k} $ as desired. 
 		Formally, we first argue that it cannot be the case that $C^p_r(n_r) < 1 - \frac{1}{(1-\epsilon)Nq^k}$ for all $r \in S$ where $S = \{ r \in V: n_{r}>0 \} $ denote the set of resources in $V$ with at least one agent under any NE. This is because $1 - \frac{1}{(1-\epsilon)Nq^k} > C^p_r(n_r) \geq  1 - p_r \cdot \frac{1}{n_r^2}$ implies $n_r < \sqrt{p_r N (1-\epsilon)q^k}$. Therefore, we have 
 		\begin{align*}
 			N &= \sum_{r \in S} n_r  & \\
 			&< \sum_{r \in S} \sqrt{p_r N (1-\epsilon) q^k} & \mbox{By the above derivation}\\
 			&= \sqrt{ N (1-\epsilon)q^k} \bigg[ \sum_{r \in S} \sqrt{p_r} \bigg] & \mbox{ } \\
 			&\leq \sqrt{ N (1-\epsilon)q^k} \sqrt{ \bigg[ \sum_{r \in S}  p_r \bigg] \cdot  |S| }  & \mbox{ By Cauchy–Schwarz inequality } \\
 			&\leq \sqrt{ N (1-\epsilon)q^k} \sqrt{ |S| } & \mbox{ Since $\sum_{r \in S} p_{r} \leq 1 $  } \\ 
 			&\leq \sqrt{ N (1-\epsilon)q^k} \sqrt{   N\cdot \frac{1}{(1-\epsilon)q^k}  }  = N, & \mbox{  Since we are in {\bf Case 2} } 
 		\end{align*}
 		
 		which is a contradiction. This implies that there exists some $\bar{r} \in S$  such that $C^p_{\bar{r}}(n_{\bar{r}}) \geq 1 - \frac{1}{(1-\epsilon)Nq^k}$.

 		Next, we argue that the existence of such a high cost resource $\bar{r}$ implies a condition on $p_{r}, n_r$ for any other resource $r \in S$. Specifically,  the fact that any agent at resource $\bar{r}$, suffering cost at least $1 - \frac{1}{(1-\epsilon)Nq^k}$, do not choose to move to any other resource $r \in S$ must because that will lead to a higher congestion cost after increasing the number of agents $n_{r}$ by $1$ at resource $r$. In other words, we must have for any $r \in S$ and $r \not = \bar{r}$: 
 		\begin{equation}\label{eq:xn-relation1}
 			1 - \frac{1}{(1-\epsilon)Nq^k} \leq C_{r}^p (n_{r} + 1) = \Big[ 1 + 2 \sum_{r' \in adj(r)} p_{r'} \Big]   - p_{r} \cdot \frac{1}{(n_{r} + 1)^2}.
 		\end{equation}   
 		Additionally, the above inequality holds trivially for $r = \bar{r}$ as well since $1 - \frac{1}{(1-\epsilon)Nq^k} \leq C_r^p(n_r) \leq C_r^p(n_r+1)$. Consequently, we have 
 		\begin{align*}
 			& \quad \texttt{SC}(\bvec{n}; x) \\
 			&= \sum_{r \in S} n_{r} \times \Bigg[ \Big[ 1 + 2 \sum_{r' \in adj(r)} p_{r'} \Big]   - p_{r} \cdot \frac{1}{(n _{r})^2} \Bigg]   & \mbox{by definition of \texttt{SC}} \\ 
 			&\geq \sum_{r \in S} n_{r} \times \Bigg[ 1 - \frac{1}{(1-\epsilon)Nq^k} + p_{r} \cdot \frac{1}{(n_{r} + 1)^2}  - p_{r} \cdot \frac{1}{(n _{r})^2} \Bigg]    & \mbox{by Inequality \eqref{eq:xn-relation1} } \\ 
 			&= N  - \frac{1}{(1-\epsilon)q^k} -  \sum_{r \in S} n_{r}    p_{r}  \times \big[ \frac{ 1}{(n_{r})^2 } - \frac{ 1}{(n_{r}+1)^2 }  \big]   & \mbox{since $\sum_{r \in S} n_{r} = N$} \\  
 			&= N - \frac{1}{(1-\epsilon)q^k} - \sum_{r \in S} p_{r}     \times \frac{2n_{r}+1}{n_{r}(n_{r}+1)^2} & \\  
 			&\geq N  - \frac{1}{(1-\epsilon)q^k} -  \sum_{r \in S} p_{r}     \times \frac{3}{4}  &  \mbox{plugging in $n_{r} = 1$ }\\ 
 			&\geq N - \frac{1}{(1-\epsilon)q^k}  - \frac{3}{4}    &  \mbox{ $\sum_{r\in S} p_{r} \leq 1$ }\\ 
 		\end{align*}

 		\subsubsection*{Situation 2: for any $r \in S$, $p_{r} \cdot \frac{1}{(n_{r}+1)^2} -   2 \Big[ \sum_{r' \in adj(r)} p_{r'}  \Big] \leq 0$.}
 		
 		According to the proof of Lemma~\ref{lem:SignalIndep},  the congestion at any resource $r$ is $r \in S$ is $C^p_r(n_r) = 1 +  2 \sum_{r' \in adj(r)} p_{r'}  -   p_{r} \cdot \frac{1}{(n _{r})^2}$. Therefore, the total social cost is computed as follows:
 		\begin{align*}
 			& \quad  \texttt{SC}(\bvec{n}; x)   \\
 			&= \sum_{r \in S} n_{r} \times \Bigg[ \Big[ 1 + 2 \sum_{r' \in adj(r)} p_{r'}  \Big]   - p_{r} \cdot \frac{1}{(n _{r})^2} \Bigg] + (N - \sum_{r \in S} n_{r}) & \mbox{by definition of \texttt{SC}} \\ 
 			&\geq \sum_{r \in S} n_{r} \times \Bigg[ 1 + p_{r} \cdot \frac{1}{(n_{r} + 1)^2} - p_{r} \cdot \frac{1}{(n _{r})^2} \Bigg]  + (N - \sum_{r \in S} n_{r}) & \mbox{by assumption of Situation 2} \\ 
 			&= N  - \sum_{r \in S} n_{r}   p_{r}  \times \big[ \frac{ 1}{(n_{r})^2 } - \frac{ 1}{(n_{r}+1)^2 }  \big]   & \mbox{} \\  
 			&= N - \sum_{r \in S} p_{r}     \times \frac{2n_{r}+1}{n_{r}(n_{r}+1)^2} & \\  
 			&\geq N  - \sum_{r \in S} p_{r}     \times \frac{3}{4}  &  \mbox{plugging in $n_{r} = 1$ }\\ 
 			&\geq N  - \frac{3}{4}    &  \mbox{ $\sum_{r\in S} p_{r} \leq 1$ }\\ 
 		\end{align*}

 		\section{Proof of Proposition~\ref{thm:private-hardness}.} \label{proof:private-hardness}

 		Our reduction is from the APX-hardness of the maximum independent set for 3-regular graphs. Specifically, there is an absolute constant $a \in (0,1)$ such that it is NP-hard to obtain a $a$-approximation for the maximum independent set problem for 3-regular graphs \cite{ chlebik2003approximation}.  
 		
 		Given any 3-regular graph $G = (V,E)$, we will interpret $V$ as the set of resources in our setup and construct the following instance of Optimization Problem  \eqref{incentive-deviate-SC}. For any agent $i \in [N]$, let $z^i_{r,r'} = - \frac{1}{4}$ if $(r,r') \in E$ is an edge in $E$ and let $z^i_{r,r'} = 0$ if $(r,r')$ is not an edge. We choose affine congestion function in the form $c_r(n_r) = n_r/N$ for all resources $r$. Instantiated in this particular instance, OP \eqref{incentive-deviate-SC} becomes the following problem 
 		\begin{eqnarray} 
 			&& \min_{\bv{n} \in P(A)} \quad \sum_{r \in [R]}  \frac{ (n_r)^2}{N} - \sum_{(r, r') \in E} n_r \bigg[ \frac{n_r}{N} - \frac{ (n_{r'}+1)}{N}  \bigg]\cdot \frac{1}{4}  \nonumber \\ 
 			&\Longleftrightarrow& \min_{\bv{n} \in P(N,R
 				)} \quad \sum_{r \in [R]}  \bigg( 1 - \sum_{(r, r')\in E}\frac{1}{4} \bigg) \bigg(\frac{n_r}{N}\bigg)^2 + \sum_{(r,r') \in E} \frac{n_r}{N}  \bigg(\frac{n_{r'}+1}{N}\bigg) \cdot \frac{1}{4} \nonumber \\ \label{incentive-deviate-SC-instance}
 			&\Longleftrightarrow&  \min_{\bv{n} \in P(A)} \quad \frac{1}{4} \sum_{r \in [R]} \bigg( \frac{n_r}{N} \bigg)^2 + \frac{1}{4} \sum_{(r,r')\in E} \frac{n_r}{N} \cdot \frac{n_{r'}}{N} + \frac{3}{4N}   
 		\end{eqnarray}
 		where (i) first equivalence is derived by basic calculation and then dividing both sides of the optimization problem by $N$; (ii) the second equivalence relation utilizes the fact the graph is 3-regular, thus each $r$ has exactly 3 neighbor nodes.   
 		
 		To prove that optimization problem  \eqref{incentive-deviate-SC-instance} is hard, our strategy is to first convert it to a continuous optimization problem and prove that it is NP-hard to obtain a constant approximation for the continuous optimization. We then show that the optimal objective of this continuous optimization problem is provably close to the optimal objective of OP \eqref{incentive-deviate-SC-instance}. This thus implies the hardness of solving OP \eqref{incentive-deviate-SC-instance} exactly.  
 		
 		Specifically, define new variable $\lambda_r = \frac{n_r}{N}$.  Since  $\sum_{r \in [R]} n_r = N$, we thus have vector $\lambda \in \Delta_{[R]}$. Note that by definition the entries of $\lambda$ are multiples of $1/N$.   Re-writing optimization problem  \eqref{incentive-deviate-SC-instance} with variable $\lambda$ and relax $\lambda$ to be a continuous variable in $\Delta_{[R]}$, we obtain the following continuous version of optimization problem  \eqref{incentive-deviate-SC-instance}: 
 		\begin{equation}
 			\label{incentive-deviate-SC-instance-cont}
 			\frac{3}{4N} + \min_{\lambda \in \Delta_{[R]}} \frac{1}{4} \bigg( \sum_{r \in [R]} (\lambda_r)^2 + \sum_{(r,r')\in E}  \lambda_r \lambda_{r'} \bigg)
 		\end{equation} 
 		It is well known that $\min_{\lambda \in \Delta_{[R]}} \big( \sum_{r \in [R]}     (\lambda_r)^2 + \sum_{(r,r')\in E} \lambda_r \lambda_{r'} \big) $ equals precisely $1/\texttt{MaxInd}$ where $\texttt{MaxInd}$ is the size of the maximum independent set of graph $G$ (see, e.g., \cite{ de2008complexity}). Since $G$ is a 3-regular graph in our construction and it is NP-hard to obtain a constant (multiplicative) approximation for $G$. This translate to the NP-hardness of multiplicative $\frac{1}{a}$ constant approximation for $\min_{\lambda \in \Delta_{[R]}} \,    \big( \sum_{r \in [R]} (\lambda_r)^2  +  \sum_{(r,r')\in E}  \lambda_r \lambda_{r'} \big) $.  This minimization problem has objective value at least $1/R$  since we know that its optimal objective is $1/\texttt{MaxInd} \geq 1/R$. NP-hardness of  multiplicative $\frac{1}{a}$  approximation implies the NP-hardness of additive $(\frac{1}{a} - 1)\cdot\frac{1}{R}$  additive approximation for Optimization Problem \eqref{incentive-deviate-SC-instance-cont}, which is the continuous version of the discrete optimization problem \eqref{incentive-deviate-SC-instance}. 
 		
 		Finally, it is not difficult to see that when $N = \Omega(R^3)$, the difference between the optimal objective of OP \eqref{incentive-deviate-SC-instance} and its continuous version \eqref{incentive-deviate-SC-instance-cont}  is at most $O(1/R^2)$, therefore any $poly(R,N)$-time algorithm exactly solving  OP \eqref{incentive-deviate-SC-instance} for $N = \Omega(R^3)$ will imply a $poly(R)$-time $O(1/R^2)$ additive approximation for OP \eqref{incentive-deviate-SC-instance-cont}, which however is NP-hard. This overall proves that OP \eqref{incentive-deviate-SC} is NP-hard when the game is symmetric and all resources have the same linearly increasing congestion functions $c_r(n) = \frac{n}{N}$.

 		%  We claim that there exists a constant $c>0$ such that it is NP-hard to obtain a $c$ Optimization Problem \eqref{incentive-deviate-SC-instance-cont} is N   Construct a 3-regular graph $G$ with all the locations $[L]$, such that for every location $l$, there are exactly 3 other locations $l_1, l_2, l_3 \neq l$ that has an edge with $l$. Let $z_{l, l'} = -\frac{1}{4}$ if $(l, l')$ is an edge in the 3-regular graph $G$ and $z_{l, l'} = 0$ otherwise. Let the coefficient $\alpha_l = -1$ for all $l$. The objective of the optimization problem becomes:
 		%     \begin{equation}
 			%         \sum_{l \in [L]} \left( -\frac{1}{4} \lambda_l^2 - \sum_{(l, l') \in G} \frac{1}{4} \lambda_l \lambda_{l'} \right) + \frac{3}{4N}
 			%     \end{equation}
 		%     We can ignore the constant term $\frac{3}{4N}$. The objective function is equivalent to $(-\frac{1}{4}) \lambda^T \cdot (I+A) \cdot \lambda$, where $\lambda \in \Delta_L$, $I$ is the $L \times L$ identity matrix, and $A$ is the adjacency matrix of the 3-regular graph $G$. This problem is NP-hard due to a inapproximability result for the maximum stable set problem by in \citet{ de2008complexity,haastad1999clique}. This concludes our proof.

 		\section{Example of the non-convexity of $SC^*(\C(\bv{p}))$.} \label{append:social-cost-non-convex-example}
 		
 		We construct a simple game with $N=3$ agents, $R=2$ resource, and two states of nature $\Theta = \{ \theta_1, \theta_2 \}$. Both resources are available for each agent, i.e., $A_1 = A_2 = \{1, 2\}$. Let two posterior distributions $\bv{p}_1, \bv{p}_2$ be deterministic at $\theta_1$, $\theta_2$ respectively. The cost functions at $\bv{p}_1$, $\bv{p}_2$ is described in the Table \ref{table:social-cost-non-convex-cost-functions}.
 		
 		\begin{table}[ht]
 			\centering 
 			\begin{tabular}{c|c c c|c c c}
 				\hline
 				\multicolumn{1}{c}{} & 
 				\multicolumn{3}{c}{Resource 1} & \multicolumn{3}{c}{Resource 2}\\
 				Posterior $\bv{p}$ & $c_1(1)$ & $c_1(2)$ & $c_1(3)$ & $c_2(1)$ & $c_2(2)$ & $c_2(3)$ \\ [0.5ex] 
 				\hline
 				$\bv{p}_1$ & 1 & 1 & 10 & 9 & 10 & 10 \\ 
 				$\bv{p}_2$ & 1 & 1 & 4 & 5 & 5 & 10 \\
 				\hline
 			\end{tabular}
 			\caption{Cost functions for 2 resources and posteriors $\bv{p}_1$, $\bv{p}_2$.} 
 			\label{table:social-cost-non-convex-cost-functions}
 		\end{table}
 		
 		It is easy to verify that for $\bv{p}_1, SC^*(\C(\bv{p}_1)) = 11$ when $2$ agents choose resource $1$ and $1$ agent chooses resource $2$; for $\bv{p}_2, SC^*(\C(\bv{p}_2)) = 12$ when all $3$ agents choose resource $1$.
 		
 		We then consider two convex combinations of $\bv{p}_1$ and $\bv{p}_2$. Specifically, let $\bv{p}_3 = 0.6 \bv{p}_1 + 0.4 \bv{p}_2$. We have $SC^*(\C(\bv{p}_3)) = 9.4$ when $2$ agents choose resource $1$ and $1$ agent chooses resource $2$. Since $SC^*(\C(0.6 \bv{p}_1 + 0.4 \bv{p}_2)) < 0.6 SC^*(\C(\bv{p}_1)) + 0.4 SC^*(\C(\bv{p}_2))$, $SC^*(\C(\bv{p}))$ is not a concave function.
 		
 		Let $\bv{p}_4 = 0.4\bv{p}_1 + 0.6\bv{p}_2$. We have $SC^*(\C(\bv{p}_4)) = 19.2$ when all $3$ agents choose resource $1$. Since $SC^*(\C(0.4 \bv{p}_1 + 0.6 \bv{p}_2)) > 0.4 SC^*(\C(\bv{p}_1)) + 0.6 SC^*(\C(\bv{p}_2))$, $SC^*(\C(\bv{p}))$ is not a convex function. Therefore, this game shows that $SC^*(\C(\bv{p}))$ is neither a convex function nor a concave function.

 \end{document}